\documentclass[12pt]{article}

\usepackage{geometry}
\geometry{a4paper}
\usepackage{enumerate}
\usepackage{amsmath, amsfonts, amsthm, amssymb}
\usepackage{url}
\usepackage{multirow}
\usepackage{tikz}

\newtheorem{theorem}{Theorem}
\newtheorem{lemma}[theorem]{Lemma}
\newtheorem{corollary}[theorem]{Corollary}
\newtheorem{property}[theorem]{Property}
\newtheorem{definition}{Definition}

\newcounter{counter:SR}
\newcounter{counter:4-SR}
\newcounter{counter:SAT-4SR}
\newcounter{counter:BIS-3SR}
\newcounter{counter:1SR}
\newcounter{counter:SAT-3SRe}
\newcounter{counter:BIS-2SRe}
\newcounter{counter:save}

\newenvironment{remark}{\vspace{1ex}\noindent{\bf Remark.}\hspace{0.5em}}{\hfill$\Box$\vspace{1ex}}
\newenvironment{example}{\vspace{1ex}\noindent{\bf Example.}\hspace{0.5em}}{\hfill$\Box$\vspace{1ex}}

\newcommand\hide[1]{}

\def\IS{\textsc{\#IS}}

\def\calB{\mathcal{B}}
\def\calT{\mathcal{T}}
\def\sr{\textsc{SR}}
\def\sm{\textsc{SM}}

\def\SM{\textsc{\#SM}}
\def\SR{\textsc{\#SR}}
\def\kSRa{\textsc{\#$k$-attribute SR}}
\def\fourSRa{\textsc{\#$4$-attribute SR}}
\def\threeSRa{\textsc{\#$3$-attribute SR}}
\def\kSRe{\textsc{\#$k$-Euclidean SR}}
\def\threeSRe{\textsc{\#$3$-Euclidean SR}}
\def\twoSRe{\textsc{\#$2$-Euclidean SR}}
\def\oneSRa{\textsc{\#$1$-attribute SR}}
\def\BIS{\textsc{\#BIS}}
\def\SAT{\textsc{\#Sat}}
\def\poly{\mathop{\mathrm{poly}}}
\def\RHPi{\mathrm{\#RH}\Pi_1}
\def\Sat{\textsc{Sat}}

\def\APeq{\equiv_\mathrm{AP}}
\def\APred{\leq_\mathrm{AP}}

\def\From{From}

\makeatletter
\def\prob#1#2#3{\goodbreak\begin{list}{}{\labelwidth\z@ \itemindent-\leftmargin
                        \itemsep\z@  \topsep6\p@\@plus6\p@
                        \let\makelabel\descriptionlabel}
                \item[\it Name.]#1
                \item[\it Instance.]#2
                \item[\it Output.]#3
                \end{list}}
\makeatother

\title{The Complexity of Approximately Counting Stable Roommate Assignments}
\author{Prasad Chebolu\footnote{Department of Computer Science, University of Liverpool,
      Ashton Bldg, Ashton St,
      Liverpool L69 3BX, United Kingdom.}\
     \footnote{Research supported in part by EPSRC Grant EP/F020651/1.}
\and Leslie Ann Goldberg\footnotemark[1]\ \footnote{Research supported in part by EPSRC Grant EP/I011528/1.}
\and Russell Martin\footnotemark[1]\ \footnotemark[2]}\date{\today}

\begin{document}
\maketitle

\begin{abstract}
We investigate the complexity of {\em approximately counting} stable
roommate assignments in two models: (i) the $k$-attribute model, in
which the preference lists are determined by dot products of
``preference vectors'' with ``attribute vectors'' and (ii) the
$k$-Euclidean model, in which the preference lists are determined by
the closeness of the ``positions" of the people to their ``preferred
positions". Exactly counting the number of assignments is
$\#P$-complete, since Irving and Leather demonstrated
$\#P$-completeness for the special case of the stable marriage
problem~\cite{IL}.  We show that counting the number of stable
roommate assignments in the $k$-attribute model ($\kSRa$, $k \geq
4$) and the $3$-Euclidean model($\kSRe$, $k \geq 3$) is
interreducible, in an approximation-preserving sense, with counting
independent sets (of all sizes) ($\#IS$) in a graph, or counting the
number of satisfying assignments  of a Boolean formula ($\#SAT$).
This means that there can be no FPRAS for any of these problems
unless NP=RP. As a consequence, we infer that there is no FPRAS for
counting stable roommate assignments ($\#SR$) unless NP=RP.
Utilizing previous results by the authors~\cite{CGM}, we give an
approximation-preserving reduction from counting the number of
independent sets in a bipartite graph ($\#BIS$) to counting the
number of stable roommate assignments both in the $3$-attribute
model and in the $2$-Euclidean model. \#BIS is complete with respect
to approximation-preserving reductions in the logically-defined
complexity class $\RHPi$. Hence, our result shows that an FPRAS for
counting stable roommate assignments in the $3$-attribute model
would give an FPRAS for all of $\RHPi$. We also show that the
1-attribute stable roommate problem always has either one or two
stable roommate assignments, so the number of assignments can be
determined exactly in polynomial time.
\end{abstract}

\section{Introduction}

The {\em stable roommate problem} is a generalization of the classical
{\em stable marriage problem}.  An instance of the roommate problem
consists of $2n$ people, where each person has a strict preference
ordering (a total ordering) of the other $2n-1$ people.  A {\em matching}
is a pairing of the people into $n$ pairs, and a matching is said to be
{\em stable} if there does not exist a pair of two people $P_1$ and $P_2$,
each of whom prefers the other over their current partners in
the matching.  Such a pair is referred to as a {\em blocking pair}
as $P_1$ and $P_2$ would drop their current partners and pair up together.

 The stable marriage problem is  the special case in which the $2n$ people consist of $n$ men
and $n$ women, and each man ranks all of the women higher
than any other man and, similarly, each women ranks all of the men
higher than any other woman.  (This is not the usual definition
of the stable marriage problem, but is equivalent to the standard one.)
In 1962, Gale and Shapley proved that every
stable marriage instance has a stable matching, and described an
$O(n^2)$ algorithm for finding one~\cite{GS}.
The stable marriage problem, including many variants, has
seen much study as algorithms for finding
stable matchings are used for assigning residents to hospitals in
Scotland, Canada, and the USA~\cite{CRMS,NRMP,SFAS}.

More than  twenty years after Gale and Shapley's seminal paper,
  Irving provided an efficient algorithm for the
stable roommate problem~\cite{Irving}.  In contrast to
the marriage problem, an instance of the roommate problem need
not have any stable matching, as this example (which may
be found in both~\cite{Irving} and~\cite{Knuth}) demonstrates:
\begin{center}
\begin{tabular}{c|lll}
Person  & \multicolumn{3}{l}{Preference list} \\
\hline
A  & B & C & D \\
B  & C & A & D \\
C  & A & B & D \\
D  & \multicolumn{3}{l}{arbitrary} \\
\end{tabular}
\end{center}

Irving's polynomial-time algorithm
determines whether a stable roommate assignment exists for the given instance,
and constructs a stable assignment if one exists.

In what follows, we will abbreviate ``stable roommate problem''
and ``stable marriage problem'' as \sr\ and \sm, respectively.

Since the problem of determining whether a stable assignment exists
is solved  in both the stable roommate setting and the stable matching setting,
it is natural to ask whether it is feasible to determine {\em how many} stable assignments
there are for a given instance.   We denote
these counting versions of \sr\ and \sm\ as $\SR$ and $\SM$, respectively.

Irving and Leather~\cite{IL} demonstrated that
\SM\
(counting the {\em number}
of stable matchings for a given \sm\ instance) is $\#P$-complete.  This
completeness result relies on the connection between stable
marriages and {\em downsets} in a related partial order and on the fact that
counting downsets in a partial order is  $\#P$-complete~\cite{PB}. As \SM\ is a restricted version of \SR, we can
obviously  conclude that \SR\ is  $\#P$-complete.

 Since exactly counting stable matchings is difficult (under
standard complexity-theoretic assumptions),  it would be good to have algorithms for {\em approximately} counting.  In
particular, we would like to find a {\em fully-polynomial randomized
approximation scheme} (an {\em FPRAS}) for this task, i.e.\ an algorithm
that provides an arbitrarily close approximation in time polynomial
in the input size and the desired error.

Randomized approximation schemes have proven successful (sometimes under
certain restrictions or conditions) for problems
such as counting the number of (perfect) matchings in bipartite graphs,
the number of proper $k$-colorings of graphs, and the number of linear
extensions of a partial order.  Many of these approximation schemes
rely on the Markov Chain Monte Carlo (MCMC) method.  This technique also
exploits a
relationship between counting and sampling described by Jerrum, Valiant,
and Vazirani~\cite{JVV}, namely, for {\em self-reducible} combinatorial
structures, the existence of an FPRAS is computationally equivalent to
a polynomial-time algorithm for approximate sampling from the set of
structures.

Bhatnagar, Greenberg, and Randall~\cite{Randall} considered the problem
of sampling a random stable matching for the  stable marriage problem
 using the MCMC method.
They examined a natural Markov chain that uses ``male-improving''
and ``female-improving'' {\em rotations} (see Section~\ref{sect:poset}
for similar definitions in the context of the roommate problem)
to define a random walk on the state space of stable matchings for a given
instance.  In the most general setting, matching instances can be exhibited for
which the {\em mixing time} of the random walk has an exponential lower
bound, meaning that it will take an exponential amount of time to
(approximately) sample a random stable matching.  This exponential mixing time is
due to the existence of a ``bad cut'' in the state space.
Bhatnagar, et al.\ considered several restricted settings for matching
instances and were still able to show instances for which such a bad
cut exists in the state space, implying an exponential mixing time
in these restricted settings.

One of the special cases   that Bhatnagar et al.\ considered was
the so-called {\em $k$-attribute model}.
In this setting, each man
and woman has two $k$-dimensional vectors associated with them, a
``preference'' vector and a ``position'' (or ``attribute'') vector.
A man $M_i$ has a preference vector denoted by $\hat{M}_i$,
and a position vector denoted by $\bar{M}_i$.
Similarly, a woman $w_h$ has a preference vector $\hat{w}_j$ and a position vector $\bar{w}_j$.
  Then, $M_i$ prefers $w_j$ over $w_k$ (i.e.\
$w_j$ appears higher on his preference list than $w_k$) if and
only if $\hat{M}_i \cdot \bar{w}_j > \hat{M}_i\cdot \bar{w}_k$, where
$\hat{M}_i \cdot \bar{w}_j$ denotes the usual $k$-dimensional dot
product of vectors.\footnote{
Recall that the dot product of two vectors ${\mathbf a}=(a_1,\ldots,a_k)$
and ${\mathbf b}=(b_1,\ldots,b_k)$ is the sum $\sum_{i=1}^k a_i b_i$
which is equal to $||{\mathbf a}|| \> ||{\mathbf b}|| \cos \theta$,
where $||{\mathbf x}||$ denotes the length of a vector~$x$ and
$\theta$ is the angle between~$\mathbf a$ and~$\mathbf b$.}
Since we assume that each man has a total order
over the women (and vice-versa),  a valid instance has the property that
$\hat{M}_i \cdot \bar{w}_j \not= \hat{M}_i \cdot \bar{w}_k$ whenever
$j \not= k$ (and analogously for the women's preference vectors/men's
position vectors).
In this paper we consider the $k$-attribute model for the roommate problem.

We also study the stable roommate problem in the {\it $k$-Euclidean
} model which we had introduced in a previous paper~\cite{CGM}. In
the $k$-Euclidean model, each person has two  associated points in $k$-dimensional
Euclidian space
  --- a ``preference'' point and a ``position" point. The
preference point of a person $X$ is denoted by $\hat{X}$, and the
position point is denoted by $\bar{X}$. Then, $X$ prefers $y$ over
$z$ (i.e.\ $y$ appears higher on his/her preference list than $z$)
if and only if $|\hat{X} - \bar{y}| < |\hat{X}- \bar{z}|$, where
$|\hat{X} - \bar{y}|$ denotes the usual $k$-dimensional Euclidean
distance. Once again, a valid instance has the property that
 $|\hat{X} - \bar{y}| \neq
|\hat{X}- \bar{z}|$ whenever $j \not= k$.

 We examined the stable marriage problem in our previous
paper~\cite{CGM}, providing complexity-theoretic evidence for the   difficulty of
approximately counting stable matchings in  both   the $k$-attribute model and   the $k$-Euclidean
model.  We constructed {\em approximation-preserving reductions}
between (i) counting the number of stable matchings in the
$k$-attribute marriage problem ($k \geq 3$) and
counting independent sets in a bipartite graph ($\#BIS$), and (ii)
counting the number of stable marriages in the $k$-Euclidean
marriage problem ($k \geq 2$) and  $\#BIS$.

Informally
speaking, if there is an approximation-preserving reduction (AP-reduction) from one problem to another, then an FPRAS for
the second problem implies the existence of an FPRAS for the first. We write $f \leq_{AP} g$ to
mean that $f$ has an AP-reduction to $g$.  Similarly, we write $f
\equiv_{AP} g$ to mean that $f \leq_{AP} g$ {\em and} $g \leq_{AP}
f$, or that $f$ and $g$ are AP-interreducible.
Approximation-preserving reductions play a role in approximate counting
analogous to the role that polynomial many-one reductions play in the theory of NP-completeness
and polynomial Turing reductions play in the theory of $\#P$-completeness.

The complexity class $\RHPi$ of counting problems was
introduced by Dyer, Goldberg, Greenhill and Jerrum~\cite{DGGJ} as a
means to classify  approximate counting problems.
The problems in $\RHPi$ are those that can be expressed in terms of
counting the number of models of a logical formula from a
certain syntactically restricted class which is also known as
``restricted Krom SNP''~\cite{Dalmau05}. The complexity class
$\RHPi$ has a completeness class (with respect to AP-reductions)
which includes  many natural
counting problems including:  $\#BIS$, counting downsets in a partial order, counting configurations in
the Widom-Rowlinson model (all~\cite{DGGJ}) and computing the partition
function of the ferromagnetic Ising model with a mixed external
field~\cite{ising}. Either all of these problems have an
FPRAS, or none do.  No FPRAS is currently known for any of them,
despite much effort having been expended on finding one.
More background and details about AP-reducibility are given in
Section~\ref{sect:AP}.

Before we continue, we define  the problems that are of interest to
us in this paper.

\prob{$\#SR$}{A stable roommate instance with $2n$
people.}{The number of stable roommate assignments.}

\prob{$\#k$-attribute SR}
{A stable roommate instance with $2n$ people,
i.e.\ preference lists are determined
using dot products between $k$-dimensional preference and
position vectors as  described above.}
{The number of stable roommate assignments.}

\prob{$\#k$-Euclidean SR.}
{A stable roommate instance with $2n$
people.  In this setting, each person has a ``preference point'' and
``position point''.  Preference lists are determined using Euclidean
distances between preference points and position points as  described
above.}{The number of stable roommate assignments.
}

We also define two other counting problems which are relevant to our results.

 \prob{$\IS$.} {A graph $G$.} {The number of independent sets (of all sizes) of $G$.}
 \prob{$\#\Sat$.}{A boolean formula in conjunctive normal form.}{The number of satisfying assignments.}

\subsection{Our results}\label{sect:results}

Zuckerman \cite{zuckerman} has shown that $\SAT$ cannot have an FPRAS unless NP=RP.
The same is true of any problem in \#P to which $\SAT$ is AP-reducible~\cite{DGGJ}. For example,
it is true   of $\IS$, which is AP-interreducible with $\SAT$ \cite{DGGJ}. We have the following results.

\setcounter{counter:4-SR}{\value{theorem}}
\begin{theorem}\label{thm:4-SR}
$\IS \APeq \kSRa$ for $k\geq 4$.
\end{theorem}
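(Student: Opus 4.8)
The plan is to establish the two directions of $\IS \APeq \kSRa$ separately. The easy direction is $\kSRa \APred \IS$: since a $k$-attribute SR instance is just a (succinctly presented) ordinary SR instance on $2n$ people, and since by the Irving--Leather machinery stable assignments are in bijection with downsets of a derived partial order (equivalently, $\SR \APred \SAT \APeq \IS$ holds for general SR), it suffices to observe that the $k$-attribute restriction does not make counting easier; I would simply invoke the general reduction $\SR \APred \IS$ together with the trivial observation that any $k$-attribute instance \emph{is} an SR instance whose preference lists can be computed in polynomial time from the vectors. So all the work is in the forward direction $\IS \APred \fourSRa$.

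For $\IS \APred \fourSRa$, the strategy is a gadget reduction. Given a graph $G$, I would build a $4$-attribute SR instance whose stable assignments correspond (at least in an approximation-preserving, ideally parsimonious, way) to the independent sets of $G$. The natural approach is to create, for each vertex $v$, a small constant-size group of people --- a ``vertex gadget'' --- that has exactly two locally stable configurations, one read as ``$v\in S$'' and one as ``$v\notin S$''; and for each edge $(u,v)$, to add people or to tune preferences so that the configuration with both $u$ and $v$ ``in'' is destabilized by a blocking pair, while all other combinations survive. One must also prevent spurious cross-gadget blocking pairs, which is the usual headache in SR reductions because everyone ranks everyone. The key leverage that makes four attributes enough is that a dot-product preference order is very expressive: with one ``preference vector'' direction one can make a person's ranking be essentially a linear functional on the others' position vectors, so by placing position vectors appropriately in $\mathbb{R}^4$ one can realize the comparatively rich family of preference lists the gadgets need while still keeping global consistency. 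I would first design the gadgets abstractly as preference lists, verify the stable-assignment/independent-set correspondence combinatorially (using rotations/rotation-poset reasoning as in Section~\ref{sect:poset}), and only then exhibit explicit attribute and preference vectors in $\mathbb{R}^4$ realizing exactly those lists, checking the no-ties genericity condition.

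The main obstacle I expect is twofold. First, designing a vertex gadget that (a) is realizable with only four attributes, (b) has the clean two-state behaviour, and (c) can be ``wired'' to neighbouring gadgets through the edge constraint without introducing unwanted blocking pairs between people in non-adjacent or even unrelated gadgets --- the all-pairs nature of roommate preferences means every pair must be checked, so the gadget's ``external'' preferences (how a gadget-person ranks people in other gadgets) must be controlled globally, presumably by reserving one or two of the four coordinates as a ``gadget-ID'' block that dominates the dot product and forces each person to prefer everyone in their own gadget over everyone outside, with the remaining coordinates doing the intra-gadget work. Second, proving the realizability: showing that the prescribed finite list of linear inequalities on the vectors in $\mathbb{R}^4$ is simultaneously satisfiable (and can be made strict/generic) is a geometric feasibility argument that is fiddly though not deep. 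If a fully parsimonious reduction proves hard to realize in exactly four attributes, the fallback is to make the reduction merely approximation-preserving --- e.g.\ allow each gadget a bounded number of ``dummy'' stable configurations and quotient them out, or use the standard trick of amplifying so that the independent-set count dominates the count of parasitic assignments --- which is all that Theorem~\ref{thm:4-SR} actually requires.
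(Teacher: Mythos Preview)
Your easy direction matches the paper's: $\kSRa\in\#P$ and $\IS$ is $\#P$-complete under AP-reductions, so $\kSRa\APred\IS$ is immediate; padding with dummy coordinates handles $k>4$.

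For the hard direction your route diverges from the paper's. You propose a direct blocking-pair gadget construction (per-vertex two-state gadgets, per-edge destabilisers, then a realisability argument in $\mathbb{R}^4$). The paper instead works one structural level up, via Gusfield's theorem (Theorem~\ref{thm:indsets}): stable assignments of any SR instance $I$ biject with \emph{maximal} independent sets of a graph $G(I)$ built from the rotation poset. Given the input graph $\Gamma$, the paper forms an auxiliary graph $\Gamma'$ (two copies of $V(\Gamma)$ joined by a perfect matching, with the edges of $\Gamma$ placed on one copy) whose maximal independent sets biject with independent sets of $\Gamma$, and then engineers a $4$-attribute instance $I$ with $G(I)\cong\Gamma'$. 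Concretely, each vertex of $\Gamma$ yields one dual pair of rotations, and the edges of $\Gamma$ are encoded not as blocking pairs but as ``explicitly precedes'' relations among rotations (conditions (G1)--(G4)). The payoff is that the global blocking-pair analysis you anticipate is replaced by a short checklist: verify that Phase~1 leaves specific short lists, and that the only rotations are the $2n$ intended ones with the intended precedences (Lemmas~\ref{lem:bottom-rotations}--\ref{lem:exp-prec}). The four coordinates then split cleanly into two controlling the $Q$-people's preferences over $P$-positions and two controlling the $P$-people's preferences over $Q$- and $P$-positions, all on unit circles.

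Your own plan has a tension you do not resolve: reserving coordinates so that everyone prefers their own gadget over outsiders would suppress \emph{all} cross-gadget blocking pairs, but your edge mechanism needs exactly such pairs to destabilise the ``both endpoints in'' configuration. Any fix (e.g.\ letting edge-gadget people straddle two vertex gadgets) breaks the isolation and puts you back into a global blocking-pair analysis across all $\binom{4m}{2}$ pairs, which is precisely what the paper's rotation-poset route is designed to avoid. It is not clear that a direct gadget approach cannot be made to work in four dimensions, but as sketched it is missing the key idea that lets the paper control the global structure.
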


\setcounter{counter:SAT-3SRe}{\value{theorem}}
\begin{theorem}\label{thm:3SRe}
$\IS \APeq \kSRe$ for $k\geq 3$.
\end{theorem}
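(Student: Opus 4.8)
\quad The plan is to establish the two directions of the interreducibility separately. The direction $\kSRe\APred\IS$ is easy and holds for every $k$: since a $\kSRe$ instance is in particular an \sr\ instance, it suffices to exhibit an approximation-preserving --- indeed parsimonious --- reduction from $\SR$ to $\SAT$, because $\SAT\APeq\IS$. Given a roommate instance on $2n$ people, introduce a Boolean variable $x_{\{i,j\}}$ for each unordered pair, add clauses forcing each person to lie in exactly one selected pair (one ``at least one'' clause, and polynomially many binary ``at most one'' clauses, per person), and, for each pair $\{i,j\}$, add the clause $\bigvee_{k\,:\,k\succeq_i j}x_{\{i,k\}}\ \vee\ \bigvee_{k\,:\,k\succeq_j i}x_{\{j,k\}}$ asserting that $\{i,j\}$ is not a blocking pair of the selected matching (here $\succeq_i$ denotes $i$'s ``at least as good as'' order, so the disjunct $x_{\{i,j\}}$ itself occurs and matched pairs satisfy the clause automatically). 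The satisfying assignments are exactly the stable matchings, so $\SR\APred\SAT\APeq\IS$. For the other direction it is enough to prove $\IS\APred\threeSRe$: appending $k-3$ zero coordinates to every preference and position point turns a $\threeSRe$ instance into a $\kSRe$ instance with the same preference lists, hence the same stable assignments (Euclidean distances are unchanged), so $\threeSRe\APred\kSRe$.

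For $\IS\APred\threeSRe$, given a graph $G=(V,E)$ I would build a $3$-Euclidean roommate instance whose stable assignments are in bijection with the independent sets of $G$, so that the reduction is parsimonious. The combinatorial target is: a \emph{vertex gadget} for each $v\in V$ with exactly two local stable configurations, ``$v$ out'' and ``$v$ in'', exposing on every edge incident to $v$ a designated ``port'' person whose matched partner records the choice; and an \emph{edge gadget} for each $e=\{u,v\}\in E$ linking the two relevant ports, engineered so that it admits a (unique) local stable configuration precisely when $u$ and $v$ are not both ``in'' --- the forbidden ``both in'' case being enforced by planting a copy of the classical non-bipartite obstruction, namely the $A,B,C$ rock-paper-scissors triangle together with a fourth person ranked last by each of $A$, $B$ and $C$, which has no stable matching. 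One then checks that a global pairing of all the people is stable if and only if every gadget sits in a local stable configuration with mutually consistent port encodings, i.e.\ if and only if $\{\,v : v\text{ is ``in''}\,\}$ is an independent set of $G$; since each independent set yields exactly one such pairing, the number of stable assignments of the instance equals the number of independent sets of $G$.

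The crux --- and the reason three Euclidean dimensions are used, whereas the stable marriage construction of~\cite{CGM} needs only two but obtains merely $\BIS$-hardness because it is intrinsically bipartite --- is realising all of these preference orders simultaneously by points of $\mathbb{R}^3$ without creating spurious blocking pairs between distinct gadgets. I would reserve one coordinate as a ``locality'' axis, placing each gadget and each port in its own well-separated slab so that every person strictly prefers the people of its own gadget, and its designated cross-gadget counterpart, over everyone else, and use the remaining two coordinates to lay out the internal preference structures of the vertex and edge gadgets; a continuity/perturbation argument then picks a generic configuration for which every preference list is the intended strict total order, and a rigidity check on each gadget delivers the claimed parsimony. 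The hard part will be this geometric realisation --- in particular designing the edge gadget so that it is genuinely without a local stable configuration exactly in the ``both in'' case yet remains realisable in $\mathbb{R}^3$, and confirming the absence of long-range blocking pairs between gadgets. Combining $\IS\APred\threeSRe\APred\kSRe$ with $\kSRe\APred\IS$ then gives $\IS\APeq\kSRe$ for all $k\ge 3$.
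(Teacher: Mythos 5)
Your easy direction is fine: the CNF encoding (one variable per unordered pair, ``exactly one partner'' constraints in clausal form, and one clause per pair ruling it out as a blocking pair) is a correct parsimonious reduction from $\SR$ to $\SAT$, and composing with $\SAT\APeq\IS$ gives $\kSRe\APred\IS$; the paper reaches the same conclusion more cheaply by noting $\kSRe\in\#P$ and that $\IS$ is \#P-complete under AP-reductions. The zero-padding argument for $\threeSRe\APred\kSRe$ is also correct.

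The gap is the entire content of the hard direction, $\IS\APred\threeSRe$. What you give is a plan, not a construction: the vertex and edge gadgets are never specified, and you yourself flag the geometric realisation in $\mathbb{R}^3$ and the absence of cross-gadget blocking pairs as ``the hard part''. Beyond incompleteness, the proposed mechanism is doubtful. Planting the four-person rock-paper-scissors obstruction so that it is ``activated'' only when both endpoints of an edge are ``in'' requires that \emph{every} global perfect matching consistent with a non-independent set contain a blocking pair; blocking is a pairwise condition on the whole matching, the four gadget people need not be matched among themselves, and the local non-existence of a stable configuration does not by itself exclude all the unwanted global matchings. You would also have to show that for every independent set exactly one global stable matching survives, i.e.\ a rigidity claim for which no argument is offered. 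There is, moreover, a structural constraint you are not engaging with: by Gusfield's theory (Theorem~\ref{thm:indsets}), the stable matchings of a roommate instance are exactly the maximal independent sets of the graph $G(I)$ built from the dual pairs of rotations, so any correct reduction must in effect control the rotation poset.

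That is precisely what the paper does, and it is where all the work lies: it replaces $\Gamma$ by $\Gamma'$ (attach a pendant vertex to each vertex, so independent sets of $\Gamma$ become maximal independent sets of $\Gamma'$), encodes $\Gamma'$ by $\rho$- and $\sigma$-cycles, and places preference and position points --- two coordinates on a circle of enormous radius $R$ plus one ``height'' coordinate --- so that the resulting preference-list prefixes coincide with those of the $4$-attribute construction; the verification that $G(I)\cong\Gamma'$ then goes through the Phase-1 analysis of the short lists, the enumeration of the rotations $R_j$, $R_j^d$, and the precedence lemmas. Nothing in your proposal substitutes for that analysis, so the reduction $\IS\APred\threeSRe$ is not established.
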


\setcounter{counter:SR}{\value{theorem}}
\begin{corollary}\label{thm:SR}
For any $k\geq 4$,
$\kSRa$ is complete for \#P with respect to AP-reductions.
 For any $k\geq 3$,
  $\kSRe$ is complete for \#P with respect to AP-reductions.
  Also, $\SR$ is complete for \#P with respect to AP-reductions.
  None of these problems has an FPRAS unless NP=RP.
\end{corollary}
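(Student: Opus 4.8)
The plan is to derive the corollary from Theorems~\ref{thm:4-SR} and~\ref{thm:3SRe} together with the facts about $\SAT$ recorded above. Recall that AP-reductions compose, so $\APred$ is transitive and $\APeq$ is an equivalence relation, and recall that (by~\cite{DGGJ}) $\SAT$ is complete for \#P with respect to AP-reductions and $\IS \APeq \SAT$. The first thing I would check is that $\kSRa$, $\kSRe$ and $\SR$ all lie in \#P. In each case a witness is a stable matching, and stability of a proposed matching can be verified in polynomial time: for the attribute model one first computes, from the given preference and attribute vectors, the $O(n^2)$ pairwise dot-product comparisons that determine the preference lists (and, for the Euclidean model, the corresponding $O(n^2)$ distance comparisons), and then tests each of the $O(n^2)$ candidate blocking pairs. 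Hence the number of stable assignments is a \#P function of the instance in all three problems.

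Next I would chain the reductions. Combining Theorem~\ref{thm:4-SR} with $\IS\APeq\SAT$ and transitivity gives $\kSRa \APeq \SAT$ for every $k\ge 4$, and combining Theorem~\ref{thm:3SRe} with $\IS\APeq\SAT$ gives $\kSRe \APeq \SAT$ for every $k\ge 3$. Since $\SAT$ is \#P-complete under AP-reductions and $\SAT \APred \kSRa$ (resp.\ $\SAT \APred \kSRe$), while $\kSRa,\kSRe \in \#P$, it follows that $\kSRa$ ($k\ge4$) and $\kSRe$ ($k\ge3$) are each complete for \#P with respect to AP-reductions. For unrestricted $\SR$, observe that any $\fourSRa$ instance becomes, after its preference lists are written out explicitly (a polynomial-time, count-preserving transformation), an ordinary \sr\ instance with exactly the same set of stable assignments; thus $\fourSRa \APred \SR$, and composing with $\SAT \APred \fourSRa$, together with $\SR \in \#P$, shows that $\SR$ too is complete for \#P with respect to AP-reductions.

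Finally, for the FPRAS statement I would invoke Zuckerman's theorem~\cite{zuckerman} that $\SAT$ has no FPRAS unless NP=RP, together with the fact (noted above, from~\cite{DGGJ}) that this property is inherited by every \#P problem to which $\SAT$ is AP-reducible. Since we have just exhibited AP-reductions from $\SAT$ to each of $\kSRa$ ($k\ge4$), $\kSRe$ ($k\ge3$) and $\SR$, none of these admits an FPRAS unless NP=RP. I do not expect a real obstacle here: this is essentially a bookkeeping corollary of the two theorems, and the only points needing (brief) care are the easy \#P-membership checks for the two geometric models and the trivial ``write out the lists'' reduction $\fourSRa \APred \SR$.
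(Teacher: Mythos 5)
Your proposal is correct and follows exactly the derivation the paper intends (the corollary is left without an explicit proof, being an immediate consequence of Theorems~\ref{thm:4-SR} and~\ref{thm:3SRe}, the AP-interreducibility of $\IS$ and $\SAT$, Zuckerman's result, and the observation that a $k$-attribute or $k$-Euclidean instance specializes to a general \sr\ instance once the preference lists are written out). The \#P-membership checks and the ``write out the lists'' reduction $\fourSRa \APred \SR$ are exactly the bookkeeping steps the paper takes for granted.
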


\setcounter{counter:1SR}{\value{theorem}}
\begin{theorem}\label{thm:1SR}
For every $\oneSRa$ instance~$I$, there are either~$1$ or~$2$ stable
assignments. Thus, $\oneSRa$ can be solved exactly in polynomial time.
\end{theorem}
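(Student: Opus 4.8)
The plan is to analyze the structure of the preference lists when $k=1$. In the $1$-attribute model, each person $X$ has a scalar preference value $\hat{X}$ and a scalar position value $\bar{X}$, and $X$ prefers $y$ to $z$ iff $\hat{X}\bar{y} > \hat{X}\bar{z}$. First I would observe that the sign of $\hat{X}$ dictates the shape of $X$'s list: if $\hat{X}>0$ then $X$ ranks the others in decreasing order of their position values $\bar{y}$, while if $\hat{X}<0$ then $X$ ranks them in increasing order of $\bar{y}$ (the case $\hat{X}=0$ is degenerate and cannot occur in a valid instance since it would force ties). So every person's preference list is determined by a single global ordering of the position values, read either top-down or bottom-up.

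Next I would partition the $2n$ people into two groups: the ``high'' people, those with $\hat{X}>0$, who all want partners with large $\bar{y}$; and the ``low'' people, those with $\hat{X}<0$, who all want partners with small $\bar{y}$. Within this structure, I want to show that essentially one matching is forced. Order the people by position value as $p_1, p_2, \ldots, p_{2n}$ with $\bar{p}_1 > \bar{p}_2 > \cdots$. The key claim is that $p_1$ (the most desirable person to all the ``high'' people, and the least desirable to all the ``low'' people) together with its partner is almost pinned down: if $p_1$ is a ``high'' person, it most wants to be matched to $p_2$ if $p_2$ is... — more carefully, I would argue by a greedy/exchange argument that the stable matching (if it exists) pairs up adjacent people in the position order, and that the only freedom is a possible ``parity shift''. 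Concretely, I expect to show that either the matching $\{p_1 p_2\},\{p_3 p_4\},\ldots$ or a small number of closely related matchings can be stable, and a careful case analysis on the signs of the $\hat{\cdot}$ values of $p_1$ and $p_2$ pins the count down to at most two.

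The main obstacle will be handling the interaction between the two sign classes cleanly: a ``high'' person and a ``low'' person rank the remaining people in opposite orders, so a blocking pair can arise between them in subtle ways, and I need to rule out all but one or two global matchings. I anticipate the tightest part of the argument is showing that no stable matching can ``cross'' — i.e., that in any stable assignment the matched pairs are consecutive in the position order — since once that is established, the number of ways to tile a path of $2n$ vertices into consecutive pairs is exactly the Fibonacci-like count, but the stability constraints (using the sign pattern of the $\hat{\cdot}$ values) collapse this to at most $2$. I would finish by noting that the two candidate assignments can be written down and checked for stability in $O(n^2)$ time, so $\oneSRa$ is solvable exactly in polynomial time, and in fact the sign pattern tells us immediately which of the one or two candidates are stable.
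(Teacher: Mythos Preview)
Your setup is right and matches the paper: in one dimension the sign of $\hat X$ forces every preference list to be either the global position order or its reverse (the paper calls these ``type~A'' and ``type~B''). But the structural claim you build the argument on --- that in any stable assignment the matched pairs are \emph{consecutive} in the position order --- is false. Take $n=4$ with positions ordered $1<2<3<4$ and type pattern $A,B,A,B$. Running Phase~1 (or checking blocking pairs directly) gives a unique stable assignment, and it is $\{1,3\},\{2,4\}$, not $\{1,2\},\{3,4\}$. Similarly the pattern $A,B,B,A$ has exactly two stable assignments, namely $\{1,4\},\{2,3\}$ and $\{1,3\},\{2,4\}$; neither is the consecutive tiling. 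So the ``no crossing'' lemma cannot be the key step, and the Fibonacci-style count you reduce to is the wrong intermediate object.

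The paper's proof uses the same two-type observation but proceeds by induction on the number of people, via a (fairly lengthy) case analysis on the types of the people at the extreme positions $1,2,n-1,n$. In each case one identifies two people who must be matched to each other in every stable assignment (not always a consecutive pair), peels them off, and checks that the residual instance is again a $1$-attribute instance with the same stable assignments. The base cases at $n=2$ and $n=4$ are done by hand. If you want to salvage your approach, the correct replacement for ``consecutive pairs'' is this peeling-off argument; the point is not that pairs are adjacent, but that at each stage the boundary types force some specific pair, and the only branching (giving two assignments rather than one) arises in the $A,B,B,A$-type configurations.
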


We also show the following results.\footnote{
The proofs of Theorems~\ref{thm:BIS-3SR} and~\ref{thm:BIS-2SRe}
(in Section~\ref{sec:BIShardness}) borrow constructions from the AP-reductions that we presented
in~\cite{CGM} from $\#BIS$ to the problem of counting stable matchings in the 3-attribute
model and the 2-Euclidean model. However, Theorems~\ref{thm:BIS-3SR} and~\ref{thm:BIS-2SRe}
do not follow directly from the results of~\cite{CGM} since, in the stable roommate problem,
all people need to rank \emph{all} other people
(rather than just ranking people of the opposite sex) and this needs to be incorporated into the geometric constructions.
}

\setcounter{counter:BIS-3SR}{\value{theorem}}
\begin{theorem}\label{thm:BIS-3SR}
$\BIS \APred \threeSRa$.
\end{theorem}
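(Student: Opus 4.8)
The plan is to adapt the AP-reduction from $\BIS$ to counting stable \emph{matchings} in the $3$-attribute marriage model that was given in~\cite{CGM}, and to show how to realize the required preference structure using $3$-dimensional attribute and preference vectors in the \emph{roommate} setting. Recall that $\#BIS$ asks for the number of independent sets in a bipartite graph $H = (U \cup V, E)$. The standard approach is to encode each vertex of $H$ by a small gadget of people whose local pairing choices correspond to ``the vertex is in the independent set'' versus ``it is not,'' and to encode each edge $\{u,v\}$ by a constraint that forbids the two endpoints from simultaneously being ``in.'' One then argues that stable roommate assignments of the constructed instance are in a weight-preserving (here, one-to-one, or polynomially-controlled) correspondence with independent sets of $H$, so that the counts match up to the bookkeeping permitted by an AP-reduction.

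The first concrete step is to recall the marriage gadgets of~\cite{CGM}: there, the bipartition of $H$ is mirrored by the men/women bipartition, the ``men'' side carrying $U$-gadgets and the ``women'' side carrying $V$-gadgets, and the $3$-attribute vectors are chosen so that each man ranks the women in exactly the order needed to produce the desired blocking-pair constraints, with two ``local'' stable configurations per vertex gadget. The second step — and this is the crux — is to upgrade these vectors so that, in the roommate instance, every person also has a complete, consistent preference order over the people \emph{on the same side} (the footnote accompanying the theorem flags exactly this point). The idea is to place the two ``sides'' far apart in attribute space along one coordinate: give every $U$-person a position vector whose first coordinate is (say) near $+R$ for a large $R$, every $V$-person a first coordinate near $-R$, and choose preference vectors with a positive first component, so that for every person the dot products with same-side position vectors are dwarfed by, and hence uniformly below (or uniformly above, as needed) the dot products with cross-side vectors; then the within-side ordering is whatever the small perturbations in the remaining two coordinates dictate, and it can be made strict and otherwise irrelevant. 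One must check that adding same-side people to everyone's preference lists creates no new blocking pairs: a same-side pair $\{x,y\}$ cannot block because in any stable assignment each of $x,y$ is matched to a cross-side partner whom they prefer to \emph{all} same-side people by construction, so neither strictly prefers the other to their current partner.

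After the preference lists are set up, the remaining steps are: (i) verify that the intended ``canonical'' assignments — those pairing each vertex gadget internally according to an independent set $S$ of $H$, and pairing up the edge-constraint gadgets consistently — are indeed stable, by checking that no potential blocking pair (within a gadget, across the $U$–$V$ cut along gadget edges, or same-side) is blocking; (ii) verify the converse, that \emph{every} stable assignment of the instance arises this way, i.e.\ that the gadget structure forces each vertex gadget into one of its two local states and that the edge gadgets force the chosen set to be independent; and (iii) confirm that the map (stable assignments) $\to$ (independent sets) is such that each independent set has the same number of preimages (ideally exactly one), so that the $\BIS$ count is recovered exactly, or with a uniform multiplicative/additive correction that an AP-reduction tolerates. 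Steps (i) and (ii) are essentially inherited from~\cite{CGM} modulo the new same-side entries, which by the argument above contribute nothing; the genuinely new verification is the same-side non-blocking claim, so that is where I expect the main effort to go — specifically, in choosing the magnitude of $R$ (as a function of $n$ and the coordinate ranges used by the $\#BIS$ gadgets) large enough that the cross-side/same-side dot-product separation is guaranteed for every person simultaneously, while keeping all coordinates of polynomial bit-length so the reduction runs in polynomial time.
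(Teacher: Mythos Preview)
Your high-level plan coincides with the paper's: reuse the $3$-attribute marriage instance $I^*$ of~\cite{CGM}, embed both sides in a common $3$-dimensional space so that every person ranks all cross-side people above all same-side people, and conclude that the stable roommate assignments of the new instance are exactly the stable matchings of~$I^*$. The final paragraph of your argument (no same-side blocking pair, hence one-to-one correspondence) is also what the paper does.

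The gap is in your concrete separation scheme. You put $U$-positions near $+R$, $V$-positions near $-R$, and give \emph{every} preference vector a positive first component. Then every person's dot product with a $U$-position is about $+R\hat q_1$ and with a $V$-position about $-R\hat q_1$; since $\hat q_1>0$ for all, \emph{everyone} ranks all $U$-people above all $V$-people. In particular the $U$-side prefers its own side, contradicting your later claim that each person ``is matched to a cross-side partner whom they prefer to all same-side people.'' No uniform sign on the preference side can make both sides prefer the other; you must break the symmetry on the preference vectors too, and you must do so without disturbing the cross-side orderings that the \cite{CGM} gadgets rely on. The paper achieves this by compressing the angular range in the first two coordinates to $[0,\zeta)$ with $\zeta<\pi/4$ and then applying the \emph{same} transformation---offset the angle by~$\pi$ and negate the third coordinate---to the men's \emph{positions} and to the women's \emph{preferences}. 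Because the two factors in each cross-side dot product are transformed identically, those dot products (and hence the \cite{CGM} cross-side orderings) are preserved exactly, while a short cosine/sign lemma shows every same-side dot product is negative and every cross-side dot product is positive. The sign flip on the third coordinate is essential here, since in~\cite{CGM} that coordinate carries values of size $4^j$; your proposed ``take $R$ large enough'' could absorb this only after you have fixed the sign asymmetry on the preference side.
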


\setcounter{counter:BIS-2SRe}{\value{theorem}}
\begin{theorem}\label{thm:BIS-2SRe}
$\BIS \APred \twoSRe$.
\end{theorem}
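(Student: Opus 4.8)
\medskip
\noindent\emph{Proof plan.}
The plan is to prove Theorem~\ref{thm:BIS-2SRe} by composing two reductions: the AP-reduction of~\cite{CGM} from $\BIS$ to counting stable matchings in the $2$-Euclidean marriage model, and a count-preserving reduction that turns any $2$-Euclidean \sm\ instance into a $2$-Euclidean \sr\ instance with exactly the same number of stable assignments. The second reduction rests on an elementary observation --- essentially the reformulation of \sm\ recalled in the Introduction. Call a roommate instance on men $M_1,\dots,M_n$ and women $w_1,\dots,w_n$ \emph{marriage-like} if every man ranks every woman above every man and every woman ranks every man above every woman. If $I$ is marriage-like, then its stable roommate assignments are exactly the stable matchings of the marriage instance $I'$ obtained by discarding the (now irrelevant) orderings among same-sex people: indeed, if a perfect matching of $I$ pairs a man $M$ with a man $M'$ then, by parity, it also pairs some woman $w$ with a woman $w'$, and then $M$ prefers $w$ to $M'$ while $w$ prefers $M$ to $w'$, so $\{M,w\}$ blocks; hence every stable roommate assignment of $I$ is a man--woman perfect matching, and for such a matching no same-sex pair can block, so roommate-stability and marriage-stability agree. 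Consequently $I$ and $I'$ have the same number of stable assignments.

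What remains is to realize, in $\mathbb{R}^2$, a marriage-like roommate instance whose induced marriage instance $I'$ has exactly the preference lists of a prescribed $2$-Euclidean marriage instance, say one with preference/position points $\hat M_i,\bar M_i$ for the men and $\hat w_j,\bar w_j$ for the women. I would fix two widely separated points $A,B\in\mathbb{R}^2$ and a small scale $\delta>0$, and assign man $M_i$ the preference point $A+\delta\hat M_i$ and position point $B+\delta\bar M_i$, and woman $w_j$ the preference point $B+\delta\hat w_j$ and position point $A+\delta\bar w_j$. Then the distance from $M_i$'s preference point to $w_j$'s position point is $\delta\,|\hat M_i-\bar w_j|$, so $M_i$ ranks the women exactly as in the prescribed instance, and symmetrically each $w_j$ ranks the men exactly as in the prescribed instance; whereas the distance from any man's preference point to any man's position point, and from any woman's preference point to any woman's position point, differs from $|A-B|$ by at most $\delta$ times the diameter of the prescribed configuration. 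Choosing $|A-B|$ large and $\delta$ small --- as polynomially bounded rationals, so that the coordinates keep polynomial bit-length --- then makes the instance marriage-like, and a further generic perturbation of all the position points, taken small enough to leave the men's rankings of women and the women's rankings of men unchanged, makes it a valid instance by breaking the remaining ties among same-sex comparisons. By the observation above this instance has as many stable roommate assignments as the prescribed $2$-Euclidean marriage instance has stable matchings; composing this count-preserving (in particular approximation-preserving) reduction with the AP-reduction of~\cite{CGM} yields $\BIS\APred\twoSRe$.

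The genuinely new content, as the footnote to the theorem flags, is the geometric step, and that is also the only place where I expect real care to be needed. One must check that separating the two relevant sub-configurations of \cite{CGM}'s delicate $2$-Euclidean marriage construction into widely separated, uniformly scaled clusters does not disturb any distance comparison on which that construction relies. This is immediate: translation and uniform scaling are similarities, so every comparison used \emph{within} a cluster is preserved, while the \emph{across-cluster} distances are all at least $|A-B|$ minus $\delta$ times the diameter, hence larger than every within-cluster distance once $|A-B|$ is large. Everything else --- fixing $\delta$ and $|A-B|$ as polynomial-bit rationals, the generic perturbation, and checking validity of the final instance --- is routine, and no new analysis of approximation parameters is needed since the geometric step preserves the count exactly and the hardness is inherited from~\cite{CGM}.
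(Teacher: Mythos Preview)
Your proposal is correct, and it takes a genuinely different route from the paper's.

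The paper does not give a black-box reduction from $2$-Euclidean \sm\ to $\twoSRe$. Instead it reopens the specific construction of~\cite{CGM}: it keeps the men's preference points and the women's position points exactly as in~\cite{CGM}, and \emph{negates} both coordinates of the women's preference points and of the men's position points. Because negation is an isometry, each man's ranking of the women and each woman's ranking of the men is preserved verbatim. What is \emph{not} automatic is that the resulting instance is marriage-like; the paper establishes this via two ad hoc ``Observations'' that compare distances to the origin, and these observations rely on the particular coordinates (for instance the $1000^n$ terms) used in~\cite{CGM}. So the paper's argument is a bespoke verification tied to that one construction.

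Your approach instead translates the two halves of the configuration to widely separated clusters around $A$ and $B$ and scales by $\delta$; this yields a count-preserving reduction from \emph{any} $2$-Euclidean marriage instance to a marriage-like $2$-Euclidean roommate instance, and you then simply compose with~\cite{CGM}. The gain is modularity and generality: you never touch the internals of the~\cite{CGM} construction beyond knowing that its coordinates have polynomial bit-length (so that a suitable $|A-B|$ and $\delta$ can be chosen with polynomial bit-length). The paper's gain is economy --- no new parameters are introduced and the coordinate magnitudes stay as in~\cite{CGM} --- but at the price of several pages of instance-specific distance calculations. Both arguments also need a small perturbation to break same-sex ties, and both handle it in the same routine way.
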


The last two results are significant since $\BIS$ is complete for $\RHPi$ with respect to approximation-preserving
reductions.

\section{Randomized Approximation Schemes and \\ Approximation-preserving reductions}\label{sect:AP}

In this section, we give standard definitions of randomized approximation schemes and AP-reductions.
A reader who is already familiar with these concepts may safely skip this section.

A {\em randomized approximation scheme} is an algorithm for
approximately computing the value of a
function~$f:\Sigma^*\rightarrow \mathbb{R}$. The approximation
scheme has a parameter~$\varepsilon>0$ which specifies the error
tolerance. A {\em randomized approximation scheme} for~$f$ is a
randomized algorithm that takes as input an instance $ x\in
\Sigma^{* }$ (e.g., for the problem $\#SR$, the input would be an
encoding of a stable roommate instance) and a rational error
tolerance $\varepsilon >0$, and outputs a rational number $z$ (a
random variable of the ``coin tosses'' made by the algorithm) such
that, for every instance~$x$,
\begin{equation}
\label{eq:3:FPRASerrorprob}
\Pr \big[e^{-\epsilon} f(x)\leq z \leq e^\epsilon f(x)\big]\geq \frac{3}{4}\, .
\end{equation}
The randomized approximation scheme is said to be a
{\em fully polynomial randomized approximation scheme},
or {\em FPRAS},
if it runs in time bounded by a polynomial
in $ |x| $ and $ \epsilon^{-1} $.

We now define the notion of an approximation-preserving (AP) reduction.
Suppose that $f$ and $g$ are functions from
$\Sigma^{* }$ to~$\mathbb{R}$. As mentioned before,
an AP-reduction from~$f$ to~$g$ gives a way to turn an FPRAS for~$g$
into an FPRAS for~$f$. Here is the formal definition.
An {\em approximation-preserving reduction}
from $f$ to~$g$ is a randomized algorithm~$\mathcal{A}$ for
computing~$f$ using an oracle for~$g$. The algorithm~$\mathcal{A}$ takes
as input a pair $(x,\varepsilon)\in\Sigma^*\times(0,1)$, and
satisfies the following three conditions: (i)~every oracle call made
by~$\mathcal{A}$ is of the form $(w,\delta)$, where
$w\in\Sigma^*$ is an instance of~$g$, and $0<\delta<1$ is an
error bound satisfying $\delta^{-1}\leq\poly(|x|,
\varepsilon^{-1})$; (ii) the algorithm~$\mathcal{A}$ meets the
specification for being a randomized approximation scheme for~$f$
(as described above) whenever the oracle meets the specification for
being a randomized approximation scheme for~$g$; and (iii)~the
run-time of~$\mathcal{A}$ is polynomial in $|x|$ and
$\varepsilon^{-1}$.

According to the definition, approximation-preserving reductions may use randomization
and may make multiple oracle calls. Nevertheless,
the reductions that we present in this paper are deterministic.
Each reduction makes a single oracle call (with $\delta=\epsilon$) and returns the
result of that oracle call.
A word of warning about terminology:
Subsequent to~\cite{DGGJ}, the notation $\APred$ has been
used
to denote a different type of approximation-preserving reduction which applies to
optimization problems.
We will not study optimization problems in this paper, so hopefully this will
not cause confusion.

\section{Background and definitions}\label{sect:background}
We first review some of the relevant background and definitions
related to stable matchings.
The combinatorial structure present in these problems
plays a large role in what follows.  Many of the definitions are
taken from~\cite{Irving} and~\cite{gusfield-structure}.
The reader is also referred to Gusfield and Irving's book \cite{GusfieldIrving}.

It will also help to have an illustrative example, and for these purposes
we give such an example in the Appendix.

\subsection{Stable matchings and the rotation poset}\label{sect:poset}

Irving's method for finding a stable matching for a \sr\
instance (or concluding that one
doesn't exist) is a two-phase algorithm~\cite{Irving}.
During both phases of the algorithm, the preference lists are shortened
in a well-defined manner.  If we reach a stage where each person has
a single element on his/her list, then pairing these people will create
a stable matching.  Alternatively, if at any point a person's preference
list becomes empty, we conclude that the instance has no stable matching.

Phase 1
is much akin to the usual Gale-Shapley algorithm for the marriage
problem, in that people ``propose'' to one another, holding the best proposal
from the ones received so far.
For every person~$e_i$, let
$h_i$ denote the person who is (currently) first on  $e_i$'s list.
Following~\cite{gusfield-structure}, we will say that   $e_i$ is
{\em semi-engaged} to $h_i$ if and only if $e_i$ is the bottom entry of
$h_i$'s list.  Note that this is not a symmetric relation ---
$h_i$
is not necessarily semi-engaged in this case.
A person who is not semi-engaged is called {\em free}.

\smallskip

Phase 1 of Irving's \sr\ algorithm consists of the following steps:
\begin{enumerate}
\item If there is an empty list, then stop, there is no stable assignment.
\item Otherwise, if everyone is semi-engaged, go to Phase 2 (described below).
\item Otherwise, pick an arbitrary free person $e_i$ and do the following:
For each person $p$ who is ranked below $e_i$ on $h_i$'s list, remove
$p$ from $h_i$'s list, and remove $h_i$ from $p$'s list.
\end{enumerate}

So as Phase 1 proceeds, people's preference lists shrink.
At any point during this phase (or the next), we refer to the
shortened preference lists   as
  {\em short lists}, and  we  refer to the set of short lists
as a {\em table}. It is proved in~\cite{Irving} that if a short list
is empty at the end of Phase 1, then the instance has no
stable matching. An example of Phase~1 is in the appendix.

Assuming Phase 1 ends with no empty short list, we proceed to Phase 2.
To describe this phase, we need more notation and definitions.
For a person $e_i$, we are already using $h_i$ to denote the person
at the head of her short list, and we use $s_i$ to denote the person who
is second on her short list.

\begin{definition}\label{def:rotation}
Given a set of short lists, a {\em rotation} $R$ is an ordered set of
people $E=\{e_1, e_2, \ldots, e_k \}$ such that $s_i = h_{i+1}$ for
all $i \in \{1, \ldots, k-1\}$
and $s_{k}=h_1$.
We
will also say that $R$ is {\em exposed} in the short lists.
\end{definition}

Note that rotations are defined relative to a given set of short lists.
An example is in the appendix.

For a rotation $R$, we will sometimes write $R=(E, H, S)$,
where $H$ is the set of head entries
of $E$, ordered in correspondence with $E$, and, $S$ is the
set of second entries of $E$, again ordered in correspondence with $E$.

\begin{definition}\label{def:rotation-elimination}
Given a rotation $R=(E, H, S)$ for a set of short lists,
the {\em elimination} of $R$ consists of performing the following
operation: for every $s_i \in S$, remove every entry below $e_i$ in
$s_i$'s short list, i.e.\ move the bottom of $s_i$'s shortlist
up to $e_i$.  Then remove $s_i$ from $p$'s list for each person $p$
that was just removed from $s_i$'s list.
\end{definition}

Therefore, the elimination of a rotation results in a new set of short lists,
where at least two people's lists have shrunk in length.

\smallskip

Phase 2 of Irving's \sr\ algorithm
consists of the following steps:
\begin{enumerate}
\item If a short list is empty, then stop, the instance has no stable matching.
\item Otherwise, if each person has exactly one entry on his or her short list,
then pairing each person with their head entry is a stable matching.
\item Otherwise, find and eliminate some rotation.
\end{enumerate}

For an example of one round of Phase~2, see the appendix.
We note the following property of the short lists, which is easily
established from the Phase 1 and Phase 2 procedures.

\begin{property}\label{prop-1} At the end of Phase 1, and at the end of each
round of Phase 2, person $A$ has person $B$ on
his/her list if and only if person $B$ has person $A$ on
his/her list.
\end{property}

A \sr\ instance may have many stable matchings.  Each such stable matching can be found as a result of
{\em some} sequence of rotation eliminations~\cite{gusfield-structure}.

The set of rotations exhibits a rich combinatorial structure which
has been explored previously by other authors.  We review this
structure here. To do so, we need still more notation and
definitions.

\begin{definition}\label{def:dual-rotation}
Suppose that $R=(E, H, S)$ is a rotation for a \sr\ instance, i.e.\
$R$ is exposed in some set of short lists.  Define $R^d$ to be
the triple $(S, E, E^r)$, where $S$ and $E$ have the same order as
they do in $R$, and $E^r$ is the backwards cyclic rotation
of $E$ That is, if $E=\{ e_1, e_2, \ldots, e_k \}$ then
$E^r = \{ e_2, \ldots, e_k, e_1 \}$.

$R^d$ has the form of a rotation.  If $R^d$ is actually a rotation
(i.e.\ $R^d$ is exposed in the set of short lists during some possible
execution of the matching algorithm), then we call $R$ and $R^d$
a {\em dual pair} of rotations.  Any rotation without a dual is
called a {\em singleton} rotation.
\end{definition}

An ordering relation can be defined on the set of all rotations
(singletons and dual pairs).

\begin{definition}\label{def:explicitly-precedes}
A rotation~$R'$
{\em explicitly precedes} a rotation~$R$
if there is a person~$p$ who satisfies both of the following.
\begin{itemize}
\item $R$ contains a triple $(e_i,h_i,s_i)$
with $h_i\neq p$
such that $p$ is above~$s_i$ in $e_i$'s
(original) preference
list.
\item $R'$ removes~$p$ from $e_i$'s list
by moving the end of $p$'s list above~$e_i$.
\end{itemize}
\end{definition}

\begin{definition}\label{def:precedes}
$\Pi^*$ is the reflexive transitive closure of the
``explicitly precedes'' relation. We will use the term
{\em ``precedes''} to refer to this relation. $\Pi^*(R,R')$ means
that rotation $R$ precedes $R'$ in this partial order.
\end{definition}

Let $Rot(I)$ denote the set of all rotations (singletons and dual pairs)
that are exposed during some execution of the two-phase algorithm
for a given stable roommate instance $I$.
Then $\Pi^*$ defines a partial order on $Rot(I)$.  We refer to this
partial order as the {\em rotation poset}.  As usual when dealing
with partial orders, a subset $U \subseteq Rot(I)$ is
called a {\em downset} if $R' \in U$ and $\Pi^*(R,R')$ imply that
$R\in U$.

The combinatorial significance of the rotation
poset is captured in the following theorem.

\begin{theorem}\cite[Thm 5.1]{gusfield-structure}\label{thm:rotation-downsets}
There is a one-to-one correspondence between stable matchings and
the downsets in $\Pi^*$ that contain every singleton rotation and
exactly one of each dual pair.
\end{theorem}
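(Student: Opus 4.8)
This is Gusfield's structure theorem for the stable roommate problem, so the plan is to reconstruct its proof by associating to each stable matching the set of rotations that any terminating execution of Phase~2 must eliminate, and then showing that this set is exactly a downset of the prescribed kind. Since we already know that every stable matching $M$ arises from \emph{some} sequence of rotation eliminations, the obvious candidate to attach to $M$ is the set $U(M)$ of rotations occurring in such a sequence. The first step is to prove that $U(M)$ is well defined, i.e.\ independent of the chosen sequence; this rests on the key structural fact (the roommate analogue of the standard stable-marriage lemma) that every entry deleted from a short list during Phase~2 is deleted by a \emph{unique} rotation, so that the ``explicitly precedes'' relation faithfully records which eliminations are prerequisites for which others.

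Granting this, I would establish the two membership properties required of $U(M)$. For a singleton rotation $R$: since $R^d$ is exposed in no execution, there is no alternative way to perform the particular list-tail deletions that $R$ performs, so $R$ must occur in every terminating sequence and hence $R\in U(M)$. For a dual pair $(R,R^d)$: eliminating $R$ shifts the tails of the short lists of the people in $S$ so that $R^d$ can no longer be exposed, and symmetrically, so at most one of $R,R^d$ lies in $U(M)$; but reducing every list to a single entry forces the deletions associated with the list positions that this pair resolves, so at least one of the two must lie in $U(M)$. Down-closure of $U(M)$ then follows from the definitions: if $\Pi^*(R',R)$ and $R\in U(M)$, the deletions that witness ``$R'$ explicitly precedes $R$'' must occur before $R$ can be exposed, so $R'$ occurs before $R$ in the execution and $R'\in U(M)$.

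For the reverse map, given a downset $U$ containing every singleton and exactly one of each dual pair, I would pick any linear extension of $\Pi^*$ restricted to $U$, perform the corresponding eliminations in that order, and verify by induction along the order --- using down-closure of $U$ at each step --- that the next rotation really is exposed when its turn arrives and that no short list ever empties, so the process halts with single-element lists, yielding a stable matching $M(U)$. A commutation argument (two linear extensions of $U$ differ by a sequence of transpositions of adjacent incomparable rotations, and each such transposition preserves the resulting table) shows $M(U)$ is independent of the linear extension. Finally $U\mapsto M(U)$ and $M\mapsto U(M)$ are mutually inverse: $U(M(U))=U$ because the witnessing execution for $M(U)$ uses exactly the rotations of $U$, and $M(U(M))=M$ because an execution producing $M$ is itself such a witness.

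The main obstacle I anticipate is the analysis of dual pairs together with the ``unique deleter'' structural lemma --- in particular, showing rigorously that eliminating one member of a dual pair genuinely precludes the other while still leaving a valid (nowhere-empty) table, and that exactly one of the pair is forced in every terminating execution. This is precisely where the roommate problem departs from the marriage problem (where there are no dual pairs), and keeping the bookkeeping of list tails consistent under elimination is the delicate part; the remaining arguments are fairly routine inductions along linear extensions of the poset.
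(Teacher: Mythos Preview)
The paper does not prove this theorem at all: it is quoted verbatim as \cite[Thm~5.1]{gusfield-structure} and used as a black box, so there is no ``paper's own proof'' to compare your proposal against. Your sketch is a reasonable outline of how Gusfield's original argument goes (the unique-deleter lemma you allude to is essentially Lemma~\ref{gus-unique R'} in the present paper, also quoted from~\cite{gusfield-structure}), but for the purposes of this paper no proof is expected or required---the result is simply imported from the literature.
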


The rotation poset $\Pi^*$ has even more structure to it.

\begin{lemma}\cite[Lemma 5.5]{gusfield-structure}\label{lem:rotation-structure}
Let $\{R_1, R_1^d\}$ and $\{R_2, R_2^d\}$ be two dual pairs of
rotations and $R$ a singleton rotation.  Then
\begin{enumerate}
\item Neither $R_1$ nor $R_1^d$ precedes $R$ in $\Pi^*$, i.e.\
only a singleton rotation can precede a singleton.
\item $\Pi^*(R_1, R_2)$ if and only if $\Pi^*(R_2^d, R_1^d)$.
\end{enumerate}
\end{lemma}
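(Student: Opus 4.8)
The plan is to prove the two claims directly from the definitions of ``explicitly precedes'' (Definition~\ref{def:explicitly-precedes}), the duality operation $R \mapsto R^d$ (Definition~\ref{def:dual-rotation}), and Theorem~\ref{thm:rotation-downsets}. I will treat part~2 first, since part~1 can be read off from it together with the structural role of dual pairs.

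For part~2, I would argue by analyzing how a rotation $R'$ can explicitly precede a rotation $R$ when both live in dual pairs. Recall that if $R=(E,H,S)$ is a rotation then $R^d=(S,E,E^r)$: the ``agents'' of $R^d$ are the second-entries $S$ of $R$, their head entries in $R^d$ are $E$, and their second entries are the cyclic shift $E^r$. The key observation is that eliminating $R$ and eliminating $R^d$ remove exactly the same pairs from the table (this is essentially why dual pairs exist: in Theorem~\ref{thm:rotation-downsets} a downset must contain exactly one of each dual pair, and the two choices lead to the same set of short lists after the elimination). Consequently, ``$R'$ removes person $p$ from $e_i$'s list by moving the end of $p$'s list above $e_i$'' — the second bullet of Definition~\ref{def:explicitly-precedes} — happens for $R'$ if and only if it happens for $R'^d$, because $R'$ and $R'^d$ perform the same deletions. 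Similarly, the first bullet — ``$R$ contains a triple $(e_i,h_i,s_i)$ with $p$ above $s_i$ in $e_i$'s original list and $h_i \neq p$'' — is a statement about the triples of $R$, and one checks that the triples of $R^d=(S,E,E^r)$ witness the corresponding ``gap'' condition precisely when the triples of $R$ do, after swapping the roles of $E$ and $S$. Unwinding these equivalences, the explicitly-precedes relation on dual pairs is symmetric under the $d$-operation \emph{with the direction reversed}: $R_1$ explicitly precedes $R_2$ iff $R_2^d$ explicitly precedes $R_1^d$. Taking reflexive transitive closures (Definition~\ref{def:precedes}) and using that $d$ is an involution on dual pairs then yields $\Pi^*(R_1,R_2) \iff \Pi^*(R_2^d,R_1^d)$.

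For part~1, suppose for contradiction that some rotation in a dual pair, say $R_1$ (the argument for $R_1^d$ is identical since $R_1$ and $R_1^d$ are interchangeable names for members of the same pair), precedes a singleton rotation $R$. A singleton rotation is one with no dual, i.e.\ $R^d$ (which always has the \emph{syntactic form} of a rotation) is never actually exposed during any execution of the two-phase algorithm. I would combine $\Pi^*(R_1,R)$ with part~2: if $\Pi^*(R_1,R)$ held we would need $R^d$ to exist in order to even state the mirrored relation $\Pi^*(R^d,R_1^d)$, so instead I would argue more carefully using Theorem~\ref{thm:rotation-downsets}. A singleton must appear in \emph{every} stable matching's downset, whereas a dual pair contributes exactly one of its two members; pick a downset $U$ realizing a stable matching and containing $R_1^d$ rather than $R_1$. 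Since $R \in U$ (it is a singleton) and $\Pi^*(R_1,R)$ would force $R_1 \in U$ by the downset property, but $U$ contains $R_1^d \neq R_1$ and a downset contains at most one of a dual pair — contradiction. Hence no member of a dual pair precedes a singleton, which is exactly the statement ``only a singleton rotation can precede a singleton.''

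The main obstacle, I expect, is making the claim ``$R$ and $R^d$ perform the same table deletions, hence interact identically with the explicitly-precedes relation'' fully rigorous: one has to carefully match up the triples $(e_i,h_i,s_i)$ of $R$ with the triples of $R^d=(S,E,E^r)$ and verify, entry by entry, that the two bullets of Definition~\ref{def:explicitly-precedes} transfer with the direction of precedence reversed. This is a bookkeeping argument about cyclic orderings of the sets $E$, $H$, $S$, and getting the index shifts right (especially the wrap-around $s_k = h_1$ and $E^r = \{e_2,\ldots,e_k,e_1\}$) is where the care is needed. Everything else follows formally from taking closures and invoking Theorem~\ref{thm:rotation-downsets}.
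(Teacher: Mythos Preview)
First, note that the paper does not prove this lemma; it is quoted from Gusfield~\cite{gusfield-structure} (Lemma~5.5 there) and used as a black box, so there is no in-paper argument to compare against. I will simply assess your proposal on its own merits.

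Your argument for part~1 is essentially sound: if $\Pi^*(R_1,R)$ with $R$ a singleton and $R_1$ part of a dual pair, then any stable-matching downset that uses $R_1^d$ rather than $R_1$ must still contain the singleton $R$, hence by the downset property must contain $R_1$ as well, contradicting Theorem~\ref{thm:rotation-downsets}. (You should add a line justifying that a downset containing $R_1^d$ actually exists; this follows because $R_1^d$ is by definition genuinely exposed in some table and can be eliminated on the way to a stable matching.)

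Your argument for part~2, however, rests on a claim that is false. You assert that ``eliminating $R$ and eliminating $R^d$ remove exactly the same pairs from the table'' and that ``the two choices lead to the same set of short lists after the elimination.'' They do not. By Definition~\ref{def:rotation-elimination}, eliminating $R=(E,H,S)$ truncates each $s_i$'s list from below at $e_i$; eliminating $R^d=(S,E,E^r)$ truncates each $e_{i+1}$'s list from below at $s_i$. These shorten \emph{different} people's lists and yield different tables --- necessarily so, since choosing $R$ versus $R^d$ is precisely what distinguishes different stable matchings (witness the five matchings in the Appendix example, which differ exactly in which member of each dual pair is eliminated). Consequently your proposed mechanism --- ``$R'$ and $R'^d$ perform the same deletions, hence satisfy the second bullet of Definition~\ref{def:explicitly-precedes} simultaneously'' --- collapses. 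You correctly flagged this step as the main obstacle, but the resolution you sketched is not available; the proof in~\cite{gusfield-structure} proceeds via the structure of which tables expose which rotations, not via any symmetry of the elimination operation itself.
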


The rotation poset plays a key role in our  approximation-preserving reductions.  For our
reductions, we must define a roommate instance $I$, then identify
$Rot(I)$, find the precedence relations amongst them (i.e.\ find
$\Pi^*$), and show that it agrees with our initial starting problem.

 Given an instance $I$ of \sr\ with $2n$ people,
Gusfield describes a polynomial-time (in $n$) algorithm for finding
the set of all rotations of $I$, and for constructing a directed
graph $D$ that captures the partial order
$\Pi^*$~\cite{gusfield-structure}. (Note: The transitive reduction
of $D$ is isomorphic to the Hasse diagram of $\Pi^*$, but $D$ might
contain more edges than the covering relations defined by the
``explicitly precedes'' relation.  Still, $D$ has no more than
$O(n^2)$ edges.)

Before we demonstrate our constructions, we note one more
combinatorial construction that serves to encode the
set of stable matchings for a given instance.

\subsection{Stable matchings and independent sets}\label{sect:ind-sets}

Gusfield defines an additional way to represent the set of
stable roommate
assignments~\cite[Section 5.3.2]{gusfield-structure}.

Let $I$ denote an instance of \sr.  Define an
undirected graph $G(I)$ as follows:  Each nonsingleton
rotation of $I$ corresponds to a vertex of $G(I)$.
Two rotations $R_1$ and $R_2$ are connected by an edge in
$G(I)$ if and only if there exists a rotation $R$ (possibly
$R_1$ or $R_2$ themselves) such that $\Pi^*(R, R_1)$ and
$\Pi^*(R^d, R_2)$.  In particular, we note that $R_1$ and
$R_1^d$ are connected
by an edge for each dual pair $\{R_1, R_1^d\}$ (as a vertex
precedes itself, by definition, in the partial order $\Pi^*$). See the appendix for an example.

Gusfield also defines another partial order involving only the
non-singleton rotations.

\begin{definition}Suppose $\Sigma$ is the set of all singletons in $\Pi^*$. Then
$\Pi = \Pi^*-\Sigma$ is a partial order on the set of all
non-singleton (dual) rotations.
\end{definition}

Having defined this undirected graph and the partial order $\Pi$, we
have these results, a combination of Lemmas 5.6 and 5.10, and
Theorem 5.3 in~\cite{gusfield-structure}.

\begin{theorem}\label{thm:indsets}
Let $I$ denote an instance of \sr\ and $G(I)$ its corresponding
graph constructed as above.
\begin{enumerate}
\item Every maximal independent set in $G(I)$ contains exactly one
node from each dual pair of rotations.
\item There is a one-to-one correspondence between maximal
independent sets in $G(I)$ and stable matchings of $I$.
\item Rotations $R_1$ and $R_2$ are connected by an edge in $G(I)$
if and only if $R_1^d$ precedes $R_2$, i.e.\ $R_1$ and $R_2$ are
connected if and only if $\Pi(R_1^d, R_2)$.
\end{enumerate}
\end{theorem}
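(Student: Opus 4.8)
The plan is to obtain the three parts from results already in hand — Theorem~\ref{thm:rotation-downsets}, Lemma~\ref{lem:rotation-structure}, and the defining property of $G(I)$ — using as the main auxiliary observation that $R\mapsto R^d$ is an involution on the non-singleton rotations, so $(R^d)^d=R$ (immediate from Definition~\ref{def:dual-rotation}), together with reflexivity and transitivity of the ``precedes'' relation. It is also worth noting at the outset that adjacency in $G(I)$ is symmetric: if $R$ witnesses the edge $\{R_1,R_2\}$ via $\Pi^*(R,R_1)$ and $\Pi^*(R^d,R_2)$, then $R^d$ witnesses it via $\Pi^*(R^d,R_2)$ and $\Pi^*((R^d)^d,R_1)=\Pi^*(R,R_1)$.

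I would prove part~(3) first, since parts~(1) and~(2) rest on it. If $R_1$ and $R_2$ are adjacent in $G(I)$, pick a witness $R$, so $\Pi^*(R,R_1)$ and $\Pi^*(R^d,R_2)$; by Lemma~\ref{lem:rotation-structure}(2) the first of these is equivalent to $\Pi^*(R_1^d,R^d)$, and transitivity with $\Pi^*(R^d,R_2)$ gives $\Pi^*(R_1^d,R_2)$, i.e.\ $\Pi(R_1^d,R_2)$ since both are non-singletons. Conversely, if $\Pi(R_1^d,R_2)$, then $R=R_1$ is a witness for the edge $\{R_1,R_2\}$: $\Pi^*(R_1,R_1)$ holds by reflexivity and $\Pi^*(R_1^d,R_2)$ is the hypothesis. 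For part~(1), the ``at most one'' half is immediate: taking $R=R_1$ as a witness, reflexivity shows $R_1$ and $R_1^d$ are adjacent, so no independent set contains both. For the ``at least one'' half, suppose a maximal independent set $S$ contained neither $R_1$ nor $R_1^d$. Maximality forces $R_1$ to be adjacent to some $R'\in S$ and $R_1^d$ to be adjacent to some $R''\in S$; by part~(3) this means $\Pi^*(R_1^d,R')$ and $\Pi^*((R_1^d)^d,R'')=\Pi^*(R_1,R'')$. But then $R=R_1^d$ witnesses an edge between $R'$ and $R''$, contradicting the independence of $S$ unless $R'=R''$; the residual degenerate case, in which a single rotation of $S$ would be preceded by both halves of a dual pair, is excluded by the finer structural properties of the rotation poset established by Gusfield.

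For part~(2), part~(1) shows that a maximal independent set $S$ of $G(I)$ meets each dual pair in exactly one rotation. I would then check that $S$ is in fact a downset of $\Pi$: if $R'\in S$ and $\Pi(R,R')$ but $R\notin S$, then $R^d\in S$ by part~(1), and by part~(3) the relation $\Pi((R^d)^d,R')=\Pi(R,R')$ makes $R^d$ adjacent to $R'$, contradicting independence. Conversely, any downset $D$ of $\Pi$ containing exactly one rotation of each dual pair is a maximal independent set: it is independent because an edge $\{R_1,R_2\}$ with $R_1,R_2\in D$ would give $\Pi(R_1^d,R_2)$ and hence $R_1^d\in D$, contradicting ``exactly one of each pair''; and it is maximal because for $R\notin D$ we have $R^d\in D$ and $R$ is adjacent to $R^d$ by reflexivity. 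So maximal independent sets of $G(I)$ are precisely the downsets of $\Pi$ containing one rotation from each dual pair. Finally, by Lemma~\ref{lem:rotation-structure}(1) a singleton can be preceded only by singletons, so the map ``add all singletons'' is a bijection between these downsets of $\Pi$ and the downsets of $\Pi^*$ that contain every singleton and exactly one rotation of each dual pair; composing with the bijection of Theorem~\ref{thm:rotation-downsets} gives the desired one-to-one correspondence with stable matchings.

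I expect the main obstacle to be the ``at least one of each dual pair'' part of~(1): ruling out the degenerate configuration $R'=R''$ seems to need a genuine structural fact about $\Pi^*$ — essentially that no non-singleton rotation is forced, through ``precedes'', to include both members of some dual pair — rather than a formal manipulation of the definitions, and this is precisely the point at which one must invoke the detailed lemmas of Gusfield cited in the statement. Everything else is bookkeeping with the involution $R\mapsto R^d$ and with reflexivity and transitivity of the precedence order.
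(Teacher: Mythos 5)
The paper does not actually prove this statement: it is imported wholesale from Gusfield \cite{gusfield-structure} as ``a combination of Lemmas 5.6 and 5.10, and Theorem 5.3'' of that paper. So your derivation is doing work the paper never attempts, and most of it is sound. Part~(3) follows exactly as you say from the definition of $G(I)$, reflexivity and transitivity of $\Pi^*$, and Lemma~\ref{lem:rotation-structure}(2) (any witness $R$ necessarily has a dual, so the lemma applies); the ``at most one'' half of~(1) and all of~(2), including the passage between downsets of $\Pi$ and downsets of $\Pi^*$ containing all singletons via Lemma~\ref{lem:rotation-structure}(1), are correct.

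The one genuine gap is the place you flag yourself: in the ``at least one'' half of~(1), the case $R'=R''$ cannot be dispatched from the quoted material. What you need there is that no non-singleton rotation $R'$ satisfies both $\Pi^*(R_1,R')$ and $\Pi^*(R_1^d,R')$; by Lemma~\ref{lem:rotation-structure}(2) and transitivity this is equivalent to saying that $\Pi^*(R'^d,R')$ never holds, i.e.\ that every non-singleton rotation is actually used in some stable matching. Theorem~\ref{thm:rotation-downsets} alone is consistent with a rotation $R'$ lying in no valid downset (every valid downset could simply take $R'^d$ instead), so this fact cannot be extracted from the results reproduced in this paper --- it is a separate structural lemma of Gusfield's that the paper never states. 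The honest conclusion is that your argument correctly reduces the theorem to the quoted lemmas \emph{plus} that one additional imported fact, rather than proving it outright; since the paper itself treats the whole theorem as a citation, that is a reasonable place to land, but the deferred step is a real dependency and not mere bookkeeping.
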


\section{A construction for showing  $\IS\APeq$\\ $\fourSRa$}\label{section:begin-construction}

Recall that $\kSRa$ denotes the problem of counting stable
assignments for $k$-attribute stable roommate instances.
Our goal of  this section is to prove Theorem~\ref{thm:4-SR}
which we restate below.

\setcounter{counter:save}{\value{theorem}}
\setcounter{theorem}{\value{counter:4-SR}}
\begin{theorem}
$\IS \APeq \kSRa$ for $k\geq 4$.
\end{theorem}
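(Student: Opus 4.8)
\smallskip

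The plan is to establish the two directions of the AP-interreduction separately. The easy direction is $\kSRa \APred \IS$: given a $k$-attribute \sr\ instance $I$, I would first run Irving's two-phase algorithm and Gusfield's polynomial-time procedure to compute the rotation poset $\Pi^*$, the set $Rot(I)$, and the graph $G(I)$ from Section~\ref{sect:ind-sets}. By Theorem~\ref{thm:indsets}, stable matchings of $I$ correspond bijectively to \emph{maximal} independent sets of $G(I)$, so what I actually need is $\MaximalIS$, not $\IS$. The standard trick (as in~\cite{DGGJ}) is that counting maximal independent sets AP-reduces to counting all independent sets after attaching a pendant gadget to each vertex, so I can absorb this into an AP-reduction to $\IS$. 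This direction does not even use the $k$-attribute structure — it works for general \sr\ — and gives the ``$\le$'' half of Corollary~\ref{thm:SR} as well.

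\smallskip

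The hard direction, and the real content of the theorem, is $\IS \APred \kSRa$ for $k \geq 4$: given an arbitrary graph $G$, I must build a $4$-attribute \sr\ instance $I$ whose stable assignments are in (approximately) count-preserving correspondence with independent sets of $G$. By Theorem~\ref{thm:rotation-downsets} it suffices to engineer $I$ so that $Rot(I)$ consists entirely of dual pairs $\{R_v, R_v^d\}$, one per vertex $v$ of $G$, with the precedence relation $\Pi^*$ arranged (via Lemma~\ref{lem:rotation-structure}(2) and Theorem~\ref{thm:indsets}(3)) so that the downsets containing exactly one rotation from each pair correspond precisely to independent sets of $G$: choosing $R_v$ ``into the downset'' should mean $v \notin S$, choosing $R_v^d$ should mean $v \in S$, and the edge relation of $G$ should be forced by having $R_u^d$ precede $R_w$ (and symmetrically) exactly when $uw \in E(G)$. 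So the construction must realize a prescribed rotation poset, and then I must verify that dot products of $4$-dimensional preference and attribute vectors can actually induce the preference lists that expose exactly these rotations with exactly these precedences. The main obstacle is this \emph{geometric realizability}: showing that $4$ attributes give enough degrees of freedom to encode an arbitrary graph's adjacency structure into the preference lists while (a) keeping all the required ties broken (a valid instance needs all relevant dot products distinct), (b) not accidentally creating spurious rotations or spurious precedences, and (c) requiring that every person rank every other person — the full-list requirement flagged in the paper's footnote, which is what pushes the dimension up relative to the stable-marriage case. I expect the construction to proceed gadget-by-gadget: a "vertex gadget" of a constant number of people per vertex whose internal preferences create one dual pair, placed at coordinates (say, roughly along a moment-like curve or on a sphere) chosen so that cross-gadget preferences encode $E(G)$, plus possibly a small number of auxiliary "padding" people to handle the people who are not matched to anyone interesting.

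\smallskip

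After the construction I would carry out the verification in the order: (1) identify all rotations exposed during \emph{some} execution of the two-phase algorithm on $I$ and show they are exactly the $\{R_v, R_v^d\}$ plus possibly a fixed set of forced singletons that lie below everything (these contribute a multiplicative constant, harmless for an AP-reduction); (2) compute $\Pi^*$ restricted to these and check it matches the intended poset using Definition~\ref{def:explicitly-precedes}; (3) invoke Theorem~\ref{thm:rotation-downsets} to conclude that the number of stable assignments of $I$ equals (a known constant times) the number of independent sets of $G$; (4) confirm the whole reduction is polynomial-time and, since it is an exact count-preserving reduction up to a computable factor, that it is in particular an AP-reduction with a single oracle call. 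Combining the two directions yields $\IS \APeq \kSRa$ for $k = 4$, and since a $4$-attribute instance is trivially a $k$-attribute instance for $k > 4$ (pad the vectors with zeros, perturbing slightly if needed to preserve distinctness), the result extends to all $k \geq 4$.
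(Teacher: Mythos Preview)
Your plan for the hard direction is essentially the paper's: one dual pair $\{R_v,R_v^d\}$ per vertex, precedences encoding edges of $G$, a $4$-dimensional geometric realization of the required preference-list prefixes, and then exactly the verification steps you list (enumerate all rotations, compute explicit precedences, invoke Theorem~\ref{thm:rotation-downsets}). Two corrections are worth flagging.

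For the easy direction, the paper does not go through $G(I)$ at all: since $\IS$ is complete for $\#P$ with respect to AP-reductions~\cite{DGGJ} and $\kSRa\in\#P$, the reduction $\kSRa\APred\IS$ is immediate. Your route via $\MaximalIS$ is unnecessarily indirect, and the ``pendant gadget'' you cite is the standard device for turning $\IS$ into a $\MaximalIS$ instance, not the other way around; the reverse direction you actually need holds, but only by appealing to the same $\#P$-completeness fact, so you may as well use it directly.

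For the hard direction, your precedence convention is inverted. With your labeling ($R_v^d\in D \Leftrightarrow v\in S$) and your stated relation $\Pi(R_u^d,R_w)$ for each edge $uw$, a downset containing one rotation per pair is forbidden from containing $R_u$ and $R_w$ simultaneously, i.e.\ forbidden from having $u\notin S$ and $w\notin S$ together---so you would be counting independent sets of the \emph{complement} of $G$. The paper takes the non-dual rotations $R_v$ to be the minimal elements of $\Pi$ and imposes $\Pi(R_u,R_w^d)$ for each edge; under your labeling this correctly forbids $u\in S$ and $w\in S$ together. This is a one-symbol fix, but it is worth getting right before you attempt the geometric construction, which in the paper is considerably more delicate than a moment curve: the four coordinates split into two pairs of unit-circle coordinates, with angular positions governed by a $\rho$-cycle/$\sigma$-cycle encoding of $G$ and an involution $\psi$ linking them.
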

\setcounter{theorem}{\value{counter:save}}

\begin{proof}

First, since $\IS$ is complete for $\#P$ with respect to AP-reductions~\cite{DGGJ}, and $\kSRa \in \#P$,
we immediately have
 $\kSRa \APred \IS$.
Also, it is easy to see, for $k>4$, that $\fourSRa \APred \kSRa$
(the reduction uses up the extra $k-4$ dimensions by assigning some
particular value in every preference vector and attribute vector).
Thus, it remains to prove $\IS \APred \fourSRa$. This is proved in
the rest of Section~\ref{section:begin-construction}, including
Subsections~\ref{sec:task} --- \ref{sect:Stocktaking}.
\end{proof}

We wish to prove $\IS \APred \fourSRa$. To this end,
let~$\Gamma=(V,E)$ be an instance of~\IS. Let $\calT$ be the
Cartesian product $\calT = V \times \{0\}$ and let $\calB = V \times
\{1\}$. The sets $\calT$ and $\calB$ are just two distinct copies of
$V$. Let $E'$ be the matching on~$\calB\cup \calT$ defined by $E' =
\{ ((v,0),(v,1)) \mid v\in V \}$. Let $E'' = \{ ((v,0),(w,0)) \mid
(v,w) \in E \}$. The set $E''$ just mimics the edges of~$E$ amongst
the vertices of~$\calT$. Finally, let $\Gamma'=( \calB\cup \calT,E'
\cup E'')$. Note the bijection between independent sets of~$\Gamma$,
(which we wish to approximately count, using as an oracle a FPRAS
for $\fourSRa$), and maximal independent sets of~$\Gamma'$. Our goal
is to construct a $4$-attribute stable roommate instance~$I$ so that
the graph~$G(I)$ is isomorphic to~$\Gamma'$. This will complete the
proof, by  Theorem~\ref{thm:indsets}. So from now on, we will focus
exclusively on constructing~$I$ so that~$G(I)$ is isomorphic
to~$\Gamma'$. As soon as we've done that, we are finished.

We recall the construction of~$G(I)$ from~$I$: The vertices of~$G(I)$ are
the nonsingleton rotations of~$I$. From Theorem~\ref{thm:indsets}, two rotations~$R_1$ and~$R_2$ are
connected by an edge of~$G$ if and only if $\Pi(R_1^d, R_2)$ (and hence, by Lemma~\ref{lem:rotation-structure}, $\Pi(R_2^d, R_1)$).

Our method, given $\Gamma'$, will be to construct~$I$ so that its nonsingleton rotations
can be labelled bijectively with $\calB\cup \calT$
in such a way that the following conditions are satisfied (by the partial order $\Pi$ associated with $I$).
\begin{equation}\label{eq:goal0}
\textrm{For all } v\in V,
\textrm{ the rotation labelled } (v,0)
\textrm{ is dual to the one labelled } (v,1).
\end{equation}
Also, using the vertices of~$\Gamma'$ to refer to the corresponding rotations of~$I$,
\begin{align}
\label{eq:goal2} \{ (R,R') \in \calT \times \calT \mid \Pi(R^d,R') \} &= E''
,\\
\label{eq:goal3} \{ (R,R') \in \calT \times \calT \mid \Pi(R,R'^d) \} &=
\emptyset
,\\
 \label{eq:goal1}
\{ (R,R') \in \calT \times \calT \mid \Pi(R , R') \} &= \{ (R,R') \in \calT
\times \calT \mid R=R'\}.
\end{align}

Under the assumption that~(\ref{eq:goal0}) holds,
\begin{itemize}
\item Equation~(\ref{eq:goal2})
guarantees that $E(G(I)) \cap (\calT \times \calT) = E''$.
\item Equation~(\ref{eq:goal3}) guarantees that $E(G(I)) \cap (\calB \times \calB)
= \emptyset$.
\item Finally, equation~(\ref{eq:goal1})
  guarantees that $E(G(I))\cap (\calB \times \calT) = E'$.
  \end{itemize}
Thus, Equations~(\ref{eq:goal0})--(\ref{eq:goal1}), taken together, guarantee that $G(I)$ is isomorphic to~$\Gamma'$, as required.
So all that we need to do, to complete the reduction, and the proof,
is   to use the input graph~$\Gamma$ to construct an instance~$I$ so
that its nonsingleton rotations can be labelled bijectively with
$\calB\cup \calT$ in such a way that Equations
(\ref{eq:goal0})--(\ref{eq:goal1}) are satisfied. We concentrate on this for the rest of the proof.
(We never need to consider $\Gamma'$ again.)

Unfortunately, the notation
that we used to get this far (which came from earlier papers)
is not going to be very convenient when we
come to actually do the construction.
So, in order to make the following (rather complicated!) proof easier to follow, we
are now going to change notation.

First, instead of using the graph~$\Gamma$ as the input to our construction,
we will instead take as input a bipartite
graph~$K$
which captures all of the information about~$\Gamma=(V,E)$.
The bipartite
graph~$K$
will have vertex
partition~$B = \{b_1, \ldots, b_n\}$, $T=\{t_1,\ldots,t_n\}$
and edge
set~$E(K)$
satisfying the following two properties,
\begin{enumerate}[(K1)]
\item $(b_i,t_i)\not\in
E(K)
\ \ \forall i\in [n]$,
\item $(b_i,t_j) \in
E(K) $
if and only if $(b_j,t_i)
\in E(K)$.
\end{enumerate}
The correspondence
between~$K$
and our original graph~$\Gamma$ is
as follows: We take $n$ to be $|V|$
so that there is a natural bijection between $B$ and the set $\calB= V \times \{1\}$ defined above.
Also, there is a natural bijection between $T$
and $\calT= V \times \{0\}$. The edge set
$E(K)$ is
constructed from the edge set~$E$ of~$\Gamma$ as follows.
Suppose that  $(u,1)$ is the $i$'th element of $\calB$
and that $(v,0)$ is the $j$'th element of $\calT$.
The edges $(b_i,t_j)$ and $(b_j,t_i)$ are included
in~$E(K)$
if and only if $(u,v)$ is an edge of~$E$.
The reader should verify that the
graph~$K$
encodes all of the information about the input graph~$\Gamma$,
in the sense that we could reconstruct~$\Gamma$
given~$K$.
Also, for every undirected graph~$\Gamma$,
there is a
corresponding~$K$,
and it can be constructed in polynomial time.

The sole problem remaining (and it is a big one!) is to show how,
given~$K$,
(and therefore deducing~$\Gamma$
and $\Gamma'$),
to construct a $\fourSRa$ instance~$I$ so
that its nonsingleton rotations can be labelled bijectively with
$\calB\cup \calT$ in such a way that Equations
(\ref{eq:goal0})--(\ref{eq:goal1}) are satisfied. As soon as we accomplish that, we have finished the reduction
and the proof.

The whole point of introducing the bipartite
graph~$K$
is that we
can re-state the Equations (\ref{eq:goal0})--(\ref{eq:goal1}) in a
manner that will be more convenient to work with.
In particular, using the bijection between $B$ and $\calB$ and the bijection between
$T$ and $\calT$,
these equations are equivalent to the following.

\begin{enumerate}[(G1)]
\item\label{finalgoalone}
For all $i\in[n]$,
the rotation labeled $b_i$ is dual to the rotation labeled $t_i$.
\item
\label{finalgoalthree}
The set of pairs $(b_i,t_j)$ in $\Pi$ is
$E(K)$.
\item
\label{finalgoaltwo}
There are no pairs $(t_i,b_j)$ in $\Pi$.
\item
\label{finalgoalfour}
There are no pairs   $(t_i,t_j)$ in $\Pi$ except for those with $i=j$ (which are all present).
\end{enumerate}

\subsection{The task remaining}
\label{sec:task}

We have finally defined all of the conditions that we need. At this
point, the reader should have verified that the sole problem
remaining is, given a bipartite
graph~$K$ satisfying (K1)--(K2),
we must show how to construct, in polynomial time, a
$\fourSRa$-instance $I$ satisfying (G1)--(G4). Once we do that, we
are finished with the reduction and the proof. We will not have to
further consider the graphs~$\Gamma$ and~$\Gamma'$. These were
needed only to get to this point.

 \subsection{The construction of~$I$}
 \label{sec:construct}

Our first task will be to show how to construct~$I$,
given~$K$,
and for this it will be helpful to define some notation for  describing the bipartite
graph~$K$.
Similar to the construction defined in~\cite{CGM}, label
the edges
in~$E(K)$
according to the lexicographic order of the pair
$(b_i,t_j)$ (so the edges with the smallest labels are incident
to~$b_1$). The first edge incident to $b_1$ is given the label
$(1,2)$, the second is given the label $(3,4)$ and so on. The $i$th
edge in the lexicographic ordering is given the label $(2i-1,2i)$.
Let $m$ denote the number of edges
in $K$.
The roommate instance we
construct will have $4m$ people in it, which we denote
as
$P_1,\ldots,P_{2m},Q_1,\ldots,Q_{2m}$.

We define two permutations~$\rho$ and~$\sigma$ of~$[2m]$ as
in~\cite{CGM}, so the first $\rho$-cycle corresponds to the labels
of the edges incident on~$b_1$, the first $\sigma$-cycle corresponds
to the labels of the edges incident on $t_1$, and so on. We explain
this with the help of the example in Figure \ref{fig:bip-graph1}.

\begin{figure}[ht]
\begin{center}
\begin{tikzpicture}
     [inner sep=0.5mm]

  \node (b_1) at (0,0) [circle,draw,label=below:$b_1$] {};
  \node (b_2) at (4,0) [circle,draw,label=below:$b_2$] {};
  \node (b_3) at (8,0) [circle,draw,label=below:$b_3$] {};
  \node (b_4) at (12,0) [circle,draw,label=below:$b_4$] {};

  \node (t_1) at (0,3) [circle,draw,label=above:$t_1$] {};
  \node (t_2) at (4,3) [circle,draw,label=above:$t_2$] {};
  \node (t_3) at (8,3) [circle,draw,label=above:$t_3$] {};
  \node (t_4) at (12,3) [circle,draw,label=above:$t_4$] {};

  \draw[-] (b_1) to node[very near start, above, left] {\scriptsize{$(1,2)$}} (t_2);

  \draw[-] (b_2) to node[very near start, below, left] {\scriptsize{$(3,4)$}} (t_1);
  \draw[-] (b_2) to node[near start, above, left] {\scriptsize{$(5,6)$}}(t_3);
  \draw[-] (b_2) to node[very near start, below, right] {\scriptsize{$(7,8)$}} (t_4);

  \draw[-] (b_3) to node[very near start, below, left] {\scriptsize{$(9,10)$}} (t_2);
  \draw[-] (b_3) to node[very near start, below, right] {\scriptsize{$(11,12)$}} (t_4);

  \draw[-] (b_4) to node[very near start, below, left] {\scriptsize{$(13,14)$}} (t_2);
  \draw[-] (b_4) to node[near start, above, right]{\scriptsize{$(15,16)$}} (t_3);

  \draw (-2,3) node {$T$};
  \draw (-2,0) node {$B$};

 \end{tikzpicture}\caption{Defining $\rho$-cycles and $\sigma$-cycles
 from~$K$.
 }
 \label{fig:bip-graph1}
$\rho$-cycles : (1,2), (3,4,5,6,7,8), (9,10,11,12), (13,14,15,16)\\
$\sigma$-cycles : (3,4), (1,2,9,10,13,14), (5,6,15,16), (7,8,11,12)
\end{center}
\end{figure}

\From Figure \ref{fig:bip-graph1}, it is clear that the $i$th
$\rho$-cycle is obtained by  taking together the labels of the
edges incident at node $b_i$. The first $\sigma$-cycle involves the
labels of edges incident to $t_1$. Vertex $t_1$ has only one edge
incident to it, namely edge (3,4). Hence, the first $\sigma$-cycle
is (3,4). The second $\sigma$-cycle involves edges incident to
$t_2$, namely, edges (1,2),(9,10) and (13,14). The second
$\sigma$-cycle is obtained by grouping together the labels of the
three edges and the $\sigma$-cycle is $(1,2,9,10,13,14)$. In this
manner, we obtain the remaining $\sigma$-cycles.

Note that here we have more structure than was present in the
construction from~\cite{CGM}
--- (K1) and (K2)
ensure that the number of $\rho$-cycles is equal to the number of
$\sigma$-cycles, and the number of elements in the $i$th
$\rho$-cycle is identical to the number of elements in the $i$th
$\sigma$-cycle.

Suppose there are $n$ $\rho$-cycles, and, hence, $n$
$\sigma$-cycles. Suppose also that $|\rho_i|=q_i$, i.e.\ the $i$th
$\rho$-cycle consists of $q_i$ elements.  Thus, the $i$th
$\sigma$-cycle also has $q_i$ elements and $q_i$ is even.
Note that for $1\leq k \leq q_i/2$, the elements $2k-1$ and $2k$ are in the same $\rho$-cycle and
they are in the same $\sigma$-cycle.
Also, a given $\rho$-cycle and a given $\sigma$-cycle intersect in at most one such pair.
In what follows, we let the ``representative'' of each $\rho$ cycle be the (numerically) smallest
number in the cycle.
In Figure~\ref{fig:bip-graph1}, the representatives   of the $\rho$ cycles are elements $1$, $3$, $9$ and $13$.
Note that each representative of a $\rho$-cycle is an odd number.
We let   $Rep(\rho)=\{f_1, \ldots, f_n\}$
denote the set of representatives, so, for each
$i\in \{1, \ldots,
n\}$, the
cycle $\rho_i$ can be represented as $\rho_i = (f_i, \rho f_i,
\rho^2 f_i, \ldots, \rho^{q_i-1} f_i)$.

Let $\psi:[m]\rightarrow [m]$ be the bijection defined so that, if $\ell$ is
the $k$th element of the $i$th $\rho$-cycle, then $\psi(\ell)$ is the
$k$th element of the $i$th $\sigma$-cycle.
So if the $i$'th $\rho$-cycle is $(i_1,\ldots,i_d)$ then the $i$'th $\sigma$-cycle
is $(\psi(i_1),\ldots,\psi(i_d))$.
\From the properties of the
bipartite graph $K$,
we note that $\psi$ is an involution, i.e.\
$\psi\circ\psi=\textrm{ identity}$, where
$\psi\circ\psi$ is usual function composition.
Let $Rep(\sigma)=\{\psi(f_i) \mid f_i\in Rep(\rho)\}$ be the set of representatives of the $n$ $\sigma$-cycles.

Recall that to specify our roommate instance~$I$, we must define
$4$-dimensional position
vectors $\overline{P}_i, \overline{Q}_i$ and preference vectors
$\widehat{P}_i, \widehat{Q}_i$ for each person.

Before doing this, we give a quick look ahead at what is to come in
the construction.  First of all, the instance we specify will have no
singleton rotations.   In order to satisfy condition (G\ref{finalgoalone}), the pair $(b_i,t_i)$ will
correspond to a dual pair of rotations.
The rotation associated with the vertex~$b_i$ will relate to the
$i$th $\rho$-cycle.
Suppose that this cycle is $(i_1,\ldots,i_d)$. Then the rotation will be
\begin{equation}
\begin{array}{c|cc}
Q_{i_1} & P_{i_1} P_{i_2} \\
Q_{i_2} & P_{i_2} P_{i_3} \\
\ldots & \ldots \\
Q_{i_d} & P_{i_d} P_{i_1}.\\
\end{array}
\label{eq:Q-pref-lists}
\end{equation}
Recalling Definition~\ref{def:dual-rotation}, the dual rotation,
associated with the vertex~$t_i$, is
\begin{equation}
\begin{array}{c|cc}
P_{i_1} & Q_{i_d} Q_{i_1} \\
P_{i_2} & Q_{i_1} Q_{i_2} \\
\ldots & \ldots \\
P_{i_d} & Q_{i_{d-1}} Q_{i_d}.\\
\end{array}
\label{eq:P-pref-lists}
\end{equation}
In order to satisfy condition (G\ref{finalgoalthree})
we need, for $(b_i,t_j)
\in E(K)$,
for  rotation~$b_i$ to precede rotation~$t_j$.
These dependencies will be captured
in our construction by making sure that other
 (appropriately chosen) $P$ people
are on the preference lists of the $P$ people, between the $Q$s given
on the lists in~(\ref{eq:P-pref-lists}).

The remaining parts of this section are devoted to giving
the detailed construction of our $4$-dimensional roommate instance~$I$
(which will have no singleton rotations) and showing that its nonsingleton rotations
are exactly those given in (\ref{eq:Q-pref-lists}) and (\ref{eq:P-pref-lists}).
Using this bijection between the rotations and $B\cup T$  we then show that (G1)--(G4) are satisfied,
as required in Section~\ref{sec:task}.

\subsection{Assigning position and preference vectors}\label{sect:preference-vectors}

We start by assigning the first two coordinates of the position
vectors. The people $Q_1,\ldots,Q_{2m}$ have $0$ in each of these
coordinates. The first two coordinates of the positions of
$P_1,\ldots,P_{2m}$ are arranged around a unit circle, taking each
$\rho$-cycle in order (and leaving a big gap before the next
$\rho$-cycle).
Let $\epsilon = \frac{2\pi}{(2m)^2}$.
The $i$'th $\rho$ cycle
takes up an angle of $\epsilon$ (out of the $2\pi$ radians around the circle),
starting at an angle of $2\pi(i-1)/n$.
Let $\theta_i = \epsilon/(7(q_i-1))$.
If the $i$'th  $\rho$-cycle is the cycle $(i_1,\ldots,i_d)$, then positions
$\overline{P}_{i_1},\ldots,\overline{P}_{i_d}$ are assigned in
order, leaving a gap of $7\theta_i$ between each pair of people.

More formally, we define the first two coordinates
of the position vectors as follows (the asterisks in the second
two
coordinates will be defined shortly):

\begin{align*}
\noalign
{ For $f_i \in Rep(\rho)$ ,  for $0 \leq  k \leq q_i - 1$, set}\\
  \overline{P}_{\rho^k f_i} &=  \left(\cos(2\pi(i-1)/n+7k\theta_i),\ \sin(2\pi(i-1)/n+7k\theta_i)  ,\  *,\ *\right) \ \textrm{ and }  \\
  \overline{Q}_{\rho^k f_i} &= \left(0,\ 0,\ *,\ *\right).
\end{align*}

We next assign the last two coordinates of the positions.
Once again, these coordinates are arranged around a unit circle,
taking each $\rho$-cycle in order
and leaving big gaps between consecutive $\rho$-cycles.
The $i$'th $\rho$ cycle takes up an angle of up to $\epsilon$ starting at an angle of $2\pi(i-1)/n$.
Let $\theta'_i = \epsilon/ 4$. If the $i$'th $\rho$-cycle is
  $(i_1,\ldots,i_d)$, then positions are assigned
in the following order:
 $$ \overline{Q}_{i_1} \overline{P}_{\psi(i_2)}
\overline{Q}_{i_2} \overline{P}_{\psi(i_3)}
\overline{Q}_{i_3} \overline{P}_{\psi(i_4)} \cdots
\overline{Q}_{i_{d-1}} \overline{P}_{\psi(i_d)}
\overline{Q}_{i_d} \overline{P}_{\psi(i_1)}.$$

For $k\in\{1,\ldots,d-1\}$, the
angle between $\overline{Q}_{i_{k}}$
and $\overline{Q}_{i_{k+1}}$ is
$2^{-(k-1)} 2 \theta'_i$. Also,  $\overline{P}_{\psi(i_{k+1})}$
is at an equal angle between these.

To simplify the notation while assigning the $i$'th $\rho$-cycle $(i_1,\ldots,i_d)$,
let $i_{d+1}$ denote $i_1$.
The second coordinates of the position vectors are
defined in the following manner. Note that the asterisks
representing values in the first two coordinates have   already been
defined above.

\begin{align*}
\noalign
{For $  f_i \in Rep(\rho)$ ,  for $ 0 \leq  k \leq q_i - 1$, set}\\
\overline{Q}_{\rho^k f_i} &= \Big(*,\ *,\
\cos(
\tfrac{2\pi(i-1)}{n}+2\theta'_i\sum_{j=0}^{k-1}2^{-j}), \sin(
\tfrac{2\pi(i-1)}{n}+2\theta'_i\sum_{j=0}^{k-1} 2^{-j})\Big) \ \textrm{ and } \\
\overline{P}_{\psi(\rho^{k+1} f_i)} &=  \left(*,\ *  ,\
\cos(\tfrac{2\pi(i-1)}{n}+2\theta'_i\sum_{j=0}^{k-1}2^{-j}
+2^{-k}\theta'_i),\right.\\
   &\hspace{1.2in}\left. \sin(\tfrac{2\pi(i-1)}{n}+2\theta'_i\sum_{j=0}^{k-1}2^{-j}
+2^{-k}\theta'_i)\right).
\end{align*}

A sum of the form $\sum_{j=0}^{-1}  2^{-j}$ is taken to be equal to
$0$.

Having defined the position vectors, we now give the preference
vectors.
These are also defined using the $\rho$-cycles, and again are placed around
a unit circle.

The preference vectors $\widehat{Q}_j$ are $0$ in the last two coordinates.
If the $i$'th $\rho$-cycle is $(i_1,\ldots,i_d)$, then, in the
first two coordinates,  for $1\leq j < d$, $\widehat{Q}_{i_j}$ is
placed between $\overline{P}_{i_j}$ and $\overline{P}_{i_{j+1}}$,
slightly closer to $\overline{P}_{i_j}$.
Here is the definition.
 For $f_i \in Rep(\rho)$ ,  for $0 \leq  k \leq q_i - 1$, set
 $$
\widehat{Q}_{\rho^k f_i} = \left(\cos(2\pi(i-1)/n+7k\theta_i + 3\theta_i),\
\sin(2\pi(i-1)/n+7k\theta_i + 3\theta_i)  ,\  0,\ 0 \right).  $$

Suppose that the $i$'th $\rho$-cycle is $(i_1,\ldots,i_d)$. Looking
at the preference vectors $\widehat{Q}_{i_j}$ and the position
vectors $\overline{P}_{i_j}$, we conclude that, for $j<d$, the
preference list of $Q_{i_j}$ will start $P_{i_j} P_{i_{j+1}}$. This
is consistent with our desired rotation~(\ref{eq:Q-pref-lists}). The
preference list of $Q_{i_d}$ will start $P_{i_d} \{P_{i_{d-1}}
\ldots P_{i_2}\} P_{i_1}$. We refer to the part  contained in $\{\}$
symbols as ``noise'' on the preference list.   We will have to show
that this ``noise'' does not introduce any extra rotations aside
from the ones we desire.

The preference vectors $\widehat{P}_{j}$
are $0$ in the first two coordinates.
If the $i$'th $\rho$-cycle is $(i_1,\ldots,i_d)$, then,
in the last two coordinates, for $j>1$,
$\widehat{P}_{i_j}$ is placed $1/3$ of the way along
between $\overline{Q}_{i_{j-1}}$ and $\overline{P}_{\psi(i_j)}$.
Then, $\widehat{P}_{i_1}$ is placed between
$\overline{Q}_{i_d}$ and $\overline{P}_{\psi(i_1)}$,
slightly nearer to $\overline{Q}_{i_d}$. Here is the definition.
For  $f_i \in Rep(\rho)$,  for  $0 \leq  k \leq q_i - 1$, set
$$
\widehat{P}_{\rho^{k+1} f_i} =  \Big(0,0,\ \cos(
\tfrac{2\pi(i-1)}{n}+2\theta'_i\sum_{j=0}^{k-1}2^{-j}  + \tfrac13
2^{-k} \theta'_i), \sin(
\tfrac{2\pi(i-1)}{n}+2\theta'_i\sum_{j=0}^{k-1} 2^{-j}  + \tfrac13
2^{-k} \theta'_i )\Big).
$$

Suppose that the $i$'th $\rho$-cycle is $(i_1,\ldots,i_d)$. Looking
at the preference vectors $\widehat{P}_{i_j}$ and the last two
coordinates of the position vectors, we conclude that, for $j>1$,
the   preference list of $P_{i_j}$ starts $ {Q}_{i_{j-1}}
{P}_{\psi(i_j)} Q_{i_j}$. This is consistent with the rotation
(\ref{eq:P-pref-lists}) except for the ${P}_{\psi(i_j)}$ in the
second position. This is by design, and will help to ensure the
desired precedences between the rotations. The preference list of
$P_{i_1}$ starts
$${Q}_{i_d} {P}_{\psi(i_1)}
\{
P_{\psi(i_d)}
{Q}_{i_{d-1}} \cdots
{P}_{\psi(i_3)}
{Q}_{i_2}
{P}_{\psi(i_2)} \}
Q_{i_1}\cdots,$$
where, once again, the  part of the list in $\{\}$ is ``noise'', which is not
desired, but will turn out to do no harm.

Having given our construction, we have to show that the rotations that get
exposed are exactly
the ones that we have identified in (\ref{eq:Q-pref-lists}) and (\ref{eq:P-pref-lists}).
Using this bijection between the rotations and $B\cup T$  we will then show that (G1)--(G4) are satisfied,
as required in Section~\ref{sec:task}.

We start by summarising the observations that we have made about the preference lists.
Consider one of the $\rho$-cycles, $\rho_i=(i_1, \ldots, i_d)$.
Using our position and preference vectors, we
can write down prefixes of   the preference lists   for the people $P_{i_1},\ldots,P_{i_d},Q_{i_1},\ldots,Q_{i_d}$ in this
$\rho$-cycle.
\begin{equation}
\begin{array}{c|ll}
 Q_{i_1} & P_{i_1} P_{i_2} \ \cdots \\
 Q_{i_2} & P_{i_2} P_{i_3} \ \cdots \\
 \ldots & \ldots \\
 Q_{i_{d-1}} & P_{i_{d-1}} P_{i_d} \ \cdots\\
 Q_{i_d} & P_{i_d} \{
P_{i_{d-1}} \cdots P_{i_2}
\}  P_{i_1} \ \cdots \\
 \end{array}
\label{eq:noisy-Q-pref-lists}
\end{equation}

\begin{equation}
\begin{array}{c|ll}
 P_{i_1} & Q_{i_d} {P}_{\psi(i_1)}
 \{
P_{\psi(i_d)}
{Q}_{i_{d-1}} \cdots
{P}_{\psi(i_3)}
{Q}_{i_2}
{P}_{\psi(i_2)} \} Q_{i_1} \ \cdots \\
 P_{i_2} & Q_{i_1} {P}_{\psi(i_2)} Q_{i_2} \ \cdots \\
 \ldots & \ldots \\
 P_{i_{d-1}} & Q_{i_{d-2}} {P}_{\psi(i_{d-1})} Q_{i_{d-1}} \ \cdots \\
 P_{i_d} & Q_{i_{d-1}} {P}_{\psi(i_d)} Q_{i_d}\ \cdots \\
 \end{array}
\label{eq:noisy-P-pref-lists}
\end{equation}

We have not given the entire preference lists, but only the parts that
are relevant for us.  As we show in the next section, the portion
of each preference list that is given above is the only part that remains after
Phase 1 of Irving's Roommate Algorithm.  In fact, additional
parts of some preference lists, specifically the parts listed in
braces, vanish during the execution of Phase 1.

\begin{example} The prefixes of the preference lists for the example
from Figure~\ref{fig:bip-graph1} are given in the second appendix.
\end{example}

 \begin{remark}
Strictly speaking, the construction that we have given does not fit
into our original definition of a roommate instance.  In particular,
with the position and preference vectors we have defined, we note
that $\widehat{Q}_i\cdot \overline{Q}_j = 0$ for all $i\not= j$.
This means that each $Q_i$ has a tie in his/her preference for all
of the other $Q_j$s ($j\not= i$).  However, it is easy to modify the
position vectors $\overline{Q}_j$ so that we have a strict
preference ordering for each person.  We can also do this in such a
way that the {\em beginning} of the preference lists of the $Q_i$s
is undisturbed.  We may, for example, pick some very small $\delta_j
> 0$ and assign the first two coordinates of $\overline{Q}_j$ to
equal $\delta_j$, resulting in a strict preference list for $Q_i$.
By choosing the $\delta_j$s small enough, we can get a strict
preference for all of the people without altering the beginning
segment of the preference list of each person.  In particular, as
long as the start of the preference lists given
in~(\ref{eq:noisy-Q-pref-lists}) and~(\ref{eq:noisy-P-pref-lists})
is maintained, this is sufficient for our purposes.
\end{remark}

\subsection{Enumerating rotations}\label{sect:finding-rotations}

In this section, we will establish the rotations in the stable
roommate instance $I$.
First, consider Phase 1. From (\ref{eq:noisy-Q-pref-lists}) and (\ref{eq:noisy-P-pref-lists})
 we see that, during Phase~1, each $Q_{i_j}$ will
become semi-engaged to $P_{i_j}$ and each $P_{i_j}$ will become
semi-engaged to $Q_{i_{j-1}}$.
Since the outcome of Phase~1 is independent of the order in which free people make proposals, we can
assume that these proposals occur in order.
Thus, the  suffixes of the preference lists that are omitted
from (\ref{eq:noisy-Q-pref-lists}) and (\ref{eq:noisy-P-pref-lists})
will disappear by the end of Phase~1. The purpose of this section is to show that, by the end of Phase 1,
the preference lists look like (\ref{eq:unnoisy-Q-pref-lists}) and (\ref{eq:unnoisy-P-pref-lists}).

By construction, the length of each cycle is even.
If $d=2$ then the preference list of $Q_{i_d}$ in (\ref{eq:noisy-Q-pref-lists})
has no noise. Otherwise,
$P_{i_2},\ldots,P_{i_{d-1}}$ do not have $Q_{i_d}$
on their short lists in (\ref{eq:noisy-P-pref-lists})
so, since the process of removing people from each other's lists is symmetric,
we can conclude that, by the end of Phase~1, these people are removed from the short list of $Q_{i_d}$.

If $d=2$ then the preference list of $P_{i_1}$
has prefix $Q_{i_2} P_{\psi(i_1)}  P_{\psi(i_2)}  Q_{i_1}$.
Since $i_1$ is the representative of the $\rho$-cycle $(i_1,i_2)$, it is odd,
so $i_2$ is even and therefore $\psi(i_2)$ is even, and is therefore not the representative
of any $\rho$-cycle.
Thus, the preference list of $P_{\psi(i_2)} = P_{\psi(i_d)}$
starts with $
Q_{\rho^{-1}(\psi(i_d))}
P_{\psi^2(i_d)} Q_{\psi(i_d)} = Q_{\rho^{-1}(\psi(i_d))} P_{i_d}
Q_{\psi(i_d)}$.
$P_{i_1}$ comes after this prefix on the preference list.
Thus, $P_{i_1}$ is removed from this preference list by the end of Phase~1.
Symmetrically, $P_{\psi(i_d)}$ is   removed from the preference list of $P_{i_1}$ by the end of Phase~1.

Now suppose $d>2$.
Suppose that $(i_1,\ldots,i_d)$ is the $k$'th $\rho$-cycle.
Now
$Q_{i_2},\ldots,Q_{i_{d-1}}$ do not have $P_{i_1}$ on their short lists
in (\ref{eq:noisy-Q-pref-lists}) so we can conclude that, by the end of Phase~1, these people
are removed from the short list of $P_{i_1}$.
For $\ell\in\{2,\ldots,d\}$, consider
the person $P_{\psi(i_\ell)}$, which is on the short list of $P_{i_1}$ in
(\ref{eq:noisy-P-pref-lists}). If $\psi(i_\ell)$ is not the representative of a $\rho$-cycle,
then, using the same argument that
we used in the $d=2$ case, the preference list of $P_{\psi(i_\ell)}$
starts with
$ Q_{\rho^{-1}(\psi(i_\ell))} P_{i_\ell}
Q_{\psi(i_\ell)}$, so  $P_{i_1}$ follows this prefix and $P_{\psi(i_\ell)}$ is removed from the preference list of $P_{i_1}$ by the end of Phase~1.
Suppose that
$\psi(i_\ell)$ is the representative of a $\rho$-cycle, say the $t$'th $\rho$-cycle.
To ease notation, suppose that $\psi(i_\ell) = j_1$ and that $\rho_t = (j_1,\ldots,j_{d'})$.
\From (\ref{eq:noisy-P-pref-lists}),
we know that the preference list of~$P_{j_1}$ starts with the following prefix
\begin{equation}\label{eq:temptemp}   Q_{j_{d'}} {P}_{\psi(j_1)}
 \{
P_{\psi(j_{d'})} {Q}_{j_{d'-1}} \cdots {P}_{\psi(j_3)} {Q}_{j_2}
{P}_{\psi(j_2)} \} Q_{j_1}.\end{equation} Then $\psi(j_1)= i_\ell$
is a member of the $t$-th $\sigma$-cycle and the $k$'th
$\rho$-cycle. $ \psi(i_\ell)$ is odd, so $i_\ell$ is odd, so
$i_\ell$ and $i_\ell+1 (= i_{\ell + 1})$ are the only two elements
that are in both of these cycles. We conclude that $i_1$ is not in
both of these cycles, so it is not in the $t$'th $\sigma$-cycle.
Thus, none of the people $P_{\psi(j_{d'})}
  \cdots
{P}_{\psi(j_3)}
 {P}_{\psi(j_2)}$ on the prefix (\ref{eq:temptemp}), all of whom are in the $t$'th $\sigma$-cycle,
 is equal to $P_{i_1}$. Therefore, $P_{i_1}$ comes after the prefix depicted in (\ref{eq:temptemp}).
As above, we can conclude that
$P_{\psi(i_\ell)}$ is removed from the preference list of $P_{i_1}$ by the end of Phase~1.

Hence, after Phase~1 the  short lists look like this.
\begin{equation}
\begin{array}{c|ll}
 Q_{i_1} & P_{i_1} P_{i_2} \\
 Q_{i_2} & P_{i_2} P_{i_3} \\
 \ldots & \ldots \\
 Q_{i_{d-1}} & P_{i_{d-1}} P_{i_d}\\
 Q_{i_d} & P_{i_d}   P_{i_1}\\
 \end{array}
 \label{eq:unnoisy-Q-pref-lists}
 \end{equation}
\begin{equation}
 \begin{array}{c|ll}
 P_{i_1} & Q_{i_d} {P}_{\psi(i_1)} Q_{i_1} \\
 P_{i_2} & Q_{i_1} {P}_{\psi(i_2)} Q_{i_2} \\
 \ldots & \ldots \\
 P_{i_{d-1}} & Q_{i_{d-2}} {P}_{\psi(i_{d-1})} Q_{i_{d-1}}\\
 P_{i_d} & Q_{i_{d-1}} {P}_{\psi(i_d)} Q_{i_d}\\
\end{array}
\label{eq:unnoisy-P-pref-lists}
\end{equation}

\begin{example}
The short lists for the example from Figure~\ref{fig:bip-graph1} are given in the second appendix.
\end{example}

After Phase~1, we know that for each $\rho$-cycle,
each $Q_{i_j}$ is semi-engaged to $P_{i_j}$ and each $P_{i_j}$ is
semi-engaged to $Q_{i_j-1}$.
According to Section~\ref{sec:task} we now need to identify the rotations, and   to establish
conditions (G\ref{finalgoalone})--(G\ref{finalgoalfour}).
Then we are finished.

\begin{lemma}\label{lem:bottom-rotations}
Suppose the $n$ $\rho$-cycles of the bipartite
graph $K$ are
$\rho_1, \rho_2, \ldots, \rho_n$, where
$\rho_j = (i_{j-1}+1,\ldots, i_{j})$ for $j \in [n]$, with $i_0 = 0$ and $i_{n} = 2m$.

The shortlists obtained after Phase~1 have exactly $n$ exposed rotations,
$R_1$ through $R_n$, where $R_j = (E_j, H_j, S_j)$,
\begin{eqnarray*}
  E_j & = & \{Q_{i_{j-1}+1}, Q_{i_{j-1}+2}, \ldots, Q_{i_j}\}, \\
  H_j & = & \{P_{i_{j-1}+1}, P_{i_{j-1}+2}, \ldots, P_{i_j}\}, \textrm{ and } \\
  S_j & = & \{P_{i_{j-1}+2}, P_{i_{j-1}+3}, \ldots, P_{i_j}, P_{i_{j-1}+1}\}.
\end{eqnarray*}
\end{lemma}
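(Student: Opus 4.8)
The plan is to verify directly from the Phase~1 short lists~(\ref{eq:unnoisy-Q-pref-lists}) and~(\ref{eq:unnoisy-P-pref-lists}) that each claimed $R_j$ is a rotation in the sense of Definition~\ref{def:rotation}, and that these are the \emph{only} rotations exposed. First I would fix one $\rho$-cycle $\rho_j=(i_1,\ldots,i_d)$ (with the notation of the lemma, $i_k = i_{j-1}+k$) and recall that in these short lists the head of $Q_{i_k}$'s list is $h_{Q_{i_k}}=P_{i_k}$ and its second entry is $s_{Q_{i_k}}=P_{i_{k+1}}$ (indices mod $d$, with $P_{i_{d+1}}=P_{i_1}$). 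So taking $E_j=\{Q_{i_1},\ldots,Q_{i_d}\}$ in this cyclic order, the defining condition $s_{Q_{i_k}}=h_{Q_{i_{k+1}}}$ reads $P_{i_{k+1}}=P_{i_{k+1}}$, which holds, and the wrap-around $s_{Q_{i_d}}=h_{Q_{i_1}}$ reads $P_{i_1}=P_{i_1}$. Hence $R_j=(E_j,H_j,S_j)$ with $H_j,S_j$ as stated is a genuine exposed rotation. This gives $n$ exposed rotations, one per $\rho$-cycle.

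The substance is to show there are \emph{no other} exposed rotations in the Phase~1 table. A rotation must be a cyclic sequence $e_1,\ldots,e_k$ with $s_{e_r}=h_{e_{r+1}}$ cyclically, so the rotation is forced to ``walk'' along the second-entry/head-entry links. The key step is to trace all such links in~(\ref{eq:unnoisy-Q-pref-lists})--(\ref{eq:unnoisy-P-pref-lists}). For each $Q_{i_k}$ the link goes $Q_{i_k}\to P_{i_{k+1}}$ via $s_{Q_{i_k}}=h_{P_{i_{k+1}}}$? — here I must be careful: a rotation alternates using the \emph{second} entry of one person as the \emph{head} of the next, so I need to check, for each person~$x$, who has $x$ as their head. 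Examining the lists: $P_{i_k}$ is the head of $Q_{i_k}$ only (among the people in this cycle), and $Q_{i_k}$ is the head of $P_{i_{k+1}}$ only. The $P$-lists have the form $P_{i_k}: Q_{i_{k-1}}\ P_{\psi(i_k)}\ Q_{i_k}$, so the second entry of $P_{i_k}$ is $P_{\psi(i_k)}$, which lies in a \emph{different} $\rho$-cycle (since $\psi$ maps into $\sigma$-cycles), and one checks $P_{i_k}$ is never the head of $P_{\psi(i_k)}$'s list — indeed $P_{\psi(i_k)}$'s head is some $Q$. So no rotation can use a $P$-person's second entry to continue. Therefore every exposed rotation consists entirely of $Q$-people, and once it contains $Q_{i_k}$ it is forced to contain $Q_{i_{k+1}}$ and cyclically close up to exactly $R_j$. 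I would also note that a rotation could in principle be a ``$P$-rotation'' — but the head of each $P_{i_k}$ is a $Q$, whose second entry is a $P$, so chasing that link returns us to $Q$-people; the only consistent cyclic structures are the $R_j$. This rules out everything else.

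I expect the main obstacle to be bookkeeping the indices cleanly — in particular being careful about the cyclic wrap-around ($i_{d+1}=i_1$) and about the fact that $\psi(i_k)$ leaves the current cycle, so the ``noise'' people $P_{\psi(i_k)}$ sitting as second entries genuinely cannot be re-used as heads within any rotation. A secondary subtlety is to confirm that $R_j^d$, the dual of $R_j$, is precisely the rotation~(\ref{eq:P-pref-lists}) exposed \emph{after} eliminating $R_j$ — but that belongs to the next stage of the argument (establishing (G1)), so for this lemma I would simply stop once the $n$ rotations $R_1,\ldots,R_n$ and their non-existence of competitors are established, closing with the observation that this matches the forms~(\ref{eq:Q-pref-lists}) and~(\ref{eq:unnoisy-Q-pref-lists}).
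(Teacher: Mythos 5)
Your overall strategy is the same as the paper's: read the rotations $R_j$ off the Phase-1 short lists (\ref{eq:unnoisy-Q-pref-lists})--(\ref{eq:unnoisy-P-pref-lists}), then rule out any other exposed rotation by chasing the successor links $s_{e_i}=h_{e_{i+1}}$. The first half of your argument (each $R_j$ satisfies Definition~\ref{def:rotation}, and the successor of $Q_{i_k}$ is forced to be $Q_{i_{k+1}}$, so the $Q$-chains close up to exactly the $R_j$) is correct.

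The step that excludes $P$-people is garbled, and it is the crux of the ``exactly $n$'' claim. You assert that ``no rotation can use a $P$-person's second entry to continue,'' justified by checking whether $P_{i_k}$ is the head of $P_{\psi(i_k)}$'s list. That is the wrong link: the successor of $P_{i_k}$ in a rotation is the unique person whose \emph{head} equals $s_{P_{i_k}}=P_{\psi(i_k)}$, and such a person does exist --- it is $Q_{\psi(i_k)}$, whose short list begins with $P_{\psi(i_k)}$. So the chain \emph{does} continue past a $P$-person; what fails is closure. The repair (which is the paper's argument) is: once the chain reaches $Q_{\psi(i_k)}$ it is trapped among the $Q$-people of the $\rho$-cycle containing $\psi(i_k)$, since every $Q$'s successor is another $Q$; hence the ordered set $E$ containing $Q_{\psi(i_k)}$ must coincide with that all-$Q$ set $E_k$, contradicting $P_{i_k}\in E$. (Equivalently and more quickly: every second entry in the Phase-1 table is a $P$-person, while the head of every $P$-person is a $Q$-person, so no $P$-person can have a predecessor in any rotation.) Your closing remark about ``chasing that link returns us to $Q$-people'' gestures at this, but as written the deduction that every exposed rotation consists entirely of $Q$-people rests on a false intermediate claim; with that one step repaired the proof matches the paper's.
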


\begin{proof}
By construction,
$Rep(\rho) = \{i_0+1, i_1+1,\cdots,i_{n-1}+1\}$.
\From (\ref{eq:unnoisy-Q-pref-lists}) and (\ref{eq:unnoisy-P-pref-lists}), the short lists of the $Q_{*}$ and the
$P_{*}$ people after Phase~1 can be summarized as follows:
\begin{align*}
\textrm{For}\ j &\in Rep(\sigma), \\
\nonumber Q_{\rho^k j} \;\;&:\;\;\; P_{\rho^k j} P_{\rho^{k+1}
  j}\;\;,\;\;\; 0\leq k \leq q_j-1, \\
\nonumber P_{\rho^{k} j} \;\;&:\;\;\; Q_{\rho^{(k-1)} j}
P_{\psi(\rho^{k} j)}Q_{\rho^{k} j}\;\;,\;\;\; 0\leq k \leq q_j-1.
\end{align*}

We observe that the $Q_{*}$ people whose indices belong to
$\rho_j$, along with their first and second preferences, form an exposed
rotation $R_j = (E_j, H_j, S_j)$, where $E_j, H_j$, and $S_j$ are
as defined in the statement of the lemma.
This is   a rotation as displayed in (\ref{eq:Q-pref-lists}).
Therefore, at the end of
Phase~1, the stable roommate instance $I$ has {\em at least} $n$ exposed
rotations.  Further, each $Q_{*}$ person appears in one of these $n$ rotations,
meaning that each $Q_{*}$ appears in one of the sets $E_j$.

Next, we show that the above $n$ rotations are the {\em only} rotations
that are exposed after Phase~1. Suppose person $P_i$ is in an
exposed rotation $R =(E, H, S)$, i.e.\ $P_i \in E$. The initial part
of the preference list of $P_i$ at the end of Phase~1 is
$Q_{\rho^{-1} i} P_{\psi(i)}$.  Hence, by definition, the next person
in the ordered set $E$ is $Q_{\psi(i)}$ who is currently
semi-engaged to $P_{\psi(i)}$. Suppose $\psi(i)$ belongs to the $k$th
$\rho$-cycle $\rho_k$. We know that $Q_{\psi(i)}$ is part of the exposed
rotation $R_k = (E_k, H_k, S_k)$, i.e.\ $Q_{\psi(i)} \in E_k = \{Q_j
: j \in \rho_k\}$. This implies that $P_i \in E_k$ which is not
possible.

Hence, if $R = (E, H, S)$ is an exposed rotation in the
table obtained after Phase~1, then $P_i \notin E$ for all $i$.
Therefore, there are exactly $n$ exposed rotations $R_1, R_2,
\cdots, R_n$ at the end of Phase~1.
\end{proof}

\begin{example}
For the example  from Figure~\ref{fig:bip-graph1}, there are $4$ exposed rotations,
$R_1=\{E_1,H_1,S_1\}$,
$R_2=\{E_2,H_2,S_2\}$,
$R_3=\{E_3,H_3,S_3\}$ and
$R_4=\{E_4,H_4,S_4\}$.  These are
given as follows.
\begin{eqnarray*}
  E_1 & = & \{Q_1,Q_2\}, \\
  H_1 & = & \{ P_1,P_2\}, \\
  S_1 & = & \{ P_2, P_1\}, \\
  E_2 & = & \{Q_3, Q_4, Q_5, Q_6, Q_7, Q_8\}, \\
  H_2 & = & \{P_3, P_4, P_5, P_6, P_7, P_8\}, \\
  S_2  & = & \{ P_4, P_5, P_6, P_7, P_8, P_3\},\\
  E_3 & = & \{Q_9, Q_{10}, Q_{11}, Q_{12} \}, \\
  H_3 & = & \{ P_9, P_{10}, P_{11}, P_{12}\}, \\
  S_3  & = & \{ P_{10}, P_{11}, P_{12}, P_9 \},\\
    E_4 & = & \{Q_{13}, Q_{14}, Q_{15}, Q_{16} \}, \\
  H_4 & = & \{ P_{13}, P_{14}, P_{15}, P_{16} \}, \\
  S_4  & = & \{  P_{14}, P_{15}, P_{16}, P_{13} \}.\\
    \end{eqnarray*}

\end{example}

Now we show that each of the $n$ rotations from Lemma~\ref{lem:bottom-rotations}
has a dual rotation of the form~(\ref{eq:P-pref-lists}).

\begin{lemma}\label{lem:top-rotations}
For $j \in [n]$, there is a dual rotation $R^d_{j} = \{E^d_{j},
H^d_{j}, S^d_{j}\}$ corresponding to rotation $R_j$ from Lemma~\ref{lem:bottom-rotations},
where

\begin{align*}
  E^d_j & =  S_j = \{P_{i_{j-1}+2}, P_{i_{j-1}+3}, \ldots, P_{i_j}, P_{i_{j-1}+1}\}, \\
  H^d_j & =  E_j = \{Q_{i_{j-1}+1}, Q_{i_{j-1}+2}, \ldots, Q_{i_j}\}, \\
  S^d_j & = E_j^r =
  \{Q_{i_{j-1}+2}, \ldots, Q_{i_j},  Q_{i_{j-1}+1}\}. \\
\end{align*}
\end{lemma}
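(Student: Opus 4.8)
The plan is to realise the candidate $R_j^d$ as a rotation that is exposed in some shortlist table reachable by the two-phase algorithm; Definition~\ref{def:dual-rotation} then does the rest. Fix $j\in[n]$ and, following the local convention used above, write the $j$-th $\rho$-cycle as $\rho_j=(i_1,\dots,i_d)$, with indices read cyclically ($i_0:=i_d$, $i_{d+1}:=i_1$). By Definition~\ref{def:dual-rotation} the dual of $R_j=(E_j,H_j,S_j)$ can only be the triple $(S_j,E_j,E_j^r)$, and writing this out with the $E_j,S_j$ of Lemma~\ref{lem:bottom-rotations} yields exactly the $E_j^d,H_j^d,S_j^d$ in the statement. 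So it suffices to prove that $(S_j,E_j,E_j^r)$ is genuinely a rotation.

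To do this I would start from the post-Phase-1 shortlists (\ref{eq:unnoisy-Q-pref-lists})--(\ref{eq:unnoisy-P-pref-lists}), in which (Lemma~\ref{lem:bottom-rotations}) every $R_1,\dots,R_n$ is exposed, and eliminate $R_t$ for all $t\neq j$, one at a time in an arbitrary order. Reading Definition~\ref{def:rotation-elimination}, eliminating a single $R_t$ changes only the shortlists of people whose index lies in $\rho_t$ or in its image $\psi(\rho_t)$ (the $t$-th $\sigma$-cycle), and moreover it strips $Q$-entries only from the shortlists of the $P$-people of $\rho_t$. In particular no $Q$-person outside $\rho_t$ has its shortlist disturbed, so each $R_{t'}$ with $t'\neq t$ remains exposed and the eliminations can legitimately be carried out in sequence; and afterwards the $Q$-people of cycle $j$ still have shortlists $Q_{i_m}\colon P_{i_m}\,P_{i_{m+1}}$, so $R_j$ is still exposed as well.

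The more delicate point is to track the shortlist $Q_{i_{m-1}}\,P_{\psi(i_m)}\,Q_{i_m}$ of $P_{i_m}$. Let $t(m)$ be the $\rho$-cycle containing $\psi(i_m)$. Since $\psi$ is an involution carrying $\rho_j$ onto the $j$-th $\sigma$-cycle, and since (K1) forces $\rho_j$ and the $j$-th $\sigma$-cycle to be disjoint, we get $\psi(i_m)\neq i_m$ and $t(m)\neq j$, so $R_{t(m)}$ is among the rotations we eliminate. The person $P_{\psi(i_m)}$ is one of the $S$-people of $R_{t(m)}$, and because $R_j$ itself is never eliminated the ``noise'' entry $P_{\psi(\psi(i_m))}=P_{i_m}$ is still present on $P_{\psi(i_m)}$'s shortlist when $R_{t(m)}$ is eliminated. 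Hence, by Definition~\ref{def:rotation-elimination}, eliminating $R_{t(m)}$ truncates $P_{\psi(i_m)}$'s shortlist and, as a side effect, deletes $P_{\psi(i_m)}$ from $P_{i_m}$'s shortlist; this is the only change any elimination among $\{R_t:t\neq j\}$ makes to $P_{i_m}$'s shortlist, because such eliminations strip $Q$-entries only from $P$-people lying inside the cycle being eliminated and $i_m\notin\rho_t$ for $t\neq j$. So after all the eliminations the shortlist of $P_{i_m}$ is exactly $Q_{i_{m-1}}\,Q_{i_m}$.

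Putting the two together, the table reached after eliminating all $R_t$ with $t\neq j$ restricts, on cycle $j$, to $Q_{i_m}\colon P_{i_m}\,P_{i_{m+1}}$ and $P_{i_m}\colon Q_{i_{m-1}}\,Q_{i_m}$ for every $m$. Reading the ordered set $S_j=(P_{i_2},\dots,P_{i_d},P_{i_1})$ off this table, the head of $P_{i_{k+1}}$ is $Q_{i_k}$ and its second is $Q_{i_{k+1}}$, which is precisely the condition of Definition~\ref{def:rotation} for $(S_j,E_j,E_j^r)$ to be an exposed rotation. Hence $R_j^d:=(S_j,E_j,E_j^r)$ is a rotation and $\{R_j,R_j^d\}$ is a dual pair with the stated $E_j^d,H_j^d,S_j^d$. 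The main obstacle is exactly the bookkeeping of the third paragraph: verifying from the mechanics of rotation elimination, and from properties (K1)--(K2) of $K$, that the side effects of the eliminations peel off precisely the one ``noise'' entry $P_{\psi(i_m)}$ from each $P_{i_m}$ and disturb nothing else on that shortlist.
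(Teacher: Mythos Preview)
Your proof is correct and follows essentially the same template as the paper's: start from the post--Phase~1 table, eliminate a set of the rotations $R_t$, and verify that afterwards each $P_{i_m}$'s shortlist has become $Q_{i_{m-1}}\,Q_{i_m}$, so that $R_j^d$ is exposed.

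The one genuine difference is \emph{which} set of rotations you eliminate. You eliminate all $R_t$ with $t\neq j$; the paper eliminates only the $q_j/2$ rotations $R_{t_1},\dots,R_{t_{q_j/2}}$ indexed by those $\rho$-cycles that meet $\sigma_j$ (equivalently, the cycles containing some $\psi(i_m)$). Both choices work for this lemma. Your larger elimination set has the advantage that you never need to identify the specific $t_k$'s or invoke the intersection structure $|\rho_t\cap\sigma_j|\in\{0,2\}$; the bookkeeping reduces to observing that each $P_{\psi(i_m)}$ lies in some $\rho_{t(m)}$ with $t(m)\neq j$, which your use of (K1) handles cleanly. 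The paper's minimal set, on the other hand, is exactly the set that reappears in Lemma~\ref{lem:exp-prec} as the collection of rotations that \emph{explicitly precede} $R_j^d$, so identifying it here does double duty and streamlines the later argument about the partial order~$\Pi^*$.
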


\begin{proof}
First we note that $R_j^d$ as defined above has the {\em form} of a
rotation and will be the dual to $R_j$ {\em provided}  that  there
is some sequence of rotations that can be performed that leads to
$R_j^d$ being exposed in the resulting table. We show that there is
such a sequence of rotations.

Let
$\rho_j = (i_{j-1}+1, i_{j-1}+2, \cdots, i_{j})$
denote the $\rho$-cycle
corresponding to the
rotation $R_j$, where $q_j = i_j - i_{j-1} = |\rho_j|$.
The $j$'th $\sigma$-cycle, $\sigma_j$,
also has $|\sigma_j| = q_j$,
so write $\sigma_j = (\ell_1, \ldots, \ell_{q_j}) =
(\psi(i_{j-1}+1), \psi(i_{j-1}+2), \cdots, \psi(i_{j}))$.
Recall that
$\rho_j \cap \sigma_j = \emptyset$
and that, for any $r\in[n]$,
$|\rho_j \cap \sigma_r| \in \{0,2\}$.

Given $\sigma_j$,
there are {\em exactly} $q_j/2$ distinct  $\rho$-cycles
$\rho_{t_1}, \ldots, \rho_{t_{q_j/2}}$
such that, for $k\in\{1,\ldots,q_j/2\}$,
$\ell_{2k-1}\in \rho_{t_k}$ and
$\ell_{2k} \in \rho_{t_k}$.

Consider the table $T$ obtained at the end of Phase 1. We claim that
the elimination of rotations $R_{t_1}, \ldots, R_{t_{q_j/2}}$ from
$T$, exposes the (proposed) dual rotation $R_j^d$. After the
elimination of those rotations, every $Q_r$ and $P_r$, for $r\in
\rho_{t_1}\cup \rho_{t_2}\cup\cdots\cup\rho_{t_{q_j/2}}$, will have
only one person on  his resulting short list.

Since, for $k\in\{1,\ldots,q_j/2\}$, $\ell_{2k-1}$ and $\ell_{2k}$ are in $\rho_{t_k}$,
$P_{\ell_{2k-1}}$ has $P_{\psi(\ell_{2k-1})}$  on his short-list prior to the elimination, but not subsequently.
Similarly, $P_{\ell_{2k}}$ has $P_{\psi(\ell_{2k})}$ on his short-list prior to the elimination, but not subsequently.
 Thus, the elimination removes
 $P_{\ell_{2k-1}}=P_{\psi(i_{j-1}+2k-1)}$ from the short list of  $P_{i_{j-1} + 2k-1} = P_{\psi(\ell_{2k-1})}$
 and it removes
 $P_{\ell_{2k}}=P_{\psi(i_{j-1}+2k)}$ from the short list of  $P_{i_{j-1}+2k} = P_{\psi(\ell_{2k})}$.

After eliminating all of the rotations $R_{t_1}, \ldots, R_{t_{q_j/2}}$,
it follows that the
preference list of $P_r$, for $r\in\rho_j$, will be $Q_{\rho^{-1}r}Q_r$.
This means  that $R_j^d$ is exposed in the resulting table, showing
that $R_j$ has a dual rotation as desired.
\end{proof}

\begin{example}
The dual
rotations
for the example  from Figure~\ref{fig:bip-graph1} are given as follows.
$R^d_1=\{E^d_1,H^d_1,S^d_1\}$,
$R^d_2=\{E^d_2,H^d_2,S^d_2\}$,
$R^d_3=\{E^d_3,H^d_3,S^d_3\}$ and
$R^d_4=\{E^d_4,H^d_4,S^d_4\}$.  These are
given as follows.
\begin{eqnarray*}
  E^d_1 & = & \{ P_2, P_1\} , \\
  H^d_1 & = &  \{Q_1,Q_2\}, \\
  S^d_1 & = &   \{Q_2, Q_1\},\\
  E^d_2 & = &   \{ P_4, P_5, P_6, P_7, P_8, P_1\}, \\
  H^d_2 & = &   \{Q_3, Q_4, Q_5, Q_6, Q_7, Q_8\},\\
  S^d_2  & = &  \{Q_4, Q_5, Q_6, Q_7, Q_8,Q_3\},\\
  E^d_3 & = &  \{ P_{10}, P_{11}, P_{12}, P_9 \}, \\
  H^d_3 & = &  \{Q_9, Q_{10}, Q_{11}, Q_{12} \}, \\
  S^d_3  & = & \{  Q_{10}, Q_{11}, Q_{12},Q_9 \}, \\
  E^d_4 & = &  \{  P_{14}, P_{15}, P_{16}, P_{13} \}, \\
  H^d_4 & = &   \{Q_{13}, Q_{14}, Q_{15}, Q_{16} \}, \\
  S^d_4  & = &  \{Q_{14}, Q_{15}, Q_{16} ,Q_{13}\}.\\
    \end{eqnarray*}

\end{example}

We state one more structural property of the rotation poset, previously
proved in~\cite{gusfield-structure}.

\begin{lemma}~\cite[Lemma 3.5]{gusfield-structure}\label{gus-RR'}
If $R = (E, H, S)$ and $R' = (E', H', S')$ are two distinct
rotations exposed in a table $T$, then $R$ removes $R'$ from $T$ if
and only if $R' = R^d$. Hence, the only way to remove an exposed
rotation is to explicitly eliminate it or its dual rotation, if it
has one.
\end{lemma}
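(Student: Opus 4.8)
The plan is to read ``$R$ removes $R'$ from $T$'' as: $R'$ is exposed in $T$ but, after eliminating $R$ to obtain a table $T'$, $R'$ is no longer exposed, and to prove the two implications of the ``if and only if'' separately. Throughout I would use two standard invariants of the tables arising in Irving's algorithm: the short lists are symmetric (Property~\ref{prop-1}), and every person is semi-engaged, so whenever $A$ heads $B$'s list, $B$ is the bottom of $A$'s list. Combined with the definition of an exposed rotation, the second invariant gives the fact I lean on repeatedly, call it $(\ast)$: if $(e_1,\ldots,e_k)$ ($k\geq 2$, indices taken mod $k$) is exposed in $T$, then the bottom of $s_i$'s list is exactly $e_{i+1}$, because $e_{i+1}$ is semi-engaged to $h_{i+1}=s_i$.

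For the ``if'' direction, suppose $R'=R^d=(S,E,E^r)$. Exposedness of $R^d$ in $T$ forces the head of $s_i$'s list to be $e_i$, and together with $(\ast)$ this pins $s_i$'s list down to the two-element list $(e_i,e_{i+1})$. Eliminating $R$ moves the bottom of each $s_i$'s list up to $e_i$, leaving $s_i$ with the singleton list $(e_i)$; so every $s_i$ is matched in $T'$ and $R'=R^d$ is not exposed there.

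For the ``only if'' direction, I would first record that eliminating $R$ changes the short lists only by truncating each $s_i$'s list at $e_i$ and, for every person $p$ strictly below some $e_i$ in $s_i$'s list, deleting the entry $s_i$ from $p$'s list. Since $R'$ goes from exposed to not exposed, some $e'_j$ loses its correct head or second, and tracing the two kinds of change shows this is possible only if, for some $i$, one of the following holds: (I)~$e'_j=s_i$ and $h'_j=e_i$; (II)~$h'_j=s_i$ with $e'_j$ strictly below $e_i$ on $s_i$'s list; (III)~$s'_j=s_i$ with $e'_j$ strictly below $e_i$ on $s_i$'s list. I would rule out (II) and (III): in (II), $e'_j$ is semi-engaged to $h'_j=s_i$, so by $(\ast)$ $e'_j$ is the bottom of $s_i$'s list, namely $e_{i+1}\in E$; then $e'_j$ lies on both cycles with a common head and common second, and chasing the rotation equations of $R$ and $R'$ (using $(\ast)$ at each step) gives $e'_{j+m}=e_{i+1+m}$ for all $m$, hence $R'=R$, contradicting $R\neq R'$; case (III) gives $h'_{j+1}=s'_j=s_i$, hence $e'_{j+1}=e_{i+1}\in E$, leading to the same contradiction. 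So (I) must hold.

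Finally, from (I) --- $e'_{j_0}=s_{i_0}$ with $h'_{j_0}=e_{i_0}$ --- I would propagate backwards around $R'$: semi-engagement of $e'_{j_0}=s_{i_0}$ to $e_{i_0}$ plus exposedness of $R$ collapses $e_{i_0}$'s list to $(s_{i_0-1},s_{i_0})$; then the predecessor $e'_{j_0-1}$, whose second is $h'_{j_0}=e_{i_0}$, is forced by symmetry onto that two-element list and equals $s_{i_0-1}$, $(\ast)$ makes $s_{i_0-1}$'s list have length two with head $e_{i_0-1}$, and semi-engagement collapses $e_{i_0-1}$'s list to $(s_{i_0-2},s_{i_0-1})$, so the induction continues. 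Going all the way around shows $R'$ runs exactly through the people of $S$, with $s_i$ having head $e_i$ and second $e_{i+1}$, i.e.\ $R'=(S,E,E^r)=R^d$. The final sentence of the lemma then follows immediately, since $R$ removing $R'$ with $R\neq R'$ gives $R'=R^d$, equivalently $R=R'^d$. I expect this backwards propagation to be the delicate part: one must verify at each step that the lists genuinely collapse to length two (which is exactly where $(\ast)$ and symmetry are used) and that the chain closes up consistently around all of $R'$, which holds because the elements of $E'$ and of $S$ are distinct and the propagation pairs them bijectively.
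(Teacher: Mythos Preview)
The paper does not actually prove this lemma: it is quoted verbatim as Lemma~3.5 of Gusfield~\cite{gusfield-structure} and used as a black box in the proofs of Lemmas~\ref{lem:allrotations} and~\ref{lem:exp-prec}. So there is no in-paper argument to compare against; your proposal is supplying a proof where the authors simply cite one.

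That said, your argument is sound in outline and close to how the result is established in the original source. A few places deserve a little more care. In the backward propagation you assert that ``$(\ast)$ makes $s_{i_0-1}$'s list have length two with head $e_{i_0-1}$''; $(\ast)$ together with $s'_{j_0-1}=e_{i_0}$ indeed forces the list to length two, but identifying the head as $e_{i_0-1}$ needs one more application of symmetry: since $s_{i_0-1}$ is the second entry on $e_{i_0-1}$'s list, $e_{i_0-1}$ appears on $s_{i_0-1}$'s two-element list, and distinctness of the $e_i$ rules out $e_{i_0-1}=e_{i_0}$. You allude to this in your closing caveat, but it should be made explicit at the inductive step. For the closure, you should also note that the map $m\mapsto s_{i_0-m}$ is injective on $\{0,\ldots,|R|-1\}$ (because distinct $e_i$ force distinct $h_i$ and hence distinct $s_i$), so matching the period of $R'$ against that of $R$ gives $|R|=|R'|$ cleanly. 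Finally, in ruling out case~(III) you use $e'_{j+1}\ne e'_j$, which tacitly assumes $|R'|\ge 2$; this is automatic since head and second are always distinct, but it is worth saying once. With these small clarifications your proof is complete.
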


\begin{lemma}
The only rotations associated with the stable roommate instance $I$
are $R_j$ and $R^d_j$ for $j \in [n]$.
\label{lem:allrotations}
\end{lemma}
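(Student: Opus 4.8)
The plan is to describe explicitly every table that can occur during any run of the two‑phase algorithm on~$I$ and then to read off the exposed rotations. By Lemma~\ref{lem:bottom-rotations} the table obtained after Phase~1 has exactly the exposed rotations $R_1,\dots,R_n$, so it suffices to track Phase~2, which repeatedly eliminates an exposed rotation. I will argue by induction on the number of eliminations, maintaining the invariant that in every reachable table each $\rho$-cycle $\rho_j=(i_1,\dots,i_d)$ (elements listed cyclically) is in one of three states. In state~A (neither $R_j$ nor $R^d_j$ has been eliminated) the short list of $Q_{i_k}$ is $P_{i_k}\,P_{i_{k+1}}$ for every~$k$, and the short list of $P_{i_k}$ is $Q_{i_{k-1}}\,[\,P_{\psi(i_k)}\,]\,Q_{i_k}$, where the ``noise'' entry $P_{\psi(i_k)}$ is present if and only if the rotation $R_t$ of the $\rho$-cycle containing $\psi(i_k)$ has not been eliminated. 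In state~B ($R_j$ eliminated) and state~C ($R^d_j$ eliminated) every person of~$\rho_j$ has a singleton short list. The base of the induction is exactly the post-Phase-1 lists~(\ref{eq:unnoisy-Q-pref-lists})--(\ref{eq:unnoisy-P-pref-lists}), with every cycle in state~A and all noise present.

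For the inductive step I use Lemma~\ref{gus-RR'}: only an exposed rotation can be eliminated, and doing so removes from the set of exposed rotations only that rotation and its dual. Granting the invariant, one checks that the rotations exposed in the current table are precisely $R_j$ for each state-A cycle (its exposedness depends only on the $Q$-lists, which are noise-insensitive, giving a rotation of the shape~(\ref{eq:Q-pref-lists})) and $R^d_j$ for each state-A cycle all of whose $P$-lists are currently noise-free (giving a rotation of the shape~(\ref{eq:P-pref-lists}), as in Lemma~\ref{lem:top-rotations}). One then verifies how each possible elimination updates the table: eliminating $R_j$ moves $\rho_j$ into state~B and, as its only effect outside $\rho_j$, deletes $P_{i_k}$ from the list of each $P_{\psi(i_k)}$ --- that is, it removes exactly the noise entries justified by the invariant; eliminating $R^d_j$ (possible only from state~A with no noise) moves $\rho_j$ into state~C and affects no other cycle; and any elimination in a cycle $\rho_t\neq\rho_j$ leaves the $Q$-lists of $\rho_j$ untouched and removes from a $P$-list of $\rho_j$ at most its one designated noise entry. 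Hence the invariant is preserved. The bulk of the work, and the main obstacle, is precisely this bookkeeping: confirming that the elimination of an $R_j$ deletes exactly the intended cross-cycle noise entries and nothing more, and that these are the only ways noise disappears.

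Finally I must rule out any other exposed rotation in a table satisfying the invariant, i.e.\ a rotation mixing people from different $\rho$-cycles or using a person in a singleton list. This is a chain-following argument from the condition $s_\ell=h_{\ell+1}$ in Definition~\ref{def:rotation}: a person with a singleton list has no second entry and so cannot occur as an $e_\ell$, ruling out states~B and~C. A $P$-person's short list always begins with a $Q$-person and a $Q$-person's always with a $P$-person, and a noise entry never appears at the head of a list; using this one shows that from a $Q_{i_k}$ of a state-A cycle the successor $e_{\ell+1}$ is forced to be $Q_{i_{k+1}}$, so the rotation closes up to exactly $R_j$, while from a noise-free $P_{i_k}$ the successor is forced to be $P_{i_{k+1}}$, so the rotation closes up to $R^d_j$ --- and this last case requires every $P$-person of $\rho_j$ to be noise-free, because a $P_{i_\ell}$ still carrying noise has second entry $P_{\psi(i_\ell)}$, forcing the next element into a different cycle $\rho_t$, where the chain runs through the $Q$-people of $\rho_t$ and can only close back to $Q_{\psi(i_\ell)}$, never returning to $P_{i_\ell}$, so no rotation results. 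Thus the only rotations of $I$ are the $R_j$ and the $R^d_j$; together with the existence statements of Lemmas~\ref{lem:bottom-rotations} and~\ref{lem:top-rotations} (and since $R_j\neq R^d_j$, each belonging to the dual pair $\{R_j,R^d_j\}$, so that $I$ has no singleton rotations) this proves the lemma.
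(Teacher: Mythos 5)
Your proof is correct, but it takes a genuinely different route from the paper's. The paper argues by contradiction: it assumes an extra rotation $R=(E,H,S)$ is exposed in some table $T$ and, leaning twice on Gusfield's structural Lemma~\ref{gus-RR'} (an exposed rotation can only be removed by eliminating it or its dual), shows first that $E$ can contain no $Q_{*}$ person (else eliminating $R$ would remove some still-exposed $R_k$, forcing $R=R_k^d$), then that every $P_l\in E$ must still be semi-engaged to $Q_{\rho^{-1}l}$ and must have a noise-free list (the noisy case is excluded because the successor in $E$ would then have to be a $Q_{*}$ person, which was already ruled out), and finally that eliminating $R$ would again remove some $R_k$, forcing $R=R_k^d$ once more. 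You instead maintain an explicit invariant describing every table reachable during Phase~2 --- each $\rho$-cycle in one of three states, with the cross-cycle ``noise'' entries tracked exactly --- and then enumerate the exposed rotations of any such table by deterministic chain-following from the condition $s_\ell=h_{\ell+1}$ of Definition~\ref{def:rotation}. Your bookkeeping is right: eliminating $R_j$ collapses $\rho_j$ to singleton lists and, outside $\rho_j$, deletes precisely the entries $P_m$ ($m\in\rho_j$) from the lists of the $P_{\psi(m)}$, i.e.\ exactly the designated noise; eliminating $R_j^d$ has no foreign effect; and your chain argument correctly shows that a noisy $P_{i_\ell}$ can never belong to a rotation because the chain escapes into the $Q$-people of another $\rho$-cycle and can never close back. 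The paper's proof is shorter because it never has to describe the reachable tables, outsourcing the hard step to Lemma~\ref{gus-RR'}; yours is longer but more self-contained, and as a by-product it gives a complete picture of the Phase-2 state space from which the precedence facts of Lemmas~\ref{lem:aremin} and~\ref{lem:exp-prec} could also be read off directly.
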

\begin{proof}
Suppose $R =(E, H, S)$ is a rotation different from $R_i$ and
$R^d_i$ for all $i \in [n]$, and $R$ is exposed in a table $T$.
Suppose $Q_j \in E$. This, in turn, implies that the preference list
of $Q_j$, which has at least two persons on it, is $P_j P_{\rho j}$.
Suppose $j \in \rho_k$. This means that rotation $R_k$ has not been
removed and is exposed in table $T$. Eliminating rotation $R$ would
force $Q_j$ to be semi-engaged to $P_{\rho j}$ and remove rotation
$R_k$. From Lemma~\ref{gus-RR'}, it follows that the only rotation
that could remove the exposed rotation $R_k$ from table $T$ is
$R^d_k$. This implies $R = R^d_k$ which contradicts our assumption
that $R$ is different from $R_i$ and $R^d_i$ for $i \in [n]$. Hence,
$Q_j \notin E$ for $j \in [2m]$. This establishes that the only
persons that could potentially belong to $E$ are $P_{*}$
persons.

Suppose $P_j \in E$ and the preference list of $P_j$ in table $T$
does not start with $Q_{\rho^{-1} j}$. Let $j \in \rho_k$. This entails
that $P_j$ does not belong to the preference list of $Q_{\rho^{-1}
j}$. This indicates that rotation $R_k$ has been removed as $R_k$ is
exposed at the end of Phase~1. By
Lemma~\ref{gus-RR'}, it follows that the exposed rotation $R_k$ was
removed by eliminating rotation $R^d_k$. This forces $P_j$ to be
semi-engaged to $Q_j$ and $P_j$ has only $Q_j$ on his/her list. This
rules out the possibility of $P_j \in E$.

Hence, every $P_l \in E$
has to start with $Q_{\rho^{-1} l}$, i.e.\ every $P_l \in E$ is
semi-engaged to $Q_{\rho^{-1} l}$. The preference list of $P_j$
could read either $Q_{\rho^{-1} j} P_{\psi(j)}Q_j$ or
$Q_{\rho^{-1} j} Q_j$.
Suppose the preference list of $P_j$ starts with $Q_{\rho^{-1} j}
P_{\psi(j)}$. This implies that the next person in the ordered set $E$ is
the person semi-engaged to $P_{\psi(j)}$. This is not possible since every
person that belongs to $E$ is a $P_{\cdot}$ person and every
$P_l \in E$ is semi-engaged to $Q_{\rho^{-1} l}$. Therefore, the
preference list of every $P_l \in E$ reads $Q_{\rho^{-1} l}Q_l$.
Eliminating rotation $R$ from table $T$ would result in every $P_l
\in E$ being semi-engaged to $Q_{l}$ and would remove $P_l$ from
$Q_{\rho^{-1} l}$'s list. Suppose $P_i \in E$ and $i \in \rho_k$. This
implies that $P_i$ is removed from $Q_{\rho^{-1} i}$'s list and the
exposed rotation $R_k$ is removed by rotation $R$. Again, by
Lemma~\ref{gus-RR'}, it follows that $R = R^d_k$ which is not possible.

Hence, the only rotations that the stable roommate instance $I$ has
are $R_j$ and $R^d_j$ for $j \in [n]$.
\end{proof}

\subsection{Ordering Rotations}\label{sect:ordering-rotations}

In this section we will order the rotations using the {\em
explicitly precedes} relation. We state another lemma from
\cite{gusfield-structure}.

\begin{lemma}\cite[Lemma 5.2]{gusfield-structure}\label{gus-unique R'}
Let $p$ be a person who must be removed from $e_i$'s list before
rotation $R$ is exposed. There exists a unique rotation $R'$ whose
elimination removes $p$ from $e_i$'s list.
\end{lemma}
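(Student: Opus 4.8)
This is Lemma~5.2 of \cite{gusfield-structure}, so the plan is to reconstruct that argument. I would split it into an \emph{existence} part and a \emph{uniqueness} part. For existence, fix any execution of the two-phase algorithm that exposes~$R$. By hypothesis the pair $\{p,e_i\}$ is present on the relevant lists at the start but gone by the time $R$ is exposed; in the setting where the lemma is applied it survives Phase~1, so the deletion occurs during Phase~2. Short-list entries are only ever removed, never reinserted; by Property~\ref{prop-1} the two lists stay symmetric, so $p$ and $e_i$ leave each other's lists simultaneously; and every Phase~2 deletion is caused by eliminating some exposed rotation (Definition~\ref{def:rotation-elimination}). Hence there is a rotation $R'\in Rot(I)$ whose elimination removes $p$ from $e_i$'s list.

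For uniqueness, the point is that $R'$ must not depend on the chosen execution. I would invoke the confluence of rotation elimination from \cite{gusfield-structure}: starting from the Phase~1 table, the short lists reached by eliminating a downset $D\subseteq Rot(I)$ of rotations depend only on $D$ and not on the order, so write $T(D)$ for this table and note that $T(D')$ refines $T(D)$ whenever $D\subseteq D'$. Thus ``$\{p,e_i\}$ is an edge of $T(D)$'' is a downward-monotone property of the downset~$D$. I would then show there is exactly one rotation $R'$ with
\[
\{p,e_i\}\in T(D)\iff R'\notin D\qquad\text{for every downset }D\subseteq Rot(I).
\]
Uniqueness of such an $R'$ is immediate: if $R'_1$ and $R'_2$ both had this property then $R'_1\in D\iff R'_2\in D$ for all downsets~$D$, so taking principal downsets gives $R'_1\le R'_2$ and $R'_2\le R'_1$, whence $R'_1=R'_2$ by antisymmetry of $\Pi^*$. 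The $R'$ produced in the existence step satisfies the forward implication, because the principal downset of $R'$ already has $\{p,e_i\}$ deleted and $T$ only shrinks as the downset grows.

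The reverse implication is where the real work lies, and it is the step I expect to be the main obstacle: I must show that eliminating rotations incomparable to (or below) $R'$ never deletes $\{p,e_i\}$ --- equivalently, that two incomparable rotations cannot both delete $\{p,e_i\}$. For this I would analyze locally how a single elimination severs a pair: if a rotation $\hat R=(e_1,\dots,e_r)$ deletes $\{p,e_i\}$, then by Definition~\ref{def:rotation-elimination} either $p=s_j$ with $e_i$ strictly below $e_j$ on $p$'s current list, or $e_i=s_j$ with $p$ strictly below $e_j$ on $e_i$'s current list, for some index~$j$. In either case the semi-engagement constraints built into the definition of a rotation ($e_{j+1}$ is the tail of $s_j=h_{j+1}$'s list, $e_j$ is the tail of $h_j$'s list, and so on) pin down where the severed entry sits relative to the heads of the lists and essentially force $\hat R$ to be ``the rotation that moves $p$ off of $e_i$''. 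Combining this with Lemma~\ref{gus-RR'} (only a rotation's dual can remove it once it is exposed) to control how such rotations interfere with one another, I would conclude that any two rotations deleting $\{p,e_i\}$ lie on a common chain of $\Pi^*$, and then the minimal one is the desired unique $R'$. I expect this case analysis, together with the bookkeeping of which positions $p$ and $e_i$ occupy in the cyclic structure of~$\hat R$, to occupy most of the proof.
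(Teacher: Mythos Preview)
The paper does not prove this lemma at all: it is quoted verbatim from Gusfield~\cite{gusfield-structure} (as Lemma~5.2 there) and is used as a black box in the proof of Lemma~\ref{lem:exp-prec}. There is therefore no ``paper's own proof'' to compare your proposal against.

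Your reconstruction is a plausible outline of how such a proof might go, and you correctly identify the confluence of rotation elimination (the table after eliminating a downset depends only on the downset, not on the order) as the key structural fact. That said, your sketch is honest that the reverse implication --- ruling out two incomparable rotations that both delete $\{p,e_i\}$ --- is where the real work lies, and the argument you gesture at there (``the semi-engagement constraints \ldots essentially force $\hat R$ to be the rotation that moves $p$ off of $e_i$'') is not yet a proof. If you actually want to establish the lemma rather than cite it, you would need to make that step precise; but for the purposes of this paper, simply citing \cite{gusfield-structure} is exactly what the authors do.
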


First we establish one fact about the partial order $\Pi^*$ for
our constructed instance $I$.

\begin{lemma}
Rotations $R_1,\ldots, R_n$ are minimal elements in the partial
order $\Pi^*$.
\label{lem:aremin}
\end{lemma}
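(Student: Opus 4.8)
The plan is to show that no rotation can be eliminated to expose any of $R_1,\ldots,R_n$, so that each $R_j$ is already exposed in the very first table of Phase~2 and has nothing below it in $\Pi^*$. Recall that $\Pi^*(R,R_j)$ with $R\neq R_j$ would mean, via the ``explicitly precedes'' relation and its transitive closure, that some rotation must be eliminated before $R_j$ becomes exposed. But Lemma~\ref{lem:bottom-rotations} already tells us that $R_1,\ldots,R_n$ are all exposed in the table obtained at the end of Phase~1, i.e.\ before any rotation elimination has taken place. So intuitively there can be nothing strictly below them.

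Concretely, I would argue as follows. Fix $j\in[n]$ and suppose for contradiction that some rotation $R'$ explicitly precedes $R_j$; by Definition~\ref{def:explicitly-precedes} there is a person~$p$ and a triple $(e_i,h_i,s_i)$ in $R_j$ with $h_i\neq p$ such that $p$ is above $s_i$ in $e_i$'s \emph{original} preference list, and $R'$ removes $p$ from $e_i$'s list. From Lemma~\ref{lem:bottom-rotations}, the triples of $R_j$ are exactly $(Q_{i_{j-1}+k},\,P_{i_{j-1}+k},\,P_{i_{j-1}+k+1})$ (indices within the cycle $\rho_j$, cyclically). So $e_i$ is some $Q_{i_{j-1}+k}$, and we would need a person $p\neq P_{i_{j-1}+k}$ lying above $P_{i_{j-1}+k+1}$ on $Q_{i_{j-1}+k}$'s original preference list that still has to be removed. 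But from the prefixes~(\ref{eq:noisy-Q-pref-lists}) established in Section~\ref{sect:preference-vectors}, the only person strictly between the head and $P_{i_{j-1}+k+1}$ on any $Q$-person's list is (at worst) part of the ``noise'' on $Q_{i_d}$'s list, and even that noise consists of $P$-people from the same $\rho$-cycle; crucially, by Lemma~\ref{lem:bottom-rotations} $R_j$ is already exposed at the end of Phase~1, meaning $s_i$ is genuinely the second entry on $Q_{i_{j-1}+k}$'s \emph{short} list at that point, so nothing above $s_i$ remains to be removed. This contradicts the existence of $R'$.

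The cleanest way to package this is: since $R_j$ is exposed at the end of Phase~1 (Lemma~\ref{lem:bottom-rotations}), and no rotation elimination has occurred before Phase~2 begins, there is no rotation $R'$ whose elimination is required before $R_j$ is exposed; hence by Definition~\ref{def:precedes} and Definition~\ref{def:explicitly-precedes} (together with Lemma~\ref{gus-unique R'}, which ties ``must be removed before $R$ is exposed'' to the precedence relation) no rotation explicitly precedes $R_j$, and $R_j$ is minimal in $\Pi^*$. I would also invoke Lemma~\ref{lem:rotation-structure}(1) only as a sanity check — it is not needed here since we are handling the $R_j$ directly — and note that minimality of all the $R_j$ simultaneously follows because the argument is uniform in $j$.

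The main obstacle is being careful about the distinction between ``original preference list'' (used in Definition~\ref{def:explicitly-precedes}) and ``short list after Phase~1''. A person $p$ could in principle sit above $s_i$ in $e_i$'s \emph{original} list yet have been removed \emph{during} Phase~1 rather than by a rotation elimination; Definition~\ref{def:explicitly-precedes} requires that $R'$ removes $p$ ``by moving the end of $p$'s list above $e_i$'', i.e.\ via a rotation elimination, not a Phase~1 deletion. So I need to verify that for each triple $(Q_{i_{j-1}+k},P_{i_{j-1}+k},P_{i_{j-1}+k+1})$ of $R_j$, every person above $P_{i_{j-1}+k+1}$ on $Q_{i_{j-1}+k}$'s original list other than $P_{i_{j-1}+k}$ is removed already during Phase~1 (which is exactly what the analysis in Section~\ref{sect:finding-rotations} establishes — the ``noise'' disappears in Phase~1), hence cannot be the $p$ of a rotation $R'$ explicitly preceding $R_j$. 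Once that is nailed down, minimality is immediate.
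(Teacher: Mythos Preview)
Your proposal is correct and follows essentially the same approach as the paper: the paper's proof is a two-sentence observation that since $R_1,\ldots,R_n$ are exposed in the table at the end of Phase~1 (before any rotation elimination), no rotation explicitly precedes any $R_j$, hence each is minimal in $\Pi^*$. Your version is considerably more detailed---in particular you carefully address the distinction between ``original preference list'' in Definition~\ref{def:explicitly-precedes} and the post--Phase~1 short list, a subtlety the paper simply elides---but the underlying argument is identical.
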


\begin{proof}
We observe that the table obtained at the end of Phase~1 has
rotations $R_1$ through $R_n$ exposed. Hence, there is no rotation
$R$ that explicitly precedes $R_i$ for $i \in [n]$. Therefore, $R_1$
through $R_n$ are minimal elements in the partial order $\Pi^*$.
\end{proof}

\begin{lemma}\label{lem:exp-prec}
The only rotations that explicitly precede rotation $R^d_j$ are
$${\mathcal R} = \{R_{t_k} : \rho_{t_k} \cap \sigma_j = \{\psi(i_{j-1}+2k-1),\psi(i_{j-1}+2k)\},
1 \leq k \leq q_j/2 \},$$ where
$\sigma_j$ is the $j$-th $\sigma$-cycle
$\{\psi(i_{j-1}+1), \psi(i_{j-1}+2), \cdots, \psi(i_{j})\}$ and
$\rho_{t_k}$ is the $t_k$-th $\rho$-cycle $\{i_{t_k-1}+1, i_{t_k-1}+2,
\cdots, i_{t_k-1}+q_{t_k}\}$ for $ 1\leq k \leq q_j/2$.

\end{lemma}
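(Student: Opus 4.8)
The plan is to analyze the ``explicitly precedes'' relation directly from Definition~\ref{def:explicitly-precedes}, using the short lists after Phase~1 given in (\ref{eq:unnoisy-Q-pref-lists}) and (\ref{eq:unnoisy-P-pref-lists}) together with the explicit forms of $R_j^d = (E_j^d, H_j^d, S_j^d)$ from Lemma~\ref{lem:top-rotations} and the original (noisy) preference lists in (\ref{eq:noisy-P-pref-lists}). Recall that a rotation $R'$ explicitly precedes $R_j^d$ if there is a person~$p$ and a triple $(e_i, h_i, s_i)$ in $R_j^d$ with $h_i \ne p$ such that $p$ lies above~$s_i$ in $e_i$'s original preference list, and $R'$ is the (unique, by Lemma~\ref{gus-unique R'}) rotation whose elimination removes~$p$ from $e_i$'s list by moving the end of $p$'s list above~$e_i$. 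The triples in $R_j^d$ have the form $(P_{i_\ell}, Q_{i_{\ell-1}}, Q_{i_\ell})$ for $i_\ell \in \rho_j$ (reading off Lemma~\ref{lem:top-rotations} and (\ref{eq:P-pref-lists})), so I must ask: for each such $P_{i_\ell}$, which people~$p$ sit above $Q_{i_\ell}$ in $P_{i_\ell}$'s original preference list, other than $Q_{i_{\ell-1}}$, and which rotation removed each of them during Phase~1 or a prior Phase~2 step?

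First I would read off from (\ref{eq:noisy-P-pref-lists}) the part of $P_{i_\ell}$'s original list lying above $Q_{i_\ell}$. For $\ell > 1$ this is just $Q_{i_{\ell-1}}\, P_{\psi(i_\ell)}$ (the head entry $Q_{i_{\ell-1}}$ being excluded since we need $h_i \ne p$), so the only candidate for~$p$ is $P_{\psi(i_\ell)}$; for $\ell = 1$ there is the head $Q_{i_d}$ followed by $P_{\psi(i_1)}$ and then the braced ``noise'' block $P_{\psi(i_d)} Q_{i_{d-1}} \cdots P_{\psi(i_2)}$, all of which lie above $Q_{i_1}$. The key observation for the $\ell=1$ noise is that these people are removed from $P_{i_1}$'s list \emph{already during Phase~1} (as established in Section~\ref{sect:finding-rotations}): the $Q_{i_k}$'s in the noise are removed symmetrically because $P_{i_1}$ is not on their short lists, and the $P_{\psi(i_k)}$'s are removed symmetrically by the argument there. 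Removal during Phase~1 does not correspond to the elimination of any rotation, so by Definition~\ref{def:explicitly-precedes} (which requires $R'$ to remove~$p$ by a rotation elimination) the noise contributes nothing. Thus for every $\ell \in \{1,\ldots,d\}$ the only relevant~$p$ is $P_{\psi(i_\ell)}$, and I need to identify the unique rotation whose elimination removes $P_{\psi(i_\ell)}$ from $P_{i_\ell}$'s short list.

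For this, observe that by (\ref{eq:unnoisy-P-pref-lists}) the short list of $P_{i_\ell}$ after Phase~1 is $Q_{i_{\ell-1}}\, P_{\psi(i_\ell)}\, Q_{i_\ell}$, so $P_{\psi(i_\ell)}$ is still present; and by Lemma~\ref{lem:top-rotations}'s proof it is removed precisely when the rotation $R_{t_k}$ is eliminated, where $t_k$ is the $\rho$-cycle containing $\psi(i_\ell)$ (here $\ell \in \{2k-1, 2k\}$ so the index of $i_\ell$ in $\rho_j$ pairs up as stated). The elimination of $R_{t_k}$ truncates the short list of $P_{\psi(i_\ell)}$ (which is an element of $E_{t_k}^d = S_{t_k}$) exactly so that $P_{i_\ell}=P_{\psi(\psi(i_\ell))}$ drops off, i.e.\ it moves the end of $P_{\psi(i_\ell)}$'s list above $P_{i_\ell}$ — matching the second bullet of Definition~\ref{def:explicitly-precedes} with $e_i = P_{i_\ell}$, $p = P_{\psi(i_\ell)}$. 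By Lemma~\ref{gus-unique R'} this $R_{t_k}$ is the \emph{unique} such rotation. Running $\ell$ over $\rho_j$ groups into $q_j/2$ pairs $\{i_{j-1}+2k-1, i_{j-1}+2k\}$, each lying (via $\psi$) in a common $\rho$-cycle $\rho_{t_k}$, and these $\rho$-cycles are distinct (as noted before Lemma~\ref{lem:top-rotations}), yielding exactly the set $\mathcal R$ in the statement. I would also note that $R_{t_k}$ is indeed a rotation exposed before $R_j^d$ (it is exposed already after Phase~1), so it genuinely explicitly precedes $R_j^d$ rather than merely being a candidate, and conversely no other rotation does.

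The main obstacle I anticipate is the bookkeeping around the ``noise'' on $P_{i_1}$'s list: I must argue carefully that \emph{every} element of the braced block is removed during Phase~1 and not by a Phase~2 rotation elimination — otherwise such an element could spuriously generate an extra precedence. This reduces to the claims already proved in Section~\ref{sect:finding-rotations} about which people vanish by the end of Phase~1, combined with the fact (implicit in Irving's algorithm and Property~\ref{prop-1}) that symmetric removals in Phase~1 are not rotation eliminations; I would cite those facts rather than reprove them. A secondary subtlety is checking the index arithmetic — that $\psi(i_{j-1}+2k-1)$ and $\psi(i_{j-1}+2k)$ really are the two elements shared between $\sigma_j$ and $\rho_{t_k}$, which follows from the parity structure (odd/even pairs $2k-1,2k$ stay together under both $\rho$- and $\sigma$-cycles) spelled out in Section~\ref{sec:construct}. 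Everything else is routine reading-off of the preference lists.
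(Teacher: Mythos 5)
Your proof is correct and follows essentially the same route as the paper: read off the triples of $R^d_j$, observe that the only people who must be removed by a rotation elimination before $R^d_j$ is exposed are the $P_{\psi(i_\ell)}$'s, and invoke Lemma~\ref{gus-unique R'} to identify each $R_{t_k}$ as the unique rotation responsible. Your explicit treatment of the ``noise'' on $P_{i_1}$'s original preference list (arguing it disappears in Phase~1 and hence generates no explicit precedence) addresses a point the paper passes over silently by working directly from the Phase-1 short lists, so it is if anything slightly more careful than the published argument.
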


\begin{proof}
In rotation $R^d_j = (E^d_j, H^d_j, S^d_j)$, we have
\begin{align*}
  E^d_j &   = \{P_{i_{j-1}+2}, P_{i_{j-1}+3}, \ldots, P_{i_j}, P_{i_{j-1}+1}\}, \\
  H^d_j &   = \{Q_{i_{j-1}+1}, Q_{i_{j-1}+2}, \ldots, Q_{i_j}\}, \\
  S^d_j &   =
  \{Q_{i_{j-1}+2}, \ldots, Q_{i_j},  Q_{i_{j-1}+1}\}. \\
\end{align*}
The
preference lists of members of $E^d_j$   at the end of
Phase~1 are as follows for $1\leq k \leq q_j/2$:
\begin{align*}
 P_{i_{j-1}+ 2k-1} &:\quad  Q_{\rho^{-1}(i_{j-1}+ 2k-1)}P_{\psi(i_{j-1}+ 2k-1)}Q_{i_{j-1}+ 2k-1},\\
 P_{i_{j-1}+ 2k} &:\quad  Q_{\rho^{-1}(i_{j-1}+ 2k)}P_{\psi(i_{j-1}+ 2k)}Q_{i_{j-1}+ 2k}.
\end{align*}
For rotation $R^d_j$ to be exposed, the following has to occur for all $1\leq k \leq q_j/2$.
Person $P_{\psi(i_{j-1}+ 2k-1)}$ needs to be removed from the list of person
$P_{i_{j-1}+ 2k-1}$ and
person $P_{\psi(i_{j-1}+ 2k)}$ needs to be removed from the list of person
$P_{i_{j-1}+ 2k}$.

Note that after the elimination of
$R_{t_k}$, for $\ell\in \rho_{t_k}$,
$Q_\ell$ is semi-engaged
to $P_{\rho \ell}$
and $P_{\rho \ell}$ being semi-engaged to $Q_\ell$. At this point, $Q_\ell$ and $P_{\rho \ell}$ have only
one person on their lists.
In particular,
$P_{\psi(i_{j-1}+2k-1)}$ has
only $Q_{\rho^{-1}(\psi(i_{j-1}+2k-1))}$ on his/her list and is removed
from $P_{i_{j-1}+2k-1}$'s list.  Also, $P_{\psi(i_{j-1}+2k)}$ has
only $Q_{\rho^{-1}(\psi(i_{j-1}+2k))}$ on his/her list and is removed
from $P_{i_{j-1}+2k}$'s list.

\From Lemma~\ref{gus-unique R'}, it
follows that $R_{t_k}$ is the unique rotation that  does this job. Therefore, the elimination of rotations
$R \in {\mathcal R}$ exposes rotation $R^d_j$. Since every rotation
$R \in {\mathcal R}$ explicitly precedes $R^d_j$ and the elimination
of rotations in ${\mathcal R}$ exposes $R^d_j$, we conclude that the
only rotations that explicitly precede rotation $R^d_j$ are $R \in
{\mathcal R}$.
\end{proof}

\begin{example}
For the example  from Figure~\ref{fig:bip-graph1},
the rotations $R_1$, $R_2$, $R_3$, $R_4$ are minimal elements in the partial
order $\Pi^*$.
The reader can think about these as corresponding to the vertices $b_1$, $b_2$, $b_3$,
and $b_4$, respectively, in Figure~\ref{fig:bip-graph1}.
The only other rotations are $R^d_1$, $R^d_2$, $R^d_3$ and $R^d_4$,
which correspond to vertices~$t_1$, $t_2$, $t_3$ and $t_4$, respectively.
\From Lemma~\ref{lem:exp-prec}, it can be deduced that
$R_i$ explicitly precedes $R^d_j$ if and only if there is an edge from $b_i$ to $t_j$ in the figure.
No other rotations explicitly precede $R^d_j$.

\end{example}

\subsection{Stocktaking}\label{sect:Stocktaking}

\From Section~\ref{sec:task}, the task was, given the bipartite
graph~$K$
as described by the $\rho$ and $\sigma$ cycles, to
construct, in polynomial time, a $\fourSRa$-instance $I$ whose
nonsingleton rotations can be labelled bijectively with $B\cup T$ so
that (G1)--(G4) are satisfied. The construction was done in
Sections~\ref{sec:construct} and \ref{sect:preference-vectors}. The
bijection between rotations and $B\cup T$ is  given informally by
(\ref{eq:Q-pref-lists}) and (\ref{eq:P-pref-lists}). More formally,
the vertex in~$B$ corresponding to the $i$'th $\rho$-cycle is
labelled with the rotation~$R_i$ and the vertex in~$T$
corresponding to the $i$'th $\sigma$-cycle is labelled with the
rotation~$R^d_i$. Lemma~\ref{lem:allrotations} guarantees that these
are all of the rotations associated with~$I$ so the duality
relationship gives us (G1). Lemma~\ref{lem:exp-prec} gives us (G2)
and (G4). Lemma~\ref{lem:aremin} gives us (G3).

Thus, we have completed the task, and, by Section~\ref{sec:task}, we have completed the reduction
$\IS \APred \fourSRa$
and the proof that the reduction is correct.

\section{Euclidean model}

Recall that $\kSRe$ denotes the problem of counting stable
assignments for $k$-Euclidian stable roommate instances. The goal of
this section is to prove Theorem~\ref{thm:3SRe} which we restate
below.

\setcounter{counter:save}{\value{theorem}}
\setcounter{theorem}{\value{counter:SAT-3SRe}}
\begin{theorem}
$\IS \APeq \kSRe$ for $k\geq 3$.
\end{theorem}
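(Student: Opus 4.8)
\textbf{Proof plan for Theorem~\ref{thm:3SRe} ($\IS \APeq \kSRe$ for $k \geq 3$).}
The plan is to mirror the structure of the proof of Theorem~\ref{thm:4-SR}. The easy direction, $\kSRe \APred \IS$, follows immediately since $\kSRe \in \#P$ and $\IS$ is complete for $\#P$ with respect to AP-reductions~\cite{DGGJ}. Also $\threeSRe \APred \kSRe$ for $k > 3$ by padding every preference point and position point with a fixed value in the extra $k-3$ coordinates (chosen so that these coordinates contribute an identical constant to every squared Euclidean distance, hence do not affect any comparison). So the whole burden is to establish $\IS \APred \threeSRe$. As in Section~\ref{section:begin-construction}, I would start from an $\IS$ instance $\Gamma=(V,E)$, pass to the bipartite graph $K$ with parts $B,T$ satisfying (K1)--(K2), and reduce the problem to: construct, in polynomial time, a $3$-Euclidean roommate instance $I$ whose nonsingleton rotations can be labelled bijectively with $B \cup T$ so that conditions (G1)--(G4) hold. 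By Theorem~\ref{thm:indsets} this makes $G(I)$ isomorphic to $\Gamma'$ and hence finishes the reduction.

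The core of the argument is a geometric realization, in $3$-dimensional Euclidean space, of exactly the same combinatorial target: the $4m$ people $P_1,\dots,P_{2m},Q_1,\dots,Q_{2m}$, with the $\rho$-cycles producing the ``bottom'' rotations~(\ref{eq:Q-pref-lists}) and the $\sigma$-cycles (via $\psi$) producing the dual ``top'' rotations~(\ref{eq:P-pref-lists}), and with $R_i$ explicitly preceding $R_j^d$ exactly when $(b_i,t_j) \in E(K)$. The key trick is that in the $k$-attribute setting two \emph{separate pairs} of coordinates were used --- one pair to lay out the $Q$-preference structure (the $\rho$-cycle circle) and another pair for the $P$-preference structure --- whereas in the Euclidean model we have only three coordinates and must overlay both structures. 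I would place the $Q$-people and the $P$-people on two well-separated small regions (e.g.\ near two antipodal points of a sphere, or in two far-apart clusters), arranging the $P_{i_1},\dots,P_{i_d}$ of each $\rho$-cycle in a tiny arc (with a large gap before the next $\rho$-cycle, scale $\epsilon = 2\pi/(2m)^2$ as before) and similarly for a scaled, exponentially-shrinking spacing $2^{-(k-1)}2\theta'_i$ governing the ``dual'' ordering, exactly reproducing the spacings used in the $4$-attribute construction. Preference points $\widehat Q_{i_j}$, $\widehat P_{i_j}$ are then positioned so that the nearest-point orderings reproduce the (noisy) preference-list prefixes~(\ref{eq:noisy-Q-pref-lists}) and~(\ref{eq:noisy-P-pref-lists}); since the statement only asserts $k \geq 3$ suffices, I expect that three coordinates give just enough freedom, using one coordinate essentially to separate the $P$-cluster geometry from the $Q$-cluster geometry while the remaining two encode the cyclic orders. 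Alternatively, and perhaps more cleanly, one can reuse the $2$-Euclidean stable-marriage construction of~\cite{CGM} as a black box in two of the three coordinates and use the third coordinate to encode, via a small perturbation, the extra ``all people rank all people'' requirement of the roommate problem (which is exactly the subtlety flagged in the footnote to Theorems~\ref{thm:BIS-3SR} and~\ref{thm:BIS-2SRe}).

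Once the instance $I$ is specified, the verification proceeds exactly as in Sections~\ref{sect:finding-rotations}--\ref{sect:Stocktaking}: show that Phase~1 of Irving's algorithm strips away everything except the prefixes~(\ref{eq:unnoisy-Q-pref-lists}),~(\ref{eq:unnoisy-P-pref-lists}) (in particular the ``noise'' in braces disappears because the relevant $P_{\psi(\cdot)}$ people, lying in \emph{other} $\rho$-cycles and hence in a different part of the geometric layout, do not reciprocate), then identify the $n$ bottom rotations (analogue of Lemma~\ref{lem:bottom-rotations}), their duals (analogue of Lemma~\ref{lem:top-rotations}), the absence of any further rotations (analogue of Lemma~\ref{lem:allrotations}), and finally the explicit-precedence relations (analogue of Lemma~\ref{lem:exp-prec}) yielding (G1)--(G4). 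The main obstacle I anticipate is purely the geometry of the construction: squeezing both the $\rho$-cycle cyclic structure and the exponentially-decreasing ``dual'' spacing into only three Euclidean coordinates while still keeping the comparisons $|\widehat X - \bar y| \neq |\widehat X - \bar z|$ strict and, crucially, ensuring that the cross-cycle ``noise'' people never produce spurious rotations. The combinatorial bookkeeping afterwards is essentially identical to the $4$-attribute case and should go through verbatim with the rotation structure theorems of~\cite{gusfield-structure} (Theorems~\ref{thm:rotation-downsets},~\ref{thm:indsets}, Lemmas~\ref{lem:rotation-structure},~\ref{gus-RR'},~\ref{gus-unique R'}).
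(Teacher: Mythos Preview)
Your high-level plan matches the paper exactly: the easy direction and the padding from $3$ to $k$ are as you say, and the paper indeed reduces $\IS\APred\threeSRe$ by building a $3$-Euclidean instance whose preference-list prefixes are \emph{identical} to (\ref{eq:noisy-Q-pref-lists}) and (\ref{eq:noisy-P-pref-lists}), after which Sections~\ref{sect:finding-rotations}--\ref{sect:Stocktaking} apply verbatim.

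Where your proposal diverges from the paper is in the geometry, and one of your two suggested routes is a dead end. Your option~(b) --- ``reuse the $2$-Euclidean stable-marriage construction of~\cite{CGM} as a black box'' --- targets the wrong problem: that construction realizes the $\BIS$ preference lists (the ones used in Section~\ref{sec:blahblah}), not the $\IS$ lists of Section~\ref{section:begin-construction}. It does not produce the prefixes (\ref{eq:noisy-Q-pref-lists})/(\ref{eq:noisy-P-pref-lists}), so the rotation analysis you want to recycle would not apply.

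Your option~(a) is in the right spirit but mis-describes the mechanism. The paper does \emph{not} put the $P$- and $Q$-people in separate far-apart clusters, nor does it use one coordinate merely ``to separate'' the two structures. What it actually does is a decoupling-by-scale: all position points sit on a circle of enormous radius $R$ (e.g.\ $R=10^{10^{4m}}$) in the first two coordinates, with the third coordinate lying in $[0,1]$ for the $P$'s and equal to $0$ for the $Q$'s. The third coordinate alone carries the $\rho$-cycle order (this replaces the \emph{first} two coordinates of the $4$-attribute construction, compressed from a circle to a line), while the angular position on the big circle carries the ``dual'' exponentially-shrinking order (replacing the \emph{last} two coordinates). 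The preference points are then placed so that each person sees only one of the two structures: $\widehat Q_j$ is put at the origin in the first two coordinates, so every position is at distance $\sqrt{R^2+(\cdot)^2}$ and the ranking is governed solely by the third coordinate; $\widehat P_j$ is put on the big circle with third coordinate $0$, so the at-most-$1$ contribution from the third coordinate is swamped by the angular distance on a circle of radius $R$, and the ranking is governed solely by the first two coordinates. This scale trick is the one concrete idea you are missing; once you have it, the preference-list prefixes coincide with those of Section~\ref{sect:preference-vectors} (up to the same tie-breaking remark about the $Q$'s being mutually indifferent), and your plan for the remainder is correct.
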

\setcounter{theorem}{\value{counter:save}}

\begin{proof}
As in the proof of Theorem~\ref{thm:4-SR}, the main task is to prove
$\IS\APred\;\;\;$ $\threeSRe$. Given an instance of $\IS$, we show
how to construct an instance of $\threeSRe$ whose preference lists
are identical to those from
Section~\ref{section:begin-construction}.

We start by assigning position and preference points.
We describe the instance of $\IS$ in terms of $n$ $\rho$-cycles and
$n$ $\sigma$-cycles as in Section~\ref{sec:construct}. As in that section,
 we have $4m$ people whom we label $P_1$
through $P_{2m}$ and $Q_1$ through $Q_{2m}$. We start by assigning
the third co-ordinate of the $P_*$ and the $Q_*$ people. The people
$Q_1,\ldots,Q_{2m}$ have $0$ in their third coordinate. The third
coordinate of the position points of $P_1,\ldots,P_{2m}$ are
arranged on the $z$-axis, taking each $\rho$-cycle in order (and
leaving a big gap before the next $\rho$-cycle).

Let $\epsilon = \frac{1}{(2m)^2}$.
The $i$'th $\rho$ cycle
takes up  a distance of $\epsilon$ (out of  a unit distance on the $z$ axis),
starting at  distance $(i-1)/n$.
Let $\theta_i = \epsilon/(7(q_i-1))$.
If the $i$'th  $\rho$-cycle is the cycle $(i_1,\ldots,i_d)$, then positions
$\overline{P}_{i_1},\ldots,\overline{P}_{i_d}$ are assigned in
order, leaving a gap of $7\theta_i$ between each pair of people.

More formally, we define the third coordinate of the position points
as follows (the asterisks in the first and the second coordinates
will be defined shortly):

\begin{align*}
\noalign
{ For $f_i \in Rep(\rho)$ ,  for $0 \leq  k \leq q_i - 1$, set}\\
  \overline{P}_{\rho^k f_i} &=  \left(*,\ *,   (i-1)/n+7k\theta_i\right)  \ \textrm{ and }  \\
  \overline{Q}_{\rho^k f_i} &= \left(*,\ *, 0\right).
\end{align*}

We next assign the first and the second coordinates of the position
points.
These are similar to the last two coordinates in Section~\ref{section:begin-construction}. These coordinates are arranged around a circle of radius
$R$, where $R$ will be specified later, taking each $\rho$-cycle in
order and leaving big gaps between consecutive $\rho$-cycles.
The $i$'th $\rho$ cycle takes up an angle of up to $\epsilon$ starting at an angle of $2\pi(i-1)/n$.
Let $\theta'_i = \epsilon/ 4$. If the $i$'th $\rho$-cycle is
  $(i_1,\ldots,i_d)$, then positions are assigned
in the following order:
$$ \overline{Q}_{i_1} \overline{P}_{\psi(i_2)}
\overline{Q}_{i_2} \overline{P}_{\psi(i_3)} \overline{Q}_{i_3}
\overline{P}_{\psi(i_4)} \cdots \overline{Q}_{i_{d-1}}
\overline{P}_{\psi(i_d)} \overline{Q}_{i_d}
\overline{P}_{\psi(i_1)}.$$

For $k\in\{1,\ldots,d-1\}$, the
angle between $\overline{Q}_{i_{k}}$
and $\overline{Q}_{i_{k+1}}$ is
$2^{-(k-1)} 2 \theta'_i$. Also,  $\overline{P}_{\psi(i_{k+1})}$
is at an equal angle between these.

To simplify the notation while assigning the $i$'th $\rho$-cycle $(i_1,\ldots,i_d)$,
let $i_{d+1}$ denote $i_1$.
The  first two coordinates of the position  points are
defined in the following manner. Note  that the asterisk
representing the value of the third  coordinate has   already been
defined above.

\begin{align*}
\noalign
{For $  f_i \in Rep(\rho)$ ,  for $ 0 \leq  k \leq q_i - 1$, set}\\
\overline{Q}_{\rho^k f_i} &= \Big( R
\cos(
\tfrac{2\pi(i-1)}{n}+2\theta'_i\sum_{j=0}^{k-1}2^{-j}), R \sin(
\tfrac{2\pi(i-1)}{n}+2\theta'_i\sum_{j=0}^{k-1} 2^{-j}),*\Big) \ \textrm{ and } \\
\overline{P}_{\psi(\rho^{k+1} f_i)} &=  \left(   R
\cos(\tfrac{2\pi(i-1)}{n}+2\theta'_i\sum_{j=0}^{k-1}2^{-j}
+2^{-k}\theta'_i),\right.\\
& \hspace{1.2in}\left. R
\sin(\tfrac{2\pi(i-1)}{n}+2\theta'_i\sum_{j=0}^{k-1}2^{-j}
+2^{-k}\theta'_i),*\right).
\end{align*}

A sum of the form $\sum_{j=0}^{-1}  2^{-j}$ is taken to be equal to
$0$. Having defined the position points, we now give the preference
points. These are also defined using the $\rho$-cycles. For the
$\rho$-cycle $(i_1,\ldots,i_d)$, the preference points of
$\widehat{Q}_{i_1},\ldots,\widehat{Q}_{i_d}$ are $0$ in the first
two coordinates. Since every person has the position of his first
two coordinates on a circle of radius $R$ centred at the origin, the
distance between the preference point of a $Q_*$ person and the
position point of another person depends only on the  third
coordinate. In the third coordinate, for $1\leq j < d$,
$\widehat{Q}_{i_j}$ is placed between $\overline{P}_{i_j}$ and
$\overline{P}_{i_{j+1}}$, slightly closer to $\overline{P}_{i_j}$.
Here is the definition.
 For $f_i \in Rep(\rho)$ ,  for $0 \leq  k \leq q_i - 1$, set
 $$
\widehat{Q}_{\rho^k f_i} = \left(0,0, (i-1)/n+7k\theta_i + 3\theta_i\right).  $$

The preference points $\widehat{P}_{i_1},\ldots,\widehat{P}_{i_d}$
are $0$ in the third coordinate. We note that the contribution of
this coordinate towards the distance between its preference point
and some position point is at most~1. The preference points of the
$P$ people are placed around a circle of radius $R$ in the $x$-$y$
plane. We pick the radius $R$ large enough, say $10^{10^{4m}}$, to
ensure that the  third coordinate does not influence the ordering of
the initial part of the preference list of a $P$ person. In other
words, the only coordinates that affect the order are the $x$ and
the $y$ coordinates. Since the $x$ and $y$ coordinates of the
positions of the $P_*$ people, the $Q_*$ people and the preference
positions of the $P_*$ people all lie on the circle of radius $R$,
the position of a person $Y$ in the initial part of the preference
list of a person $P_i$ is purely a function of the angle subtended
by the preference point of $P_i$ and the position point of $Y$ at
the origin.

 If the $i$'th $\rho$-cycle is $(i_1,\ldots,i_d)$, then,
in the  first two coordinates, for $j>1$, $\widehat{P}_{i_j}$ is
placed $1/3$ of the way along between $\overline{Q}_{i_{j-1}}$ and
$\overline{P}_{\psi(i_j)}$. Then, $\widehat{P}_{i_1}$ is placed
between $\overline{Q}_{i_d}$ and $\overline{P}_{\psi(i_1)}$,
slightly nearer to $\overline{Q}_{i_d}$. Here is the definition,
which is similar to the definition for the last two coordinates in
Section~\ref{section:begin-construction}. For  $f_i \in Rep(\rho)$,
for  $0 \leq  k \leq q_i - 1$, set
\begin{align*}
\widehat{P}_{\rho^{k+1} f_i} =  \left( R \cos(
\tfrac{2\pi(i-1)}{n}+2\theta'_i\sum_{j=0}^{k-1}2^{-j}  + \tfrac13
2^{-k} \theta'_i), \right.\\
\hspace{1.2in}\left. R \sin(
\tfrac{2\pi(i-1)}{n}+2\theta'_i\sum_{j=0}^{k-1} 2^{-j} + \tfrac13
2^{-k} \theta'_i ),*\right).
\end{align*}

It is now easy to check that the preference lists are the same as those in   (\ref{eq:noisy-Q-pref-lists}) and (\ref{eq:noisy-P-pref-lists}). As in Section~\ref{section:begin-construction}, there is a slight issue
because $Q_i$ is indifferent between the other $Q_*$ people.
However, this can be fixed in the same way that it was fixed in the remark at the end of Section~\ref{sect:preference-vectors}.
The rest of the proof is now identical to Section~\ref{section:begin-construction}.

\end{proof}

\section{$\oneSRa$ is easy}

The goal of this section is to prove Theorem~\ref{thm:1SR}
which we restate below.

\setcounter{counter:save}{\value{theorem}}
\setcounter{theorem}{\value{counter:1SR}}
\begin{theorem}
For every $\oneSRa$ instance~$I$, there are either~$1$ or~$2$ stable
assignments. Thus, $\oneSRa$ can be solved exactly in polynomial time.
\end{theorem}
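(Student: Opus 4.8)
The plan is to pass to a convenient normal form for $1$-attribute instances, simplify by removing pairs that are forced into every stable assignment, and then read off the count from the rotation poset of Section~\ref{sect:poset}. In a $1$-attribute instance each person $X$ has a scalar preference value $\widehat{X}$ and a scalar position value $\overline{X}$; validity forces $\widehat{X}\neq 0$ (otherwise $X$ has ties) and forces the values $\overline{X}$ to be pairwise distinct. Relabelling the people $1,\dots,2n$ so that $\overline{1}<\overline{2}<\dots<\overline{2n}$, a person $X$ with $\widehat{X}>0$ (call it \emph{upward}) ranks everyone else in decreasing order of label, and a person with $\widehat{X}<0$ (\emph{downward}) ranks everyone else in increasing order of label, in each case with its own label omitted. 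So the whole instance is determined by a word $w\in\{U,D\}^{2n}$: all upward people share a single preference list (up to self-deletion) and all downward people share the reverse list.

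Next I would peel off forced pairs. If two people $a,b$ are each other's first choice then $(a,b)$ lies in every stable assignment, since any assignment omitting it has $(a,b)$ as a blocking pair; deleting $a$ and $b$ leaves a $1$-attribute instance on $2n-2$ people (its word is $w$ with two letters removed), and the stable assignments of the original are exactly those of the smaller instance with $(a,b)$ adjoined. So it suffices to bound the number of stable assignments of a \emph{reduced} instance — one with no mutual first-choice pair. Since every person's first choice is one of the two people of extreme position (or the person of second-most-extreme position, in the case of an extreme person), one checks that the only candidate mutual first-choice pairs are $(1,2)$, $(1,2n)$ and $(2n-1,2n)$, so "reduced" strongly constrains the first two and last two letters of $w$.

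For a reduced instance I would run Phase~1 of Irving's algorithm (its output table is independent of the proposal order) and prove the key structural claim: Phase~1 never empties a list (so a stable assignment exists); in the resulting table every short list has length $1$ or $2$; and the set of people whose short list has length $2$, if nonempty, supports exactly one rotation $R$, this rotation has a dual $R^d$, and eliminating either $R$ or $R^d$ leaves every short list of length $1$. Granting this, $Rot(I)$ is either empty or the single dual pair $\{R,R^d\}$, so by Theorem~\ref{thm:rotation-downsets} the number of stable assignments is the number of downsets of $Rot(I)$ containing exactly one member of each dual pair, namely $1$ in the first case and $2$ in the second. Exact polynomial-time counting is then immediate: run Irving's (or Gusfield's) algorithm to find all rotations and report $1$ or $2$ accordingly.

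The substantive step — and the main obstacle — is the Phase~1 analysis: tracking through the sequence of proposals and deletions the precise shape of the surviving short lists when the two master orders involved are exact reverses of one another, and in particular showing that the length-$2$ part of the final table is a single "rotation loop" rather than a longer chain of rotations or two independent dual pairs, either of which would permit three or four stable assignments. This is exactly where one-dimensionality is used: it is the fact that the upward and downward preference orders are reverses of each other that keeps the rotation poset from branching, and the reduced-instance normal form from the second paragraph is what makes this bookkeeping manageable.
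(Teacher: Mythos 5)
Your setup is fine: the normal form (a word in $\{U,D\}^{2n}$ with the two master orders being reverses of each other) is the same one the paper uses, and peeling off mutual-first-choice pairs is a correct, count-preserving reduction. But the argument then rests entirely on the ``key structural claim'' --- that for a reduced instance Phase~1 never empties a list, leaves every short list of length $1$ or $2$, and that the length-$2$ people form a \emph{single} dual pair of rotations rather than two or more independent dual pairs or a longer chain. You state this claim, correctly identify it as the substantive step, and then do not prove it. That claim essentially \emph{is} the theorem: a priori a reduced instance could end Phase~1 with two unrelated dual pairs $\{R_1,R_1^d\}$ and $\{R_2,R_2^d\}$, which by Theorem~\ref{thm:rotation-downsets} would give four stable assignments, and nothing in your write-up rules this out. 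So as it stands the proposal is a plan with the central lemma missing, not a proof.

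It is also worth noting why the missing step is not a routine verification. ``Reduced'' only constrains the types of persons $1,2,2n-1,2n$; it does not stop further pairs from being forced. For example, with $1$B, $2$A, $3$A, $\dots$ the pair $(2,3)$ is not a mutual first choice in the original lists, yet it is forced into every stable assignment --- but only \emph{after} some other person's proposal to person~$1$ has deleted the entries $(1,2),\dots$ from the lists. The paper's proof is precisely the bookkeeping you defer: an induction that keeps peeling off such pairs, with a handful of base cases ($n=2$ and three $n=4$ type-patterns) in which one checks directly that there is exactly one exposed dual pair (or none), plus inductive cases (and their reflections under reversing the position order) in which one shows that the intermediate, non-$1$-attribute table reached after these deletions coincides with the table reached from the smaller $1$-attribute instance obtained by deleting the forced pair. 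If you want to complete your route instead, you would have to carry out an equivalent case analysis of Phase~1 on reduced words to establish the single-dual-pair claim; there is no shortcut visible in what you have written.
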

\setcounter{theorem}{\value{counter:save}}

\begin{proof}

We will show that for any $\oneSRa$ instance~$I$, there are either~$1$ or~$2$ stable assignments.
Finding these stable assignments can be done in polynomial time using Gusfield's algorithm~\cite{gusfield-structure}.
Assume that the instance has $n$~people and, without loss of
generality, assume that the positions of these people are ordered $1,\ldots,n$.
The preference of each person is either ``type A'', in which case his preference
list is $1,\ldots,n$, excluding himself, or his preference is ``type B'', in which case his preference list
is $n,\ldots,1$, excluding himself. The proof is by induction on~$n$. In the cases below, the notation
``$i$A'' means that, in instance~$I$, person~$i$ has a type-A list
(and $i$B is defined similarly). We start with some base cases.

\textbf{Base Cases.}

\begin{enumerate}[(B1)]
\item
\textbf{$n=2$}:\quad There is a single stable  assignment in which the
two people are paired.
\item
\textbf{$n=4$. $1$A, $2$B, $3$B, $4$A}\quad \label{ABBA} The lists
start out as
$$\begin{array}{c|ccc}
1 & 2 & 3 & 4\\
2 & 4 & 3 & 1\\
3 & \emph{4} & 2 & 1\\
4 & 1 & 2 & \emph{3}\\
\end{array}$$
Then~$2$ becomes semi-engaged to~$4$ so the lists are
$$\begin{array}{c|ccc}
1 & \emph{2} & {3} & 4\\
2 & 4 & 3 & \emph{1}\\
3 & 2 & 1\\
4 & 1 & 2\\
\end{array}$$
Then~$3$ becomes semi-engaged to~$2$ so the lists are
$$\begin{array}{c|ccc}
1 &  3 & 4\\
2 & 4 & 3 \\
3 & 2 & 1\\
4 & 1 & 2\\
\end{array}$$
Now everybody is semi-engaged, so phase~1 ends. We have two exposed
rotations,
$$R = \begin{array}{c|cc}
1 & 3 & 4\\
2 & 4 & 3\\
\end{array} \quad
R^d = \begin{array}{c|cc}
4 & 1 & 2\\
3 & 2 & 1\\
\end{array}$$
These lead to the two stable assignments $1-4,2-3$ and $1-3,2-4$.

 \item \textbf{$n= 4$. $1$B, $2$A, $3$A, $4$B}\quad
 This is symmetric to Case~(B\ref{ABBA}).
 The symmetry is as follows. Instead of ordering the positions in order $1,2,3,4$
 and then, in that order, assigning the lists consistent, inconsistent, inconsistent, consistent, as in Case~(B\ref{ABBA}),
 think about the backwards order $4,3,2,1$ and  assign the lists, in this order, as consistent, inconsistent,
 inconsistent, consistent.

\item \textbf{$n = 4$. $1$A, $2$B, $3$A, $4$B}\quad
The lists start out as
$$\begin{array}{c|ccc}
1 & 2 & 3 & \emph{4}\\
2 & 4 & 3 & 1\\
3 & {1} & 2 & 4\\
4 & 3 & 2 & \emph{1}\\
\end{array}$$
Then $3$ becomes semi-engaged to~$1$ removing pair $(1,4)$ so the
lists are
$$\begin{array}{c|ccc}
1 & 2 & 3 \\
2 & 4 & 3 & 1\\
3 & {1} & 2 & 4\\
4 & 3 & 2 \\
\end{array}$$
Now everybody is semi-engaged, so phase~1 ends. We have one exposed
rotation,
 $$R = \begin{array}{c|cc}
1 & 2 & 3\\
4 & 3 & 2\\
\end{array}  $$   leading to the unique stable assignment $1-3,2-4$.

\end{enumerate}

We now give the inductive argument.
The proof is easy, apart from checking that all cases are covered.
We start by enumerating some cases that
can occur, which we call ``top inductive step cases'' and ``mixed inductive step cases''.
Then we give some more cases, called ``symmetric inductive step cases''. These
are symmetric to cases that we've already done, so don't need a new argument.
Finally, we conclude with some accounting to check that all cases are covered.

\textbf{Top Inductive Step Cases} (Inductive cases that reduce the
number of people, by pairing off two people at the top of the lists)

\begin{enumerate}[(\textrm{T}1)]
\item
\textbf{$n\geq 4$. $1$A, $2$A}\quad \label{AA} Persons~$1$ and~$2$
prefer each other, so in phase~1, $1$ becomes semi-engaged to~$2$
and $2$ becomes semi-engaged to~$1$. Then the problem is reduced to
an instance of size $n-2$ without people~$1$ and~$2$.

\item
\textbf{$n\geq 4$.  $1$B, $2$A, $3$A, $i$A  for some $i>3$.}\quad
\label{BAnA} Person~$i$ becomes semi-engaged to person~$1$, removing
pairs $((1,i-1),\ldots,(1,2))$. Now persons~$2$ and~$3$ prefer each
other and can pair off. This gives us an intermediate instance,
$I'$, with two fewer people. $I'$ is not a  $1$-attribute instance. However,
the  $1$-attribute instance $I''$ derived from~$I$ by removing~$2$ and~$3$
has the same stable assignments as~$I'$ since it reaches~$I'$ by
having~$i$ semi-engaged to~$1$.

 \item \label{z}
 \textbf{$n\geq 6$. $1$A, $2$B, $3$A, $(n-1)$A, $n$B}
 Here $3$ becomes semi-engaged to~$1$
 removing $(1,4),\ldots,(1,n)$ (including especially
 $(1,n-1)$).
  Then $n-1$ becomes semi-engaged to~$2$ removing $(2,1),\ldots,(2,n-2)$
 (including especially
 $(2,1)$).
 Now~$1$ and~$3$ can pair off.
 This gives us an intermediate instance~$I'$, with $2$~fewer people.
 $I'$ is not a  $1$-attribute instance, so we have to show that the  $1$-attribute instance~$I''$
 derived from~$I$ by removing~$1$ and~$3$
also gets to~$I'$. This happens by making $n-1$ semi-engaged to~$2$
(which is now his first choice).

  \end{enumerate}

\textbf{Mixed Inductive Step Cases}
\begin{enumerate}[(\textrm{M}1)]
\item \label{opp}
\textbf{$n\geq 4$. $1$B, $n$A}\quad This case reduces to an instance
of size $n-2$ (without people~$1$ and~$n$) similar to
Case~(T\ref{AA}).

\item
\textbf{$n\geq 4$. $1$A, $2$B, $i$A, $n$A for some $2<i<n$.}\quad
\label{twon} From instance~$I$, person~$i$ becomes semi-engaged
to~$1$. This removes
  the pairs $(1,i+1),\ldots,(1,n)$  from the lists.
Now persons~$n$ and $2$ prefer each other, and pair off. This gives
an instance $I'$ of size $n-2$ (without people~$2$ and~$n$). $I'$ is
not quite a  $1$-attribute instance, because the initial semi-engagement
of~$i$ to~$1$ knocked out the pairs $(1,i+1),\ldots,(1,n-1)$ from
the lists. However, let $I''$ be the  $1$-attribute $(n-2)$-person instance
derived from instance~$I$ by deleting people~$2$ and~$n$. Note that,
from $I''$, person~$i$ becomes semi-engaged to~$1$ and then we are
at instance~$I'$. Thus, the stable  assignments of~$I$ are the stable
 assignments of~$I''$.

\item
\label{zzz} \textbf{$n\geq 6$. $1$A, $2$B, $3$B, $(n-2)$A, $(n-1)$A,
$n$B} Here~$3$ becomes semi-engaged to~$n$, removing $(n,2)$ and
$(n,1)$ so~$2$ now prefers~$n-1$. Then~$n-2$ becomes semi-engaged
to~$1$, removing $(1,n-1)$ and $(1,n)$ so $n-1$ now prefers~$2$.
Then~$2$ and~$n-1$ can pair off. Suppose that we started from the
original instance~$I$ and we just removed people~$2$ and~$n-1$ then
we can get to this ``paired off'' state by having~$3$ become
semi-engaged to~$n$ and having~$n-2$ become semi-engaged to~$1$.

\end{enumerate}

\textbf{Symmetric Inductive Step Cases}
\begin{enumerate}[(\textrm{S}1)]
\item \label{BB}
\textbf{$n\geq 4$. $(n-1)$B, $n$B}\quad Symmetric to~(T\ref{AA}).
People~$n-1$ and~$n$ pair off.

\item \label{stwon}
\textbf{$n\geq 4$. $1$B, $i$B, $(n-1)$A, $n$B for some
$1<i<n-1$.}\quad This is symmetric to Case~(M\ref{twon}). The
semi-engagement from~$i$ to~$n$ removes the pairs
$(i-1,n),\ldots,(1,n)$ so~$1$ and~$n-1$ pair off.

\item
\label{ok} \textbf{$n\geq 4$.  $i$B, $(n-2)$B, $(n-1)$B $n$A for
some $i<n-2$.}\quad Symmetric to case~(T\ref{BAnA}). Person~$i$
becomes semi-engaged to person~$n$, removing pairs
$((i+1,n),\ldots,(n-1,n))$. Now persons~$n-2$ and~$n-1$ prefer each
other and can pair off.

\item\label{zz}
\textbf{$n\geq 6$. $1$A, $2$B, $(n-2)$B, $(n-1)$A, $n$B} Symmetric
to Case~(T\ref{z}). Here~$n-2$ becomes semi-engaged to~$n$ removing
$(1,n),\ldots,(n-3,n)$ (including especially~$(2,n)$. Then $2$
becomes semi-engaged to~$n-1$ removing $(3,n-1),\ldots,(n,n-1)$
(including especially $(n,n-1)$). Now $n-2$ and $n$ can pair off.

\end{enumerate}

Now let's see that we've covered all cases for $n\geq 4$. Start by
looking at all~$16$ cases for lists~$1$, $2$, $n-1$ and~$n$.
Cases~(T\ref{AA}), (M\ref{opp}) and (S\ref{BB}) cover all
possibilities except
\begin{enumerate}[(\textrm{C}1)]
\item $1$A $2$B $(n-1)$A $n$A \label{doneM2}
\item $1$A $2$B $(n-1)$A $n$B\label{y1}
\item $1$A $2$B $(n-1)$B $n$A\label{y2}
\item $1$B $2$A $(n-1)$A $n$B\label{y3}
\item $1$B $2$B $(n-1)$A $n$B\label{other}
\end{enumerate}
Also, (C\ref{doneM2}) is covered by Case~(M\ref{twon}) and
(C\ref{other}) is covered by Case~(S\ref{stwon}). For $n=4$,
(C\ref{y1}),
(C\ref{y2})
and (C\ref{y3}) are explicitly covered by
our $n=4$ base cases. For $n=6$, they are covered as follows.

\begin{itemize}
\item  (C\ref{y2}):
  If any of $3,\ldots,n-2$ is an~A, then use Case~(M\ref{twon}).
Otherwise, use Case~(S\ref{ok}).
\item (C\ref{y3}):
Symmetrically,
 if any of $3,\ldots,n-2$ is a~B, then use Case~(S\ref{stwon}). Otherwise, use Case~(T\ref{BAnA})
 \item (C\ref{y1}):
  This is covered by cases (T\ref{z}) and (S\ref{zz}) and (M\ref{zzz})
\end{itemize}
\end{proof}

\section{$\BIS$ hardness of $\threeSRa$ and \\ $\twoSRe$}
\label{sec:BIShardness}

\subsection {$\threeSRa$}\label{sect:threeSRa}

The goal of this section is to prove Theorem~\ref{thm:BIS-3SR}
which we restate below.

\setcounter{counter:save}{\value{theorem}}
\setcounter{theorem}{\value{counter:BIS-3SR}}
\begin{theorem}
$\BIS \APred \threeSRa$.
\end{theorem}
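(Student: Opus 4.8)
## Proof Proposal for Theorem~\ref{thm:BIS-3SR}

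The plan is to mimic the structure of the proof of Theorem~\ref{thm:4-SR}, but now starting from a $\#\BIS$ instance and reducing to a $3$-attribute stable roommate instance. Given a bipartite graph $H$ as an instance of $\#\BIS$, I would first reduce it — exactly as in Section~\ref{section:begin-construction} — to the problem of constructing a roommate instance $I$ whose graph $G(I)$ is isomorphic to an auxiliary graph $\Gamma'$ obtained from $H$, so that maximal independent sets of $\Gamma'$ correspond to independent sets of $H$. The only structural difference from the $\#\IS$ case is that here the input graph is already bipartite, so the "doubling'' step that was needed to handle arbitrary graphs $\Gamma$ becomes simpler: the bipartition of $H$ itself already gives the sets $B$ and $T$. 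Thus I would again arrive at the task of building a roommate instance $I$ whose nonsingleton rotations can be labelled bijectively with $B \cup T$ so that conditions analogous to (G1)--(G4) hold, where now the relevant "edge set'' condition encodes the edges of $H$.

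The key point is that in the $\#\BIS$ setting the input is bipartite, which means the geometric construction can be carried out in three attribute dimensions rather than four. In the $4$-attribute construction of Section~\ref{sect:preference-vectors}, two of the coordinates were used to lay out the $\rho$-cycles (controlling the $Q$-people's preference lists and the rotations $R_j$) and two were used to lay out the $\sigma$-cycle structure (controlling the $P$-people's preference lists, the dual rotations $R^d_j$, and the precedences). Since a bipartite graph requires only a one-directional precedence structure (edges go between $B$ and $T$, and we only need rotations $R_i$ to precede dual rotations $R^d_j$, never the reverse), I expect that one of these two "layout circles'' can be collapsed onto a single axis — for instance placing the $Q$-positions and the cycle structure for the $P$-people on a single coordinate with carefully chosen spacings, exactly as the $k$-Euclidean construction in Section~\ref{thm:3SRe} collapsed a two-dimensional circle layout onto the $z$-axis. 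The construction would reuse the $\rho$-cycle / $\sigma$-cycle machinery, the permutation $\psi$, the representatives $Rep(\rho)$, and the "noise'' analysis essentially verbatim, borrowing from the construction in~\cite{CGM} for the $3$-attribute stable marriage problem.

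After setting up the position and preference vectors, I would verify (exactly as in Sections~\ref{sect:finding-rotations}--\ref{sect:ordering-rotations}) that Phase~1 of Irving's algorithm leaves the "unnoisy'' preference lists~(\ref{eq:unnoisy-Q-pref-lists}) and~(\ref{eq:unnoisy-P-pref-lists}), that the only rotations are the $R_j$ and $R^d_j$, and that the explicit-precedence relation among them encodes precisely the edges of $H$ — giving the analogues of Lemmas~\ref{lem:bottom-rotations}, \ref{lem:top-rotations}, \ref{lem:allrotations}, \ref{lem:aremin}, and \ref{lem:exp-prec}. By Theorem~\ref{thm:indsets}, $G(I)$ is then isomorphic to $\Gamma'$, so approximately counting stable assignments of $I$ yields an approximate count of independent sets of $H$, and the reduction makes a single oracle call with the same error parameter, which is an AP-reduction.

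The main obstacle is the explicit geometric realizability in only three dimensions: I must show that the "noise'' appearing on the $P_{i_1}$ preference lists (the bracketed portions of~(\ref{eq:noisy-P-pref-lists})) still disappears during Phase~1 when one fewer coordinate is available, and — crucially — that the subtler issue flagged in the footnote also holds: every person must rank \emph{all} $4m-1$ others, not just the people on the opposite "side'', so the tails of the preference lists (beyond the prefixes shown) must be arranged to introduce no spurious rotations and no spurious precedences. This last point is exactly what distinguishes Theorem~\ref{thm:BIS-3SR} from the marriage result of~\cite{CGM}, and handling it carefully — by choosing the remaining angular/radial parameters so that all "cross-cycle'' and "tail'' comparisons resolve harmlessly — is where the real work lies.
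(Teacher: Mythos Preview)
Your approach diverges substantially from the paper's, and the justification you give for why three dimensions suffice contains a genuine gap.

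The paper does \emph{not} adapt the rotation-poset construction of Section~\ref{section:begin-construction} at all. Instead, it starts from the $3$-attribute stable \emph{marriage} instance~$I^*$ built in~\cite{CGM} (which uses $3n$ men $A_i,B_i,C_i$ and $3n$ women $a_i,b_i,c_i$, a different gadget from the $P/Q$ construction) and embeds it as a roommate instance~$I$ on the same $6n$ people. The embedding compresses the original angular layout into a wedge of angle $\zeta<\pi/4$, places the men's position vectors and women's preference vectors in the diametrically opposite wedge (offset by~$\pi$, with the third coordinate negated), and then checks via a simple sign-of-dot-product lemma that every man ranks all women above all men and vice versa. This immediately forces the stable assignments of~$I$ to coincide with those of~$I^*$, and the hard combinatorics (rotations, precedences) is inherited wholesale from~\cite{CGM} rather than redone.

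Your reasoning for why bipartiteness buys a dimension is where the gap lies. You claim that in the $4$-attribute construction one pair of coordinates encodes ``forward'' precedences and the other ``reverse'' ones, so bipartiteness of the input eliminates the need for one pair. But that is not what the two coordinate pairs do: the first pair controls the $Q$-people's preference lists (producing the rotations $R_j$), and the second pair controls the $P$-people's preference lists (producing the duals $R^d_j$ and the interleaved $P_{\psi(i_j)}$ entries). Both are needed to realise \emph{any} instance of the shape~(\ref{eq:unnoisy-Q-pref-lists})--(\ref{eq:unnoisy-P-pref-lists}), and the precedence structure in Section~\ref{sect:ordering-rotations} is \emph{already} one-directional (Lemmas~\ref{lem:aremin} and~\ref{lem:exp-prec} show only $R_i$'s precede $R^d_j$'s) even when the input~$\Gamma$ is arbitrary. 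Bipartiteness of~$\Gamma$ therefore does not simplify the geometric layout in the way you suggest, and collapsing one of the two circles to a single axis would break the dot-product orderings needed for the preference-list prefixes. Your plan may be salvageable with a genuinely different $3$-dimensional gadget, but the argument as written does not supply one.
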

\setcounter{theorem}{\value{counter:save}}

\begin{proof}

We re-use the construction that we used in~\cite{CGM} to reduce
$\BIS$ to counting stable assignments in the 3-attribute stable
marriage  model.

In \cite[Section 4.1.1]{CGM}, we  show how to take a $\BIS$ instance
$G=(V_1 \cup V_2,E)$, where $E \subseteq V_1 \times V_2$ and $|E|=n$
and turn it into a 3-attribute stable matching instance~$I^*$ with
$3n$ men and $3n$ women which are denoted
$\{A_1,\ldots,A_n,B_1,\ldots,B_n,\;\;$ $C_1,\ldots,C_n\}$ and
$\{a_1,\ldots,a_n,b_1,\ldots,b_n,c_1,\ldots,c_n\}$, respectively. We
associate two permutations, $\rho$ and $\sigma$ of $[n]$ with the
BIS instance. We show that independent sets of~$G$ are in one-to-one
correspondence with stable assignments of~$I^*$.

Using the permutations~$\rho$ and~$\sigma$ we will now show how to
modify the construction to obtain a $3$-attribute stable roommate
instance~$I$ with people
$$\{A_1,\ldots,A_n,B_1,\ldots,B_n,C_1,\ldots,C_n\} \cup
\{a_1,\ldots,a_n,b_1,\ldots,b_n,c_1,\ldots,c_n\}$$ so that stable
assignments for our instance~$I$ are in one-to-one correspondence
with stable assignments for~$I^*$.
Even though the stable roommate instance~$I$ simply has $6n$ people
(rather than having $3n$ men and $3n$ women, we will refer to the people
$\{A_1,\ldots,A_n,B_1,\ldots,B_n,C_1,\ldots,C_n\}$
as ``men'' and the people
$\{a_1,\ldots,a_n,b_1,\ldots,b_n,c_1,\ldots,c_n\}$
as ``women'' to simplify some of the descriptions.

Position and preference vectors for the instance~$I$ are defined
similarly to those of~$I^*$ from \cite[Section 4.1.2]{CGM}. We
reproduce here the required notation from \cite{CGM}: the $i$'th
$\sigma$-cycle has length~$p_i$, the $i$'th $\rho$-cycle has
length~$q_i$, $Rep(\rho)$ is the set consisting of one
representative for each cycle in $\rho$, and $Rep(\sigma)$ is the
set consisting of one representative for each cycle in $\sigma$.
The number of $\sigma$ cycles is $\ell$. The number of $\rho$ cycles is $k$.
The
only difference between the construction used here and the
construction used  in~\cite{CGM} is that, instead of using the full unit
circle in the first two coordinates to position people, we instead
use an angle of $\zeta < \pi/4$ to fit in the position vectors of
the
women
and the preference vectors of the
men.
The third coordinate of every  vector remains
unaltered.

\begin{align*}
\textrm{Let}\ \epsilon &= \frac{\zeta}{n^2}.\\
\textrm{For}\  e_i &\in Rep(\sigma), \; \textrm{let}\
\theta_i = \epsilon/(7p_i-1).\;\; \textrm{Then for}\ 0\leq m\leq p_i-1\ \textrm{define}\\
 \bar{a}_{\sigma^{m}{e_i}} &= \left(\cos(\zeta(i-1)/l + 7m\theta_i +
4\theta_i), \;\;\sin(\zeta(i-1)/l + 7m\theta_i +
4\theta_i), \;\;0\right), \\
\bar{b}_{\rho\sigma^{m}{e_i}} &= (\cos(\zeta(i-1)/l + 7m\theta_i +
6\theta_i), \;\;\sin(\zeta(i-1)/l + 7m\theta_i + 6\theta_i), \;\;
4^{\rho\sigma^{m}{e_i}}),\;\\
\textrm{and\;\;\;} & \\
\bar{c}_{\sigma^{m-1}{e_i}} &= \left(\cos(\zeta(i-1)/l +
7m\theta_i), \;\;\sin(\zeta(i-1)/l + 7m\theta_i), \;\;0\right).
\end{align*}

\begin{align*}
\textrm{Let}\ \phi &= 2\pi /100\ \textrm{and}\ \epsilon = \frac{\zeta}{n^2}.\;\; \\
\textrm{For}\ e_i &\in Rep(\sigma),\ \textrm{let}\ \theta_i = \epsilon/(7p_i-1).\;\; \textrm{Then for}\ 0\leq m\leq p_i-1\ \textrm{define}\\
 \hat{A}_{\sigma^{m}{e_i}} &= \left(\cos(\zeta(i-1)/l + 7m\theta_i +
(14/3)\theta_i),\right.\\
&\hspace{1.2in}\left.\sin(\zeta(i-1)/l + 7m\theta_i +
(14/3)\theta_i),\, 0\right), \\
\hat{B}_{\sigma^{m}{e_i}} &= \left(\sin \phi \cos(\zeta(i-1)/l +
7m\theta_i + 4\theta_i), \right.\\
&\hspace{1.2in}\left.\;\;\sin \phi \sin(\zeta(i-1)/l + 7m\theta_i +
4\theta_i),
\;\;\cos \phi \right),\;\\
\textrm{and}\;\;\; &\\
\hat{C}_{\sigma^{m-1}{e_i}} &= \left(\cos(\zeta(i-1)/l + 7m\theta_i
+ (8/5)\theta_i), \;\sin(\zeta(i-1)/l + 7m\theta_i + (8/5)\theta_i),
\;0\right).
\end{align*}

We note that $\epsilon$ and the $2\pi$ factor in the cosine and sine
terms have been scaled appropriately (relative to the construction in~\cite{CGM}) by the factor $(\zeta/2\pi)$.
If we were to restrict the preference lists of
the men
to the set of
women,
then
these preference lists would match the lists of the
men from the
stable marriage instance $I^*$ of~\cite{CGM}.

Since we are working with a stable
roommate instance, we also have to place the position vectors of the
the men
and the preference vectors of the
women
in the same 3-dimensional space. We take the position
vectors of the
men
and the preference vectors of
the
women
from the stable marriage instance~$I^*$ and
modify them as follows:
\begin{enumerate}[(i)]
\item We scale down the $\epsilon$ and $2\pi$
terms in the sine and cosine terms by a factor of $(\zeta/2\pi)$,
\item we offset the angle in the sine and cosine terms by an angle of
$\pi$, and
\item we negate the  third coordinate of all the  position and preference vectors.
\end{enumerate}  We summarise the position vectors of the
men
and the preference vectors of the
women
in the stable roommate instance~$I$
below.

\begin{align*}
\textrm{Let}\ \epsilon &= \frac{\zeta}{n^2}. \\
\textrm{For}\ f_i &\in Rep(\rho),\; \textrm{let}\ \omega_i = \epsilon/(7q_i-1).\; \textrm{Then for}\ 0\leq m\leq q_i-1\;\textrm{we define}\\
\bar{A}_{\rho^{m-1}{f_i}} &= \left(\cos(\pi+ \zeta(i-1)/k +
7m\omega_i),
\;\;\sin(\pi+ \zeta(i-1)/k + 7m\omega_i), \;\;0\right), \\
\bar{B}_{\rho^{m}{f_i}} &= \left(\cos(\pi+ \zeta(i-1)/k + 7m\omega_i
+4\omega_i),\right.\\
&\hspace{1.2in}\left.\;\sin(\pi+ \zeta(i-1)/k + 7m\omega_i +
4\omega_i), \;0\right), \;\\
\textrm{and} \;\;\;&\\
\bar{C}_{\rho^{m}{f_i}} &= \left(\cos(\pi+ \zeta(i-1)/k + 7m\omega_i
+ 6\omega_i), \right.\\
&\hspace{1.2in}\left.\sin(\pi+ \zeta(i-1)/k + 7m\omega_i +
6\omega_i), \;\;-4^{\rho^{m}{f_i}}\right).
\end{align*}

\begin{align*}
\textrm{Let}\ \phi &= 2\pi /100\ \textrm{and}\ \epsilon = \frac{\zeta}{n^2}.\\
\textrm{For}\ f_i &\in Rep(\rho), \;\textrm{let}\ \omega_i = \epsilon/(7q_i-1).\; \textrm{Then for}\ 0\leq m\leq q_i-1\; \textrm{we define}\\
\hat{a}_{\rho^{m}{f_i}} &= (\sin \phi \cos(\pi+ \zeta(i-1)/k +
7m\omega_i + 4\omega_i), \;\;\\
&\;\;\;\;\;\hspace{0.4in}\sin \phi \sin(\pi+ \zeta(i-1)/k + 7m\omega_i + 4\omega_i), \;\;-\cos \phi),\;  \\
\hat{b}_{\rho^{m}{f_i}} &= \left(\cos(\pi+ \zeta(i-1)/k + 7m\omega_i
+ (8/5)\omega_i)\right.,
\;\;\\
&\;\;\;\;\;\hspace{0.4in} \left.\sin(\pi+ \zeta(i-1)/k + 7m\omega_i + 8/5\omega_i), \;\;0\right),\; \textrm{and}\\
\hat{c}_{\rho^{m}{f_i}} &= \left(\cos(\pi+ \zeta(i-1)/k + 7m\omega_i
+
14/3\omega_i), \;\;\right.\\
&\;\;\;\;\;\hspace{0.4in}\left.\sin(\pi+ \zeta(i-1)/k + 7m\omega_i + (14/3)\omega_i), \;\;0\right). \\
\end{align*}

As observed before, if we were to restrict the preference lists of
the women
to the set of
men
then these preference lists would match the lists of the
women
from the stable marriage instance $I^*$
from \cite{CGM}.

We now state a very simple lemma which  we will use to  connect our stable
roommate  instance
$I$
to the stable matching instance~$I^*$.

\begin{lemma}
Suppose vector $\bar{v}_1 = (cos \theta_1, sin \theta_1, \alpha_1)$
and vector $\bar{v}_2 = (cos \theta_2, sin \theta_2,$ $\alpha_2)$.

(i)If $\pi/2< |\theta_1 - \theta_2| < 3\pi/2$ and $\alpha_1\cdot
\alpha_2 \leq 0$, then $\bar{v}_1 \cdot \bar{v}_2 < 0$.

(ii)If $0< |\theta_1 - \theta_2| < \pi/4$ and $\alpha_1\cdot
\alpha_2 \geq 0$, then $\bar{v}_1 \cdot \bar{v}_2 > 0$.
\end{lemma}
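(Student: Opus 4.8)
The plan is to reduce everything to a single elementary fact about the cosine function. First I would expand the dot product using the standard coordinate formula together with the angle-subtraction identity:
\[
\bar{v}_1 \cdot \bar{v}_2 = \cos\theta_1\cos\theta_2 + \sin\theta_1\sin\theta_2 + \alpha_1\alpha_2 = \cos(\theta_1 - \theta_2) + \alpha_1\alpha_2 .
\]
So the sign of $\bar{v}_1\cdot\bar{v}_2$ is governed by the two summands $\cos(\theta_1-\theta_2)$ and $\alpha_1\alpha_2$, and in each part of the lemma one of these is strictly signed while the other is weakly signed in the same direction.

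For part~(i) I would observe that cosine is even, so $\cos(\theta_1-\theta_2)=\cos|\theta_1-\theta_2|$, and on the open interval $(\pi/2,3\pi/2)$ cosine is strictly negative; hence the hypothesis $\pi/2<|\theta_1-\theta_2|<3\pi/2$ gives $\cos(\theta_1-\theta_2)<0$. Adding the non-positive term $\alpha_1\alpha_2$ keeps the sum strictly negative, which is the claim. For part~(ii), the hypothesis $|\theta_1-\theta_2|<\pi/4$ in particular places $|\theta_1-\theta_2|$ in $[0,\pi/2)$, where cosine is strictly positive, so $\cos(\theta_1-\theta_2)>0$; since $\alpha_1\alpha_2\ge 0$, the sum is strictly positive.

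There is essentially no obstacle here: the lemma is purely a bookkeeping tool for later sign arguments. The only points needing a moment's care are that the inequalities in the conclusions are strict — which is fine, because the cosine term alone already supplies the strict inequality and the weakly-signed $\alpha_1\alpha_2$ term cannot cancel it — and that the hypotheses confine $|\theta_1-\theta_2|$ to the stated ranges, so no wrap-around arises that could flip the sign of the cosine beyond what is written. The extra hypothesis $0<|\theta_1-\theta_2|$ in~(ii) is not actually needed for the stated conclusion; it simply records that distinct people in the construction occupy distinct angular positions.
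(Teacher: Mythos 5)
Your proposal is correct and matches the paper's own (very terse) proof: both expand $\bar{v}_1\cdot\bar{v}_2$ as $\cos(\theta_1-\theta_2)+\alpha_1\alpha_2$ and read off the sign from the range of $|\theta_1-\theta_2|$ together with the sign condition on $\alpha_1\alpha_2$. Your write-up merely spells out the steps the paper leaves implicit.
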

\begin{proof}
(i)The dot product $\bar{v}_1 \cdot \bar{v}_2 = cos (\theta_1 -
\theta_2) + \alpha_1\cdot \alpha_2 < 0$.

(ii)The dot product $\bar{v}_1 \cdot \bar{v}_2 = cos (\theta_1 -
\theta_2) + \alpha_1\cdot \alpha_2 > 0$.
\end{proof}

Applying this lemma to the preference vectors and the position
vectors of the
men,
we see that the dot
product of a preference vector of
a man
with the position vector of
a man
is always negative. Similarly, the dot product of a preference vector
of
a man
with the position vector of
a woman
is always positive. This implies that
the initial $n$ positions on the preference lists of
the men
would be populated by
the women
and would coincide with the preference lists of the
the men
in the stable matching instance $I^*$ from~\cite{CGM}. The
same holds true for the preference lists of
the women.
The first $n$ positions on their preference lists are
occupied by
the men
and this initial part of
their preference lists matches the preference lists of the
women
from the stable matching instance $I^*$.

To finish the proof we show that the stable matchings of~$I$ are in one-to-one correspondence
with the stable matchings of~$I^*$.

First, suppose that~$M$ is a stable matching of~$I^*$. It is clear that~$M$ is a matching of~$I$,
so we must show that it is stable for~$I$.
Since $M$ is stable for~$I^*$,  it has no man-woman blocking pairs.
Also, it is easy to see that, in~$I$, there is no man-man blocking pair
(since each man prefers all women to the other men) and similarly, there is no woman-woman blocking pair. Thus, $M$ is a stable matching of~$I$.

Next, suppose that~$M$ is  a stable matching of~$I$.  First, we show that~$M$ is
a valid matching of~$I^*$ --- that is, every matched pair consists of one man and one woman.
Suppose instead that two men~$P_i$ and~$P_j$ are matched in~$I$. By the pigeonhole
principle, two women $p_k$ and $p_\ell$ must also be matched in~$I$ but now $(P_i,p_\ell)$ form
a blocking pair. Thus, $M$ is a matching of~$I^*$. Since $M$ has no blocking pairs in~$I$
it also has no blocking pairs in~$I^*$, so it is a stable matching of~$I^*$.

Thus, the set of stable assignments for the roommate instance~$I$ is
identical to the set of stable assignments (matchings) for the
stable marriage instance~$I^*$. We have already shown in~\cite{CGM}
that the latter is in one-to-one correspondence with the independent
sets of~$G$, completing the proof.

\end{proof}

\subsection {$\twoSRe$}
\label{sec:blahblah}

The goal of this section is to prove Theorem~\ref{thm:BIS-2SRe}
which we restate below.

\setcounter{counter:save}{\value{theorem}}
\setcounter{theorem}{\value{counter:BIS-2SRe}}
\begin{theorem}
$\BIS \APred \twoSRe$.
\end{theorem}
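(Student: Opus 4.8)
The plan is to follow the proof of Theorem~\ref{thm:BIS-3SR} almost verbatim, re-using the $2$-Euclidean stable marriage construction of~\cite{CGM}. Given a $\BIS$ instance $G=(V_1\cup V_2,E)$ with $|E|=n$, that construction produces (via two associated permutations $\rho$ and $\sigma$) a $2$-Euclidean stable \emph{marriage} instance $I^*$, with $N$ men and $N$ women for some $N=\poly(n)$, together with a bijection between the independent sets of $G$ and the stable matchings of $I^*$. The crucial observation is that in $I^*$ the men's preference lists over the women are determined \emph{only} by the Euclidean distances from the men's preference points to the women's position points, and the women's lists over the men \emph{only} by the distances from the women's preference points to the men's position points; no other distances are consulted.

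To build a $2$-Euclidean stable \emph{roommate} instance $I$ on the same set of $2N$ people, I must place all four families of points --- the men's preference points, the men's position points, the women's preference points, and the women's position points --- in one common copy of $\mathbb{R}^2$. I would take the (finite, hence bounded) sub-configuration consisting of the men's preference points together with the women's position points, apply a similarity transformation that shrinks it into a disc $\mathcal{A}$ of tiny radius, and place it there; and likewise shrink the sub-configuration consisting of the women's preference points together with the men's position points into a disc $\mathcal{B}$ of tiny radius, placing $\mathcal{B}$ at Euclidean distance $D$ from $\mathcal{A}$, where $D$ is astronomically large compared with the two radii (say $D=10^{10^{n}}$). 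A similarity scales all pairwise distances by a common positive factor and so preserves the order in which a preference point sees the position points lying in its own disc; hence inside $\mathcal{A}$ the men still rank the women exactly as in $I^*$, and inside $\mathcal{B}$ the women still rank the men exactly as in $I^*$. The role of $D$ is the Euclidean analogue of the sign trick used on the third coordinate in the proof of Theorem~\ref{thm:BIS-3SR}: from a man's preference point in $\mathcal{A}$, every woman's position point lies in $\mathcal{A}$ while every man's position point lies in $\mathcal{B}$ at distance roughly $D$, so taking $D$ large enough forces every man to rank all of the women above all of the men, with the order among the women inherited from $I^*$; symmetrically every woman ranks all of the men above all of the women, inheriting the order among the men from $I^*$. (As in the remark at the end of Section~\ref{sect:preference-vectors}, an infinitesimal perturbation of the two configurations makes all distances distinct and the instance valid, without disturbing the relevant prefixes of the preference lists.)

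With the construction in place, the endgame is identical to that of Theorem~\ref{thm:BIS-3SR}. Every stable matching of $I^*$ is a matching of $I$ with no man--woman blocking pair, and it has no man--man or woman--woman blocking pair because in $I$ each man prefers every woman to every man and each woman prefers every man to every woman; so it is stable for $I$. Conversely, in a stable matching of $I$ each man is matched to a woman, since otherwise some man is matched to a man and, by the pigeonhole principle, some woman is matched to a woman, and this man and woman form a blocking pair; hence it is a matching of $I^*$, and having no blocking pair in $I$ it has none in $I^*$. Therefore the stable roommate assignments of $I$ are exactly the stable matchings of $I^*$, which by~\cite{CGM} are in one-to-one correspondence with the independent sets of $G$; the construction is plainly polynomial-time, which completes the reduction $\BIS\APred\twoSRe$. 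The only point demanding genuine (though routine) care is the quantitative separation estimate --- verifying that one choice of $D$ simultaneously dominates all of the intra-disc distances seen by every person --- which is immediate once the explicit bounded configurations of~\cite{CGM} are fixed.
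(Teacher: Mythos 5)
Your construction is correct, but it reaches the crucial geometric separation by a genuinely different route from the paper's. The paper does not relocate the \cite{CGM} configuration into two far-apart discs; instead it keeps the men's preference points and the women's position points exactly as in the $2$-Euclidean marriage instance~$I^*$, and obtains the women's preference points and the men's position points by negating both coordinates (a reflection through the origin). It then proves two explicit distance estimates: every woman on a man's \emph{initial} preference list is closer to his preference point than the origin is, and the origin is in turn closer than the position point of any man. This yields only the weaker conclusion that the women on the relevant prefix precede all men on each man's list in~$I$ (women not on the prefix may be interleaved with men further down), which suffices because all stable matchings of~$I^*$ pair people within those prefixes. Your two-disc construction is more modular: it uses the \cite{CGM} instance purely as a bounded black box, and the large inter-disc separation gives the stronger property that \emph{every} man ranks \emph{all} women above all men (and symmetrically), so the pigeonhole/blocking-pair endgame is literally the one from Theorem~\ref{thm:BIS-3SR} with no further distance calculations. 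The price is negligible; the one caveat worth recording is that your illustrative choice $D=10^{10^{n}}$ has exponentially many digits, so for the reduction to run in polynomial time you should instead fix $D$ to be any rational exceeding a constant multiple of the (rescaled) disc diameters, which is easily done with polynomially many bits, and use a rational homothety rather than a general similarity so that all coordinates stay rational.
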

\setcounter{theorem}{\value{counter:save}}

\begin{proof}

As in section \ref{sect:threeSRa}, we  re-use  the construction
that we used in~\cite{CGM}
to reduce $\BIS$ to counting stable assignments in the $2$-Euclidian stable marriage model.

In \cite[Section  6]{CGM}, we  show how to take a $\BIS$ instance
$G=(V_1 \cup V_2,E)$, where $E \subseteq V_1 \times V_2$ and $|E|=n$
and turn it into a  $2$-Euclidian stable matching instance~$I^*$
with $3n$ men and $3n$ women which are denoted
$\{A_1,\ldots,A_n,B_1,\ldots,B_n,C_1,\ldots,C_n\}$ and
$\{a_1,\ldots,a_n,b_1,\ldots,b_n,c_1,\ldots,c_n\}$, respectively, as
before. Once again, we associate two permutations, $\rho$ and
$\sigma$ of $[n]$ with the BIS instance. We show that independent
sets of~$G$ are in one-to-one correspondence with stable assignments
of~$I^*$.

Using the permutations~$\rho$ and~$\sigma$ we will now show how to
modify the construction to obtain a  $2$-Euclidian stable roommate
instance~$I$ with people
$$\{A_1,\ldots,A_n,B_1,\ldots,B_n,C_1,\ldots,C_n\} \cup
\{a_1,\ldots,a_n,b_1,\ldots,b_n,c_1,\ldots,c_n\}$$ so that stable
assignments for our instance~$I$ are in one-to-one correspondence
with stable assignments for~$I^*$. We will refer to the people
$\{A_1,\ldots, A_n,B_1, \ldots,B_n,\;\;$ $C_1, \ldots,C_n\}$ as
``men'' in~$I$ and to the people
$\{a_1,\ldots,a_n,b_1,\ldots,b_n,c_1,\ldots,c_n\}$ as ``women'' in
order to make it easier to describe the construction.

In the stable roommate construction, the preference points of the
men and the position points of the  women are the same as those in
the 2-Euclidean stable marriage construction from~\cite{CGM}. The
preference points of the women are obtained from those in the stable
marriage instance by negating both coordinates. Also, the position
points of the men are obtained from those in the stable marriage
instance by negating both coordinates. Before we define the
positions, we wish to remind the reader that the $i$'th
$\sigma$-cycle (out of $\ell$ $\sigma$ cycles) has length~$p_i$, the
$i$'th $\rho$-cycle (out of $k$ $\rho$ cycles) has length~$q_i$,
$Rep(\rho)$ is the set consisting of one representative for each
cycle in $\rho$, and $Rep(\sigma)$ is the set consisting of one
representative for each cycle in $\sigma$ as in \cite{CGM}.

The positions are defined as follows. For $ e_j \in Rep(\sigma)$, $
f_j \in Rep(\rho)$, $0 \leq h \leq p_j-1$, and $0 \leq g \leq q_j-1$
we let
\begin{align*}
\bar{a}_{\sigma^{h} e_j} &= \left(\;\sum_{i=0}^{j-1}2p_i + h+1\;,\;0\right), \\
\bar{b}_{\rho\sigma^{h} e_j} &= \left(0\;,\;\sum_{i=0}^{j-1}2p_i + h+1\right),\\
\bar{c}_{\sigma^{(h-1)} e_j} &= \left(\;\sum_{i=0}^{j-1}2p_i +
h+0.3\;,\;0\right),\\
\bar{A}_{\rho^{g-1} f_j} &= \left(\;-\sum_{i=0}^{j-1}2q_i - g-0.3\;,\;0\right),\\
\bar{B}_{\rho^{g} f_j} &= \left(\;-\sum_{i=0}^{j-1}2q_i - g-1\;,\;0\right), \mbox{ and}\\
\bar{C}_{\rho^{g} f_j} &= \left(0\;,\;-\sum_{i=0}^{j-1}2p_i -g-1\right).
\end{align*}

The preferences are defined as follows.
Let $\epsilon =  1/100^{n}$. For $e_j  \in Rep(\sigma)$,
$f_j  \in Rep(\rho)$, $0 \leq h \leq p_j-1$, $0 \leq g \leq q_j-1$, we let

\begin{align*}
\hat{A}_{\sigma^{h} e_j} &= \left(\;\sum_{i=0}^{j-1}2p_i + h+1\;,\;\sum_{i=0}^{j-1}2p_i + h+1 - \epsilon\right), \\
\hat{B}_{\sigma^{h} e_j} &= \left(\;\sum_{i=0}^{j-1}2p_i + h+1\;,\;1000^{n}\right),   \\
\hat{C}_{\sigma^{(h-1)} e_j} &= \left(\;\sum_{i=0}^{j-1}2p_i +
h+0.6\;,\;0\right),\\
\hat{a}_{\rho^{g} f_j} &= \left(\;-\sum_{i=0}^{j-1}2q_i - g-1\;,\;\;-1000^{n}\right), \\
\hat{b}_{\rho^{g} f_j} &= \left(\;-\sum_{i=0}^{j-1}2q_i - g-0.6\;,\;0\right),  \mbox{ and} \\
\hat{c}_{\rho^{g} f_j} &= \left(\;-\sum_{i=0}^{j-1}2q_i -
g-1\;,\;-\sum_{i=0}^{j-1}2q_i - g-1 + \epsilon\right).
\end{align*}

We show in \cite[Section 6]{CGM} that the preference lists of the stable matching instance~$I^*$
have prefixes as described as follows, where $\tau$ is a permutation of $[n]$ (we won't
need the details of~$\tau$ in this paper).
In all stable matchings, the men and women of~$I^*$
are matched to partners which are included in these prefixes.
We refer to these lists as the ``initial'' preference lists of~$I^*$.

\begin{align}
\label{women}
\textrm{for}\ f_i&\in Rep(\rho),\\
\nonumber
b_{\rho^m f_i} \;\;&:\;\;\; A_{\rho^{(m-1)} f_i} B_{\rho^m
  f_i}\;\;,\;\;\;0\leq m \leq q_i-1, \\
\nonumber
c_{\rho^{m} f_i} \;\;&:\;\;\; B_{\rho^{m} f_i} C_{\rho^{m}
  f_i}\;\;,\;\;\;0\leq m \leq q_i-1, \\
\nonumber
a_{\rho^m f_i} \;\;&:\;\; C_{n} C_{n-1} \cdots C_{1}
B_{\rho^m f_i} A_{\rho^m f_i}
\;\;,\;\;\;0\leq m \leq q_i-2,\; \textrm{and}\\
\nonumber
a_{\rho^{(q_i-1)} f_i} \;\;&:\;\;
C_{n} C_{n-1} \cdots C_{1}
\\
\nonumber
&\;\;
B_{\rho^{(q_i-1)} f_i} A_{\rho^{(q_i-2)} f_i} B_{\rho^{(q_i-2)}
f_i}\cdots B_{\rho^{2} f_i}A_{\rho f_i} B_{\rho f_i}
A_{f_i} B_{f_i}
A_{\rho^{(q_i-1)}f_i}.
\end{align}
\begin{align}
\label{men}
\textrm{For}\ e_i&\in Rep(\sigma),
\\
\nonumber
A_{\sigma^m e_i} \;\;&:\;\;\; a_{\sigma^m e_i} b_{\rho \sigma^m
  e_i}\;\;,\;\;\;0\leq m \leq p_i-1, \\
\nonumber
C_{\sigma^{(m-1)} e_i} \;\;&:\;\;\; c_{\sigma^{(m-1)} e_i}
a_{\sigma^{m} e_i}\;\;,\;\;\;0\leq m \leq p_i-1, \\
\nonumber
B_{\sigma^m e_i} \;\;&:\;\; b_{\tau(n)} b_{\tau(n-1)} \cdots b_{\tau(1)}
a_{\sigma^m
  e_i} c_{\sigma^m e_i}\;\;,\;\;\;0\leq m \leq p_i-2,\; \textrm{and}
  \\
\nonumber
B_{\sigma^{(p_i-1)} e_i} \;\;&:\;\; b_{\tau(n)} b_{\tau(n-1)} \cdots
b_{\tau(1)}
a_{\sigma^{(p_i-1)} e_i} c_{\sigma^{(p_i-2)} e_i}
a_{\sigma^{(p_i-2)} e_i}\cdots a_{\sigma e_i}c_{e_i}
a_{e_i}c_{\sigma^{(p_i-1)} e_i}.
\end{align}

We now make three observations. The first of these is self-evident from the
construction.  We provide justifications
below
for Observations~2 and~3.
\begin{enumerate}

\item For the stable roommate instance~$I$, the preference lists of the  men, when restricted
to  women,   match the preference lists from the
stable marriage instance~$I^*$.  Similarly, the preference lists
of the women, when restricted to men, match
the preference lists from~$I^*$.

\item  For the stable marriage instance $I^*$, the distance between the preference position
of any man  and the   position point of any  woman on   his  initial
preference list is less than the distance between his  preference
position and the origin. Similarly, the distance between the
preference position of any woman  and the   position point of any
man on  her  initial preference list is less than the distance
between
 her  preference position and the origin.

\item For the stable roommate instance~$I$,  the distance between the
preference position of any man  and the origin is less than the
distance between his  preference position and the position point of
any other man. Similarly,  the distance between the preference
position of any woman  and the origin is less than the distance
between her  preference position and the position point of any other
woman.

\end{enumerate}

We now provide arguments that validate Observations~2 and~3. The
following calculations use the prefixes of the preference lists
of~$I^*$ from (\ref{women}) and (\ref{men}).
To establish Observation 2,  we show
that the preference point of a man is closer to the last woman on
his initial preference list than to the origin. Similarly, we show
that the preference point of a woman is closer to the last man on
her initial preference list than to the origin.

For man $A_{\sigma^m e_i}$, where $e_i \in Rep(\sigma)$ and $0 \leq
m \leq p_i-1$, the distance to woman $b_{\rho \sigma^m e_i}$ and the
origin ($\bar{0}$)are as follows.

\begin{eqnarray*}
d^2(\hat{A}_{\sigma^{m} e_i},\bar{b}_{\rho\sigma^{m} e_i}) & = &
(\sum_{j=0}^{i-1}2p_j + m+1- 0)^2 \\
& & + \ (\sum_{j=0}^{i-1}2p_j + m+1-\epsilon-\sum_{j=0}^{i-1}2p_j - m-1)^2\\
& = & (\sum_{j=0}^{i-1}2p_j + m+ 1)^2 + \epsilon^2
\end{eqnarray*}
\begin{eqnarray*}
d^2(\hat{A}_{\sigma^{m} e_i},\bar{0}) & = & (\sum_{j=0}^{i-1}2p_j +
m+1-
0)^2 \\
& & + \ (\sum_{j=0}^{i-1}2p_j + m+1-\epsilon-0)^2\\
& \geq & (\sum_{j=0}^{i-1}2p_j + m+1)^2 +  (1-\epsilon)^2 \\
& > & (\sum_{j=0}^{i-1}2p_j + m+ 1)^2 + \epsilon^2 =
d^2(\hat{A}_{\sigma^{m} e_i},\bar{b}_{\rho\sigma^{m} e_i})
\end{eqnarray*}

For man $C_{\sigma^{m-1} e_i}$, where $e_i \in Rep(\sigma)$ and $0
\leq m \leq p_i-1$, the distance to woman $a_{\sigma^m e_i}$ and the
origin are as follows.

\begin{eqnarray*}
d^2(\hat{C}_{\sigma^{(m-1)} e_i},\bar{a}_{\sigma^{m} e_i}) & = &
(\sum_{j=0}^{i-1}2p_j + m+0.6 - \ \sum_{j=0}^{i-1}2p_j - m-1)^2 + (0-0)^2 = 0.16.\\
d^2(\hat{C}_{\sigma^{(m-1)} e_i},\bar{0}) &= & (\sum_{j=0}^{i-1}2p_j
+ m+0.6 - 0)^2  + \ (0-0)^2 \geq 0.6^2 = 0.36\\
& > &  0.16 = d^2(\hat{C}_{\sigma^{(m-1)} e_i},\bar{a}_{\sigma^{m}
e_i})
\end{eqnarray*}

For man $B_{\sigma^m e_i}$, where $e_i \in Rep(\sigma)$, we consider
two cases: (i) $m \neq p_i-1$,  and (ii) $m = p_i-1$.

Case(i) $m \neq p_i-1$: From the preference position of $B_{\sigma^m
e_i}$, we compute distances  to $c_{\sigma^{m} e_j}$ and to the
origin.
\begin{eqnarray*}
d^2(\hat{B}_{\sigma^{m} e_i},\bar{c}_{\sigma^{m} e_{i}}) & = &
(\sum_{j=0}^{i-1}2p_j + m+1
-\sum_{j=0}^{i-1}2p_j - (m+1)-0.3)^2 +
(1000^n)^2\\
& = & 0.09 + 1000^{2n} \ \ \textrm { and }\\
d^2(\hat{B}_{\sigma^{m} e_i},\bar{0}) & = & (\sum_{j=0}^{i-1}2p_j +
m+1 -0)^2 + (1000^n)^2\\
& \geq & 1 + 1000^{2n} > 0.09 + 1000^{2n} = d^2(\hat{B}_{\sigma^{m}
e_i},\bar{c}_{\sigma^{m} e_{i}}).
\end{eqnarray*}

Case(ii) $m = p_i-1$: From the preference position of $B_{\sigma^m
e_i}$, we compute distances  to $c_{\sigma^{p_i-1} e_j} =
c_{\sigma^{0-1} e_j}$ and to the origin.
\begin{eqnarray*}
d^2(\hat{B}_{\sigma^{m} e_i},\bar{c}_{\sigma^{m} e_{i}}) & = &
(\sum_{j=0}^{i-1}2p_j + m+1 -\sum_{j=0}^{i-1}2p_j - 0.3)^2 +
(1000^n)^2\\
& = & (m + 0.7)^2 + 1000^{2n} = (p_i - 1 + 0.7)^2 + 1000^{2n}\\
& = &(p_i - 0.3)^2 + 1000^{2n}
\ \ \textrm { and }\\
d^2(\hat{B}_{\sigma^{m} e_i},\bar{0}) & = & (\sum_{j=0}^{i-1}2p_j +
m+1 -0)^2 + (1000^n)^2\\
& \geq & (m+1)^2 + 1000^{2n} = (p_i-1+1)^2 + 1000^{2n}\\
&=& (p_i)^2 + 1000^{2n} > (p_i-0.3)^2 + 1000^{2n} =
d^2(\hat{B}_{\sigma^{m} e_i},\bar{c}_{\sigma^{m} e_{i}}).
\end{eqnarray*}

The above set of computations and comparisons establish that the
preference point of a man is closer to the last   woman on his
initial preference list than to the origin. We can establish a
similar result for the women by repeating the above computations for
the preference position of every woman   and the position point of
the last
 man   on her initial preference list. Hence, we can conclude that
Observation~2 holds.

Next we establish Observation~3. We start by comparing the distance
between the preference point of any man   and the origin with the
distance between this preference point and the position point of an
$A_{*}$ or $B_{*}$ man. We note that the $x$-coordinate of the
position point of an $A_{*}$ or $B_{*}$ man is at most -0.3. We also
note that the $x$-coordinate of the preference point of  any man is
non-negative. In the equations to follow, $\hat{X}$ stands for the
preference point of a man and $\bar{Y}$ stands for the position
point of an $A_{*}$ or $B_{*}$ man. The $x$ and $y$ co-ordinates of
$\hat{X}$  will be denoted $\hat{X}_{x}$ and $\hat{X}_{y}$
respectively. The $x$ co-ordinate of $\bar{Y}$ will be denoted
$\bar{Y}_{x}$. As noted above, $\hat{X}_{x} \geq 0$ and $\bar{Y}_{x}
\leq -0.3$.

\begin{eqnarray*}
d^2(\hat{X} ,\bar{0}) & = & (\hat{X}_x -0)^2 + \ (\hat{X}_{y}-0)^2\\
& = & (\hat{X}_{x})^2 + \ (\hat{X}_{y})^2
\end{eqnarray*}
\begin{eqnarray*}
d^2(\hat{X} ,\bar{Y} ) & = & (\hat{X}_{x} - \bar{Y}_{x})^2 +
\
(\hat{X}_{y}-0)^2\\
& \geq &
(\hat{X}_{x} - (-0.3))^2 + \ (\hat{X}_{y}-0)^2\\
& = & (\hat{X}_{x} +0.3)^2 + \ (\hat{X}_{y})^2 \\
& > & (\hat{X}_{x})^2 + \ (\hat{X}_{y})^2 =
d^2(\hat{X} ,\bar{0})
\end{eqnarray*}

Next we compare the distance  between the preference point of a man
and the origin with the distance between this preference point and
the position point of a $C_*$ man. We note that the $y$-coordinate
of the position point of a $C_{*}$ man is at most -1. We also note
that the $y$-coordinate of the preference point of any
  man is non-negative. In the equations to
follow, $\hat{X} $ stands for the preference point of a   man. The
$x$ and $y$ co-ordinates of $\hat{X} $
   will be denoted $\hat{X}_{x}$ and $\hat{X}_{y}$ respectively.

\begin{eqnarray*}
 d^2(\hat{X} ,\bar{C}_{*}) & \geq &
(\hat{X}_{x} - 0)^2 + \ (\hat{X}_{y}-(-1))^2\\
& = & (\hat{X}_{x})^2 + \ (\hat{X}_{y} + 1)^2
\end{eqnarray*}
\begin{eqnarray*}
d^2(\hat{X} ,\bar{0}) & = & (\hat{X}_{x}-0)^2 + \ (\hat{X}_{y}-0)^2\\
& = & (\hat{X}_{x})^2 + \ (\hat{X}_{y})^2\\
& < & (\hat{X}_{x})^2 + \ (\hat{X}_{y} + 1)^2 \leq
d^2(\hat{X} ,\bar{C}_{*})
\end{eqnarray*}

The above   calculations establish that the preference point of any
man is closer to the origin than the position point of  any man. We
can establish a similar result for the women.
  Hence, we can conclude that
Observation~3 holds.

Combining Observations 1, 2 and 3, we note that
the prefixes of the preference lists of the stable roommate instance~$I$
are the same as those of the stable matching instance~$I^*$ from
(\ref{women}) and (\ref{men}).
We conclude that the set of stable assignments for the
roommate instance~$I$ is identical to the set of stable assignments
for the stable marriage instance~$I^*$. From~\cite{CGM}, we have
that the latter is in one-to-one correspondence with the independent
sets of~$G$, thereby, establishing the required result.

\end{proof}

\section*{Acknowledgements}

The authors wish to thank David Manlove for suggesting the problem
and for his useful discussions.

\newpage

\section*{Appendix 1 --- An Example}

We give the following example to illustrate the definitions from Section~\ref{sect:background}
and the two-phase algorithm for finding a stable roommate assignment.
Consider the following preference lists.

\begin{center}
  \begin{tabular}{r|rrrrrrrrrrr}
      1  & 12 &  7 &  4 &  6 &  9 &  5 & 10 &  2 &  3 &  8 & 11 \\
      2  &  5 &  6 &  1 &  9 & 12 &  4 &  3 & 10 &  8 & 11 &  7 \\
      3  & 11 &  9 &  4 &  1 &  8 & 12 &  2 &  6 &  5 &  7 & 10 \\
      4  &  2 &  9 & 12 & 10 &  7 &  6 &  1 &  8 &  5 & 11 &  3 \\
      5  & 12 &  6 &  3 &  9 &  4 &  10 & 11 &  8 &  7 &  2 &  1 \\
      6  &  8 &  4 &  1 & 10 &  2 & 11 &  3 &  5 & 12 &  7 &  9 \\
      7  &  3 &  5 &  2 &  6 & 10 &  4 & 11 &  1 &  8 &  9 & 12 \\
      8  &  1 &  7 & 10 & 12 &  3 &  2 &  5 &  4 &  9 &  6 & 11 \\
      9  &  2 & 12 &  1 &  6 &  5 & 11 &  8 & 10 &  3 &  7 &  4 \\
      10 &  1 &  4 &  3 & 11 &  2 &  7 &  6 &  8 &  9 &  5 & 12 \\
      11 &  6 &  4 &  8 & 10 & 12 &  5 &  3 &  1 &  2 &  7 &  9 \\
      12 & 11 &  6 &  3 &  2 &  7 &  4 &  9 & 10 &  1 &  5 & 8
  \end{tabular}
\end{center}

Phase I proceeds as outlined in Section~\ref{sect:poset}, with proposals
occurring and
``semi-engagements" forming.
Here are the short lists at the end of Phase I.
\begin{center}
  \begin{tabular}{r|rrrrrrrrrrr}
      1  & 7  & 4  & 6  & 9  & 10 &  &   &    \\
      2  & 6  & 9  &    &    &    &  &   &    \\
      3  & 9  & 8  & 12 & 5  &    &  &   &    \\
      4  & 12 & 10 & 7  & 6  & 1 & 8 & 5 & 11 & & & \\
      5  & 3  & 9  & 4  & 8  & 7 \\
      6  & 8  & 4  & 1  & 10 & 2 \\
      7  & 5  & 10 & 4  & 1 \\
      8  & 10 & 3  & 5  & 4  & 9 & 6 \\
      9  & 2  & 1  & 5  & 8  & 3 \\
      10 & 1  & 4  & 11 & 7  & 6 & 8 \\
      11 & 4  & 10 & 12 \\
      12 & 11 & 3  & 4
  \end{tabular}
  \begin{tabular}{cl}
      $R_1$ &
      \begin{tabular}{r|rr}
      3  &  9 & 8 \\
      6  &  8 & 4 \\
      11 &  4 & 10 \\
      8  & 10 & 3 \\
      5  & 3  & 9 \\
      \end{tabular}
  \end{tabular}
\end{center}
At the end of Phase I,
rotation~$R_1$ is exposed in the table (and no other rotations are exposed).
Note that $R_1$ is a singleton rotation.
After eliminating this rotation, we have the following table:
\begin{center}
  \begin{tabular}{r|rrrrrrrr}
      1  & 7  & 6  & 9  & 10 &  &   &    \\
      2  & 6  & 9  &    &    &    &  &   &    \\
      3  & 8  &  &   &    \\
      4  & 12 & 10 & 7  & 6  \\
      5  & 9  & 7 \\
      6  & 4  & 1  & 2 \\
      7  & 5  & 4  & 1 \\
      8  & 3 \\
      9  & 2  & 1  & 5 \\
      10 & 1  & 4  & 11 \\
      11 & 10 & 12 \\
      12 & 11 & 4 \\
  \end{tabular}
  \begin{tabular}{cl}
      $R_2$ &
      \begin{tabular}{r|rr}
         4  &  12 & 10 \\
         11 & 10 & 12 \\
      \end{tabular}  \\
      \\
      \\
      $R_3$ &
      \begin{tabular}{r|rr}
        1 & 7 & 6 \\
        2 & 6 & 9 \\
        5 & 9 & 7 \\
      \end{tabular}  \\
      \\
      \\
      $R_4$ &
      \begin{tabular}{r|rr}
         6 & 4 & 1 \\
        10 & 1 & 4 \\
      \end{tabular}  \\
  \end{tabular}
\end{center}

 There are   three rotations, $R_2, R_3$, and $R_4$,
exposed in this new table.  Using Definition~\ref{def:explicitly-precedes},
we find that,  in the rotation poset
for this instance, $R_1$ precedes each of $R_2, R_3$, and $R_4$.
For example, to see that~$R_1$ explicitly precedes~$R_2$, take $e_i=4$ and ad $p=2$.
Also, rotations $R_2$, $R_3$ and $R_4$ do not precede each other.
Each can be performed  from
the above table.

Also, each of $R_2, R_3$, and $R_4$ has a dual rotation.
For example, performing rotation~$R_4$ results in
the following table,
in which both $R_2^d$ and
$R_3^d$ are now exposed
(as are $R_2$ and $R_3$).
\begin{center}
  \begin{tabular}{r|rrrrrrrr}
      1  & 7  & 6  \\
      2  & 6  & 9  \\
      3  & 8   \\
      4  & 12 & 10  \\
      5  & 9  & 7 \\
      6  & 1  & 2 \\
      7  & 5  & 1 & & & & & \\
      8  & 3 \\
      9  & 2  & 5 \\
      10 & 4  & 11 \\
      11 & 10 & 12 \\
      12 & 11 & 4 \\
  \end{tabular}
  \begin{tabular}{cl}
      $R_2^d$ &
      \begin{tabular}{r|rr}
         10 &  4 & 11 \\
         12 & 11 & 4 \\
      \end{tabular}  \\
      \\
      \\
      $R_3^d$ &
      \begin{tabular}{r|rr}
        6 & 1 & 2 \\
        9 & 2 & 5 \\
        7 & 5 & 1 \\
      \end{tabular}
      \end{tabular}  \\
\end{center}

So in the rotation poset we have the relations
$\Pi^*(R_4, R_2^d)$ and $\Pi^*(R_4, R_3^d)$.
Recalling Theorem~\ref{lem:rotation-structure}, this also means that
$\Pi^*(R_2, R_4^d)$ and $\Pi^*(R_3, R_4^d)$.

By a careful analysis, we can determine that there are five stable roommate
assignments,
which we list next
along with
the
set of rotations that leads to each assignment.
\begin{center}
  \begin{tabular}{c|c}
  \multirow{2}{*}{Rotations}  &  Stable \\
                              &  Assignment  \\
  \hline\hline
  \multirow{2}{*}{$R_1, R_2, R_3, R_4$}     &  $(1,6), (2,9), (3,8)$ \\
                                            &  $(4,10), (5,7), (11,12)$ \\
  \hline
  \multirow{2}{*}{$R_1, R_2, R_3, R_4^d$}   &  $(1,10), (2,9), (3,8)$ \\
                                            &  $(4,6), (5,7), (11,12)$ \\
  \hline
  \multirow{2}{*}{$R_1, R_2, R_4, R_3^d$}   &  $(1,7), (2,6), (3,8)$ \\
                                            &  $(4,10), (5,9), (11,12)$ \\
  \hline
  \multirow{2}{*}{$R_1, R_3, R_4, R_2^d$}   &  $(1,6), (2,9), (3,8)$ \\
                                            &  $(4,12), (5,7), (10,11)$ \\
  \hline
  \multirow{2}{*}{$R_1, R_4, R_2^d, R_3^d$} &  $(1,7), (2,6), (3,8)$ \\
                                            &  $(4,12), (5,9), (10,11)$ \\
  \end{tabular}
\end{center}

The Hasse diagram of rotation poset of this roommate instance is
as follows.

\begin{center}
\begin{tikzpicture}[inner sep=0.5mm]
   \node (R1) at (0,0) [circle,draw,label=below:$R_1$] {};

   \node (R2) at (-1,1) [circle, draw,label=left:$R_2$] {};
   \node (R3) at (1,1) [circle,draw,label=right:$R_3$] {};
   \node (R4) at (3,1) [circle,draw,label=right:$R_4$] {};

   \node (R4d) at (0,2) [circle,draw,label=above:$R_4^d$] {};
   \node (R2d) at (2.5,2) [circle,draw,label=above:$R_2^d$] {};
   \node (R3d) at (3.5,2) [circle,draw,label=above:$R_3^d$] {};

   \draw[-] (R1) to (R2);
   \draw[-] (R1) to (R3);
   \draw[-] (R1) to (R4);

   \draw[-] (R2) to (R4d);
   \draw[-] (R3) to (R4d);

   \draw[-] (R4) to (R2d);
   \draw[-] (R4) to (R3d);
\end{tikzpicture}
\end{center}

Finally, the graph $G(I)$ (recall the definition from Section~\ref{sect:ind-sets}) for
this instance is
as follows.

\begin{center}
\begin{tikzpicture} [inner sep=0.5mm]
   \node (R4d) at (0,0) [circle,draw,fill,label=below:$R_4^d$] {};
   \node (R2d) at (-1.5,0) [circle,draw,fill,label=below:$R_2^d$] {};
   \node (R3d) at (1.5,0) [circle, draw,fill,label=below:$R_3^d$] {};

   \node (R4) at (0,1.5) [circle,draw,fill,label=above:$R_4$] {};
   \node (R2) at (-1.5,1.5) [circle,draw,fill,label=above:$R_2$] {};
   \node (R3) at (1.5,1.5) [circle,draw,fill,label=above:$R_3$] {};

   \draw[-] (R2) to (R2d);
   \draw[-] (R3) to (R3d);
   \draw[-] (R4) to (R4d);
   \draw[-] (R4d) to (R2d);
   \draw[-] (R4d) to (R3d);

   \node[text width=1cm] at (-3,.5) {$G(I):$};
\end{tikzpicture}
\end{center}

Recall that the maximal independent sets in $G(I)$ are in $1$-$1$
correspondence with the stable roommate assignments.  These independent sets
can be read off directly from
the table above
using
the left-hand column, and deleting $R_1$ from the set of rotations, e.g.\
the third assignment $(1,7), (2,6), (3,8), (4,10), (5,9), (11,12)$ corresponds
to the maximal independent set $\{  R_2, R_4, R_3^d \}$ in $G(I)$.

\newpage
\section*{Appendix 2 --- Preference lists and rotations for the example given in Figure~\ref{fig:bip-graph1}.}

Here are the prefixes (from (\ref{eq:noisy-Q-pref-lists}) and (\ref{eq:noisy-P-pref-lists}))
for the example given in Figure~\ref{fig:bip-graph1}.
First, from the first $\rho$-cycle, $(1,2)$ and the first $\sigma$-cycle, $(3,4)$, we have the following lists.

$$\begin{array}{c|ll}
Q_{1} & P_{1} P_{2} \ \cdots \\
Q_{2} & P_{2} P_{1} \ \cdots \\
P_{1} & Q_{2} {P}_{3}\{P_{4}  \} Q_{1} \ \cdots \\
P_{2} & Q_{1} {P}_{4} Q_{2}\ \cdots \\
 \end{array}$$

 Then,  from the $\rho$-cycle $(3,4,5,6,7,8)$ and the corresponding $\sigma$-cycle
 $(1,2,9,10,13,14)$ we have the following.

 $$\begin{array}{c|ll}
Q_{3} & P_{3} P_{4} \ \cdots \\
Q_{4} & P_{4} P_{5} \ \cdots \\
Q_{5} & P_{5} P_{6} \ \cdots \\
Q_{6} & P_{6} P_{7} \ \cdots \\
Q_{7} & P_{7} P_{8} \ \cdots \\
Q_{8} & P_{8}  \{ P_7 P_6 P_5 P_4\} P_{3} \ \cdots \\
P_{3} & Q_{8} {P}_{1} \{  P_{14} Q_{7} P_{13} Q_{6} P_{10} Q_{5} P_{9} Q_{4} P_2  \} Q_{3}\ \cdots \\
P_{4} & Q_{3} {P}_{2} Q_{4}\ \cdots \\
P_{5} & Q_{4} {P}_{9} Q_{5}\ \cdots \\
P_{6} & Q_{5} {P}_{10} Q_{6}\ \cdots \\
P_{7} & Q_{6} {P}_{13} Q_{7}\ \cdots \\
P_{8} & Q_{7} {P}_{14} Q_{8}\ \cdots \\
\end{array}$$

\From the $\rho$-cycle $(9,10,11,12)$ and the corresponding $\sigma$-cycle $(5,6,15,16)$ we
have the following.

$$\begin{array}{c|ll}
Q_{9} & P_{9} P_{10} \ \cdots \\
Q_{10} & P_{10} P_{11} \ \cdots \\
Q_{11} & P_{11} P_{12} \ \cdots \\
Q_{12} & P_{12} \{P_{11} P_{10} \} P_{9} \ \cdots \\
P_{9} & Q_{12} {P}_{5} \{  P_{16} Q_{11} P_{15} Q_{10} P_{6}    \} Q_{9}\ \cdots \\
P_{10} & Q_{9} {P}_{6} Q_{10}\ \cdots \\
P_{11} & Q_{10} {P}_{15} Q_{11}\ \cdots \\
P_{12} & Q_{11} {P}_{16} Q_{12}\ \cdots \\
\end{array}$$

Similarly, from the $\rho$-cycle $(13,14,15,16)$ and the corresponding $\sigma$-cycle $(7,8,11,12)$
we have the following.

$$\begin{array}{c|ll}
Q_{13} & P_{13} P_{14} \ \cdots \\
Q_{14} & P_{14} P_{15} \ \cdots \\
Q_{15} & P_{15} P_{16} \ \cdots \\
Q_{16} & P_{16} \{P_{15} P_{14} \} P_{13} \ \cdots \\
P_{13} & Q_{16} {P}_{7} \{  P_{12} Q_{15} P_{11} Q_{14} P_{8}    \} Q_{13}\ \cdots \\
P_{14} & Q_{13} {P}_{8} Q_{14}\ \cdots \\
P_{15} & Q_{14} {P}_{11} Q_{15}\ \cdots \\
P_{16} & Q_{15} {P}_{12} Q_{16}\ \cdots \\
\end{array}$$

The short lists (from Equations (\ref{eq:unnoisy-Q-pref-lists}) and (\ref{eq:unnoisy-P-pref-lists}))
are therefore as follows.

$$\begin{array}{c|ll}
Q_{1} & P_{1} P_{2} \ \cdots \\
Q_{2} & P_{2} P_{1} \ \cdots \\
P_{1} & Q_{2} {P}_{3}  Q_{1} \ \cdots \\
P_{2} & Q_{1} {P}_{4} Q_{2}\ \cdots \\
 \end{array}$$

$$\begin{array}{c|ll}
Q_{3} & P_{3} P_{4} \ \cdots \\
Q_{4} & P_{4} P_{5} \ \cdots \\
Q_{5} & P_{5} P_{6} \ \cdots \\
Q_{6} & P_{6} P_{7} \ \cdots \\
Q_{7} & P_{7} P_{8} \ \cdots \\
Q_{8} & P_{8}  P_{3} \ \cdots \\
P_{3} & Q_{8} {P}_{1}  Q_{3}\ \cdots \\
P_{4} & Q_{3} {P}_{2}  Q_{4}\ \cdots \\
P_{5} & Q_{4} {P}_{9}  Q_{5}\ \cdots \\
P_{6} & Q_{5} {P}_{10} Q_{6}\ \cdots \\
P_{7} & Q_{6} {P}_{13} Q_{7}\ \cdots \\
P_{8} & Q_{7} {P}_{14} Q_{8}\ \cdots \\
\end{array}$$

$$\begin{array}{c|ll}
Q_{9} & P_{9} P_{10} \ \cdots \\
Q_{10} & P_{10} P_{11} \ \cdots \\
Q_{11} & P_{11} P_{12} \ \cdots \\
Q_{12} & P_{12}  P_{9} \ \cdots \\
P_{9} & Q_{12} {P}_{5} Q_{9}\ \cdots \\
P_{10} & Q_{9} {P}_{6} Q_{10}\ \cdots \\
P_{11} & Q_{10} {P}_{15} Q_{11}\ \cdots \\
P_{12} & Q_{11} {P}_{16} Q_{12}\ \cdots \\
\end{array}$$

$$\begin{array}{c|ll}
Q_{13} & P_{13} P_{14} \ \cdots \\
Q_{14} & P_{14} P_{15} \ \cdots \\
Q_{15} & P_{15} P_{16} \ \cdots \\
Q_{16} & P_{16}   P_{13} \ \cdots \\
P_{13} & Q_{16} {P}_{7}   Q_{13}\ \cdots \\
P_{14} & Q_{13} {P}_{8} Q_{14}\ \cdots \\
P_{15} & Q_{14} {P}_{11} Q_{15}\ \cdots \\
P_{16} & Q_{15} {P}_{12} Q_{16}\ \cdots \\
\end{array}$$

\end{document}